\theoremstyle{plain}
\newtheorem{thm}{\protect\theoremname}
\newtheorem{lem}{Lemma}
\newtheorem*{asm*}{Assumption}
\providecommand{\theoremname}{Theorem}
\newtheorem{cor}{Corollary}
\theoremstyle{remark}
\newtheorem{remark}{Remark}
\newcommand{\bG}{\mbox{\boldmath $G$}}
\newcommand{\bbeta}{\mbox{\boldmath $\beta$}}
\newcommand{\bOmega}{\mbox{\boldmath $\Omega$}}
\newcommand{\bW}{\mbox{\boldmath $W$}}
\newcommand{\var}{\operatorname{var}}
\newcommand{\opn}[1]{\left\|{#1}\right\|}
\newcommand{\Avar}{\operatorname{Avar}}
\newcommand{\sub}{\operatorname{sub}}
\newcommand{\MS}[1]{\todo[color=red!20!white,fancyline,size=\tiny]{#1}{}}
\title[SGMM]{SGMM: Stochastic Approximation to Generalized Method of Moments}
\author[Chen]{Xiaohong Chen}
\address{Department of Economics, Yale University, New Haven, CT 06520, USA  and Cowles Foundation} 
\email{xiaohong.chen@yale.edu}  
\author[Lee]{Sokbae  Lee}
\address{Department of Economics, Columbia University, New York, NY 10027, USA}
\email{sl3841@columbia.edu} 
\author[Liao]{Yuan  Liao}
\address{Department of Economics, Rutgers University, New Brunswick, NJ 08901, USA}
\email{yuan.liao@rutgers.edu}
\author[Seo]{Myung Hwan  Seo}
\address{Department of Economics, Seoul National University, Seoul, 08826, Korea}
\email{myunghseo@snu.ac.kr}
\author[Shin]{Youngki  Shin}
\address{Department of Economics, McMaster University, Hamilton, ON L8S 4L8, Canada}
\email{shiny11@mcmaster.ca}
\author[Song]{Myunghyun Song}
\address{Department of Economics, Columbia University, New York, NY 10027, USA}
\email{ms6347@columbia.edu}
\date{October 2023}
\thanks{
The present paper is based on the JFEC Invited Lecture given by Xiaohong Chen at the SOFIE 15th Annual Conference at Sungkyunkwan University, June 17, 2023 in Seoul, South Korea. We thank discussants at the conference (Nour Meddahi and Joon Park) for their helpful comments. This article is completed during Chen's and Lee’s visit to the Korean Bureau of Economic Research (KBER) in Seoul National University, whose hospitality and the support by the BK21 FOUR (Fostering Outstanding Universities for Research) funded by the Ministry of Education (MOE, Korea) and National Research Foundation of Korea (NRF) is greatly acknowledged. Financial support from KBER and Innovation in the Institute of Economic Research, Seoul National University is greatly acknowledged. This research was enabled in part by support provided by Compute Ontario (\url{https://com puteontario.ca}) and the Digital Research Alliance of Canada (\url{https://alliancecan.ca}). Chen greatly acknowledges partial support from the Cowles Foundation.}
\begin{document}








\maketitle

\begin{abstract}
We introduce a new class of algorithms, Stochastic Generalized Method of Moments (SGMM), for estimation and inference on (overidentified) moment restriction models. Our SGMM is a novel stochastic approximation alternative to the popular \cite{GMM} (offline) GMM, and offers fast and scalable implementation with the ability to handle streaming datasets in real time. We establish the almost sure convergence, and the (functional) central limit theorem for the inefficient online 2SLS and the efficient SGMM. Moreover, we propose online versions of the Durbin-Wu-Hausman and Sargan-Hansen tests that can be seamlessly integrated within the SGMM framework. Extensive Monte Carlo simulations show that as the sample size increases, the SGMM matches the standard (offline) GMM in terms of estimation accuracy and gains over computational efficiency, indicating its practical value for both large-scale and online datasets. We demonstrate the efficacy of our approach by a proof of concept using two well known empirical examples with large sample sizes.
\end{abstract}


\onehalfspacing

\section{Introduction}

Machine learning techniques have revolutionized the analysis of vast and unconventional datasets. Among them, stochastic approximation (SA) or more commonly called stochastic gradient descent (SGD) pioneered by \citet{Robbins-Monro-1951} has proven highly valuable due to its computational simplicity and scalable online implementation.
In econometrics, Halbert White was a great trailblazer of SGD. For example, \cite{white1989jasa} applied earlier general theory on the almost sure consistency and asymptotic normality of recursive nonlinear least squares (NLS) to parametric single-hidden layer artificial neural network (ANN) regression models with independent and identically distributed (iid) data; \cite{kuanwhite1994er} developed asymptotic theory for general nonlinear models of weakly dependent processes, including applications to nonlinear regression via neural networks;\footnote{\cite{Pastorello:Patilea:Renault} applied the results of \cite{kuanwhite1994er} to obtain the consistency and asymptotic normality for their recursive latent backfitting procedure in a just-identified moment problem.} 
\cite{Chen:White:1998:JET} applied stochastic approximation to bounded rationality learning;
\cite{chen2002asymptotic} established asymptotic theory of SGD for Hilbert space-valued mixingale, dependent error processes.

While traditionally used for computational purposes, such as optimizing objective functions
\citep[see, e.g.][for its review]{bottou2018optimization}, 
SGD has also received attention for its statistical properties. As an early path-breaking work, \citet{polyak1992acceleration} obtained conditions under which an average of the SGD sequence is asymptotically normal with mean zero and an efficient variance matrix in parametric regressions. A more recent literature on SGD covers diverse topics: 
regularized methods for high-dimensional M-estimators \citep{agarwal2010fast};
implicit SGD \citep{Toulis2017,Won:lab:2022}; 
moment-adjusted SGD \citep{liang2017statistical};
non-asymptotic results for the averaged SGD \citep{pmlr-v99-anastasiou19a,pmlr-v125-mou20a};
among many others.
A branch of the recent literature is concerned with online statistical inference:
bootstrap \citep{fang2018online};
batch-means \citep{chen2020statistical,zhu2021online};
random scaling \citep{ChenLai2021,lee2021fast,li2021statistical} among other possible modes of inference \citep[e.g.,][]{Chee-Kim-Toulis:23}.
The studies by date, however,  have mainly focused on M-estimation. That is, the SGD has been mainly used  for estimating a parameter of interest $\beta^*$ that is identified as the unique minimizer of a population loss function $\min_{\beta} \mathbb{E}\left[ \ell_i(\beta) \right]$, where
$\ell_i(\beta)$ is a known real-valued function of the $i$-th observation and a parameter $\beta$. We therefore refer to the usual SGD-type estimators as ``M-type SGD'', which takes the following basic form:
\[
\beta_{i}=\beta_{i-1}-\gamma_{i}\frac{\partial}{\partial \beta} \ell_i \left(\beta_{i-1}\right),~~\text{for some~}\gamma_{i}\searrow 0~.
\]

In applications in economics and finance, we often encounter a different type of estimation problems, the so-called ``Z-estimation,'' where the parameter of interest is identified as a unique solution to a set of moment conditions, i.e., 
$\mathbb{E}\left[ g_i(\beta_*) \right]=0$, where
$g_i(\beta)$ is a known function of the $i$-th observation and a parameter $\beta$. Here,  
$\beta_*$ denotes the unique solution.
These moment conditions,  under just (or exact) identification  $\dim(\mathbb{E}\left[ g_i(\beta) \right]) = \dim (\beta)$, would yield the ``estimating equations''  for the parameter of interest $\beta_*$. Most importantly, the popular (offline) generalized methods of moments (GMM) of \cite{GMM} allows for overidentified moment restrictions in the sense that $\dim(g_i) > \dim (\beta)$, and that efficient estimation of $\beta_*$ and model-specification test can be carried out using the same optimally weighted GMM loss function $\min_{\beta} \left(\mathbb{E}[ g_i(\beta)]'\left[\var(g_i(\beta^*))\right]^{-1}\mathbb{E}[ g_i(\beta)]\right)$.
Unlike M-type SGD, it is unclear how to obtain a SGD alternative to the optimally weighted GMM and to establish its statistic properties. This is especially the case for the overidentified moment restriction models.

In this paper, we develop new stochastic approximation methods for GMM, allowing for possibly overidentified moment restriction models. As a premier example, we focus on linear instrumental variable  (IV) regression, where the moment restrictions are linear in the parameters of interest. Despite of being restricted to linearity, this type of models are widely applicable in economics and finance applications. We argue that aside from the more traditional IV estimators (e.g., two-stage least squares), the SA-based estimation is a natural option for IV regression because of the following reasons. First, it  is  fully capable of handling problems of very large  datasets. Because by nature of SA, the estimation is updated one-observation-at-a-time, so it is suitable for online learning.  Second, it is convenient to work  with the moment conditions of econometric models, the essence of possibly overidentified Z-estimation. Hence, we view our approach as a highly scalable estimation and inference method for the moment restriction models. 

We first propose a stochastic approximation to the two-stage least squares (2SLS) and analyze its stochastic properties.
We provide conditions under which our SA-based estimator is first-order asymptotically equivalent to the standard (offline) 2SLS estimator.
The inference problems studied in this paper based on the SA-based 2SLS estimator include: obtaining confidence interval for $\beta_*$  as well as testing for the validity of the specified instruments. For the former problem, we employ the recently developed random scaling inference
 of  \cite{lee2021fast}, which is fast, suitable for online learning, and easily adaptable for subvector inference. For the latter problem, we develop an ``online'' version of the  Durbin-Wu-Hausman test by comparing the probability limits of the OLS and 2SLS estimators, both obtained using the SA-based methods. In both problems, because the pivotal statistics are scaled by a random matrix similar to the ``fixed-b'' smoothing, the asymptotic distributions are mixed normal, whose critical values have been  tabulated in the literature, and are readily available for statistical inference. See, e.g., \citet{kiefer2000simple,velasco2001edgeworth,sun2008optimal,sun2013ej,Chen:Liao:Sun:2014,lazarus2018har,gupta2021robust} for related papers in the time series literature.

 As in the regular GMM-estimation, one of the central problems is the efficient estimation of $\beta_*$ using optimally weighted moment conditions. We show that the optimal weighting is also naturally incorporated by the overidentified Z-type SA algorithm, where we sequentially update the optimal weighting matrix along the path of the SA iteration. Despite of sequentially updating an inverse of a covariance matrix,  we show that implementation is still fast because it is based on the Sherman–Morrison–Woodbury (SMW) formula; in other words, our implementation does not involve an actual high-dimensional matrix inversion. In theory, we show that the optimally weighted SA-based estimator, termed stochastic GMM (SGMM), is first-order asymptotically equivalent to the well known (offline) two-step efficient GMM estimator of the IV regression model. 
 As by-products, we provide online plug-in optimal inference on $\beta_*$ as well as an online version of the Sargan-Hansen specification test using the efficient SGMM estimator.

The literature on stochastic approximation to 2SLS or GMM is almost non-existent. Some exceptions are 
\citet{venkatraman2016online} and \citet{della2023online}.
\citet{venkatraman2016online} proposed to use fitted values from the first stage to build an online algorithm;
\citet{della2023online} considered the just-identified IV estimator to study online regression as well as the bandit problem. 
The aforementioned papers carried out some sort of regret analyses but none of them focused on statistical properties of their proposed methods.  

The remainder of the paper is organized as follows.
Section~\ref{sec:baseline-weight} outlines our basic algorithm and its theoretical properties.
Section~\ref{sec:eff:est} provides an efficient online algorithm that is first-order asymptotically equivalent to efficient GMM estimators.
In Section~\ref{sec:inference}, we show that 
online versions of the Durbin-Wu-Hausman and Sargan-Hansen tests can be seamlessly integrated within the SGMM framework.
Section~\ref{sec:Monte-Carlo} reports the results of extensive Monte Carlo experiments
and Section~\ref{sec:examples} provides empirical examples based on two well known studies: \citet{angrist1991does} and \citet{angrist1998children}. 
Section~\ref{sec:sgmm:extension} discusses extensions, including an extension to Nonlinear SGMM. 
Section~\ref{sec:appendix} provides all the proofs of the theoretical results in the main text.

\textbf{Notation}. We denote the $\ell^2$-vector norm of $x$ in $\mathbb R^n$ by $\|x\|_n = (\sum_{i=1}^n |x_i|^2)^{1/2}$, and the $\ell^2$-operator norm of an $n$ by $m$ matrix by $\|{A}\|_{\operatorname{op}} := \sup_{\|x\|_m\le 1}\|A x\|_n$. We will occasionally write the $\ell^2$-vector and operator norm as $\|x\|$ or $\|A\|$ suppressing the subscripts when there is no possibility of confusion. The $\ell^2$-vector norm is equal to the operator norm when vectors are seen as $n\times 1$ matrices. 
%
%
Let $(S, d)$ be a complete metric space. We denote the weak convergence of $S$-valued random variables $X_n$ by $X_n \rightsquigarrow X$, where $X$ denotes the weak limit.
In addition, $\overset{d}{\to}$ refers to convergence in distribution.
For a real sequence $a_n$ and positive numbers $b_n$, we write $a_n = O(b_n)$ or $a_n \lesssim b_n$ if there exists a uniform constant $C>0$ such that $|a_n|\le C b_n$ holds for all $n$. For a sequence of real r.v.'s, we also denote $X_n = o_{a.s.}(b_n)$ or $X_n = O_{a.s.}(b_n)$ if $\lim_{n\to \infty}|X_n|/b_n = 0$ or $\limsup_{n\to\infty} |X_n|/b_n < \infty$ with probability 1, respectively. We use $X_n = O_P(1)$ and $X_n = o_P(1)$ to indicate that $X_n$ is a tight sequence of random variables and converges to $0$ in probability, respectively.

\section{Stochastic Approximation for Instrumental Variable Regression}\label{sec:baseline-weight}

\subsection{Model}

Consider a linear instrumental variables regression model
\begin{align}\label{model}
y_i = x_i'\beta_* + u_i,~~~ \mathbb{E}\left[u_i z_i \right]=0,
\end{align}
where $y_i \in \mathbb{R}$ is the dependent variable,
$x_i \in \mathbb{R}^{d_\beta}$ a vector of covariates and some of which are endogenous in the sense that  $\mathbb{E}\left[x_i u_i\right]\neq 0$, $u_i \in \mathbb{R}$ is the regression error, $z_i \in \mathbb{R}^{d_g}$
is a vector of instrumental variables, and $\beta_* \in \mathbb{R}^{d_\beta}$ is a vector of unknown true parameters of interest. Let $g_i(\beta) \equiv z_i \left( x_i'\beta - y_i \right)$ and assume $d_g \geq d_{\beta}$. We focus on estimation of $\beta_*$ using the following linear moment restriction models:
\begin{align*}
\mathbb{E}\left[ g_i(\beta_*) \right]=0.  
\end{align*}
Throughout the paper, we let $G_i \equiv z_i x_i'$ and $H_i \equiv -z_i y_i$; hence,
$g_i(\beta) = G_i \beta + H_i$. We also let $G \equiv \mathbb E [G_i]$ and $H \equiv \mathbb E [H_i]$. That is,
a letter without subscript denotes its expectation. The linear instrumental variable regression model \eqref{model} becomes 
$G \beta_* + H = 0$.

\subsection{S2SLS Algorithm}

In this section, we propose a new stochastic approximation algorithm to estimate $\beta_*$.  Let $\mathbb S \equiv \{\mathcal{D}_i  = (x_i, z_i, y_i)\}_{i=1}^{n}$ 
be an i.i.d. sample of size $n$ drawn from a population distribution satisfying model \eqref{model}. Let $\mathbb S_0 \equiv \{\mathcal{D}_{0j}= (x_{0j}, z_{0j}, y_{0j})\}_{j=1}^{n_0}$ be an initialization random sample of size $n_0$ drawn from model \eqref{model}, with $n_0 \ll n$. Denote $\mathcal F_i = \sigma(\mathbb S_0 \cup \{\mathcal{D}_j\}_{j=1}^{i})$ for $i \ge 0$. Compute the initial estimator $\beta_0 \in \sigma(\mathbb S_0)=\mathcal F_0$ using 2SLS, GMM or any other estimation methods.\footnote{In fact, our asymptotic theory allows for any arbitrary choice of the initial estimator, including $\beta_0 = 0$. The finite sample performance depends on the quality of the initial estimator, however.} 
Let $\Phi_0 \equiv \frac 1 {n_0} \sum_{j=1}^ {n_0} z_{0j} x_{0j}' \in \mathcal F_0$ and $W_0 \equiv \left( \frac{1}{n_0} \sum_{j=1}^{n_0} z_{0j} z_{0j}' +  \eta_0 I \right)^{-1}$ for a fixed constant $\eta_0 \geq 0$.\footnote{In many applications, the choice of $\eta_0 = 0$ will suffice, provided that $n_0$ is large enough and $z_i$ is linearly independent.} Starting from $\left(\beta_0, \Phi_0, W_0\right)$, we update 
 the stochastic process $(\beta_i, \Phi_i, W_i)_{i=1}^\infty$ sequentially as 
\begin{subequations}\label{algo:siv:2sls}
\begin{align}
    \beta_i &= \beta_{i-1} - \gamma_i (\Phi_{i-1}' W_{i-1} \Phi_{i-1})^{\dagger} \Phi_{i-1}' W_{i-1} g_i(\beta_{i-1}), \label{algo:beta:2sls} \\ 
    \Phi_i &= \frac{n_0 + i - 1}{ n_0 + i} \Phi_{i-1} + \frac{1} { n_0 + i} G_i, \label{algo:Phi:2sls} \\
            m_i&=n_0+i-1+z_i'W_{i-1}z_i, \label{algo:m:W:2sls} \\  
        W_i&= \frac{n_0+i}{n_0+i-1} W_{i-1} \left[I- m_i^{-1} z_i z_i'W_{i-1}\right], \label{algo:beta:W:2sls} \\    
    \bar \beta_i &= \frac{i-1 }{ i} \bar \beta_{i-1} + \frac{1} { i}\beta_i,
    \label{algo:betabar:2sls}
\end{align}
\end{subequations}
where $g_i(\beta_{i-1}) = G_i\beta_{i-1} + H_i$,
$\bar \beta_0 = \beta_0$ and 
$\gamma_i \equiv \gamma_0 i^ {-a}$ 
is a learning rate with some predetermined constants  $\gamma_0 > 0$ and $a \in (1/2, 1)$.  
Here, $A^{\dagger}$ denotes the generalized inverse of a matrix $A$.
We propose to use $\bar\beta_n$, which is called the \citet{Polyak1990}-\citet{ruppert1988efficient} average, as an estimator of $\beta_*$. 

\begin{remark}[Sherman-Morrison-Woodbury matrix inversion]\label{rem:woodbury:W}
Note that we update the $d_g \times d_g$- weighting matrix $W_{i}$ sequentially in the S2SLS algorithm \eqref{algo:siv:2sls}.
To convey the idea behind this updating rule, let $Q \equiv \mathbb{E}[ z_i z_i']=W^{-1}$, and
consider an updating rule for $Q_i$:
\begin{align*}
Q_i &= \frac{n_0 + i - 1}{ n_0 + i} Q_{i-1} + \frac{1} { n_0 + i} z_i z_i'
\end{align*}
and update $\beta_i$ using the inverse of $Q_i$:
\begin{align*}
\beta_i &= \beta_{i-1} - \gamma_i (\Phi_{i-1}'Q_{i-1}^{-1}\Phi_{i-1})^{\dagger}\Phi_{i-1}'Q_{i-1}^{-1} g_i(\beta_{i-1}).    
\end{align*}
The proposed algorithm in \eqref{algo:siv:2sls} explicitly computes $Q_{i-1}^{-1}=W_{i-1}$ 
using Sherman-Morrison-Woodbury (SMW) formula, showing that it is unnecessary to compute the inverse of $Q_{i-1}$ each time but it suffices to update the scalar quantity $m_i$ and the matrix $W_i$ accordingly. 
The positive constant $\eta_0$ in the definition of $W_0$ ensures that $W_{i-1}$ is well defined for all $i$.
\end{remark}

\begin{remark}[Computation of $(\Phi_{i-1}' W_{i-1} \Phi_{i-1})^{\dagger}$]\label{rem:woodbury:1}
When the dimension $d_\beta$ of $\beta$ is high, it would be time-consuming to directly compute
$(\Phi_{i-1}' W_{i-1} \Phi_{i-1})^{\dagger}$ in \eqref{algo:beta:2sls}.
Under the identification assumption given below, 
$(\Phi_{i-1}' W_{i-1} \Phi_{i-1})$ is invertible with probability approaching one,
and hence its inverse can be computed via SMW formula.
Specifically, let $\mathcal{H}_i := (\Phi_{i}'W_{i} \Phi_{i})^{-1}  \in \mathbb{R}^{d_{\beta} \times d_{\beta}}$, we have:
\begin{align}\label{Woodbury:H:matrix}
    \mathcal{H}_i = \frac {n_0 + i} {n_0 + i-1} \left( \mathcal{H}_{i-1} - \mathcal{H}_{i-1} \mathcal{U}_i (\mathcal{D}_i + \mathcal{U}_i'\mathcal{H}_{i-1}\mathcal{U}_i )^{-1} \mathcal{U}_i' \mathcal{H}_{i-1}\right),  
\end{align}
where 
\begin{align*}
    \mathcal{D}_i &=\operatorname{diag}\left[-m_i, n_0 + i-1 \right]   \in \mathbb{R}^{2 \times 2},\\
    \mathcal{U}_i &= [x_i-\Phi_{i-1}' W_{i-1}z_i, x_i] \in \mathbb{R}^{d_{\beta} \times 2}.
\end{align*}
Because $\mathcal{D}_i + \mathcal{U}_i' \mathcal{H}_{i-1} \mathcal{U}_i$ is a 2 by 2 matrix, using \eqref{Woodbury:H:matrix} would be computationally advantageous when $d_\beta \gg 2$. 
\end{remark}

\subsection{Intuition Behind Our Algorithm}

One difficulty in constructing an stochastic approximation algorithm for an IV regression is that the model \eqref{model} is possibly an overidentified moment restrictions; therefore, there is no obvious form of stochastic gradient descent for an IV regression.

Suppose that $G = \mathbb{E} [ z_i x_i']$ has full rank $d_\beta$, which is the standard assumption for IV regression. Then $ \lambda_{\min}(G' G) > 0$, where $\lambda_{\min}(\cdot)$ is the smallest eigenvalue.
We propose to build our algorithm based on a stochastic approximation of the ordinary differential equation: 
\begin{align*}
\dot \beta = -(\beta - \beta_*) = - (G' W G)^{-1} G' W G(\beta - \beta_*),    
\end{align*}
where $W$ is a positive definite symmetric weighting matrix. 
Note that 
\begin{align*}
    \mathbb E[\beta_i -\beta_{i-1}\mid \mathcal F_{i-1}]
    &= -\gamma_i\mathbb E[(\Phi_{i-1}' W_{i-1} \Phi_{i-1})^{\dagger} \Phi_{i-1}' W_{i-1} g_i(\beta_{i-1})\mid \mathcal F_{i-1}] \\
    &= -\gamma_i (\Phi_{i-1}' W_{i-1} \Phi_{i-1})^{\dagger} \Phi_{i-1}' W_{i-1} (G\beta_{i-1} + H) \\
    &= -\gamma_i (\Phi_{i-1}' W_{i-1} \Phi_{i-1})^{\dagger} \Phi_{i-1}' W_{i-1} G(\beta_{i-1}-\beta_*).
\end{align*}
Thus, our updating rule on average moves in the direction which reduces the difference between the current state $\beta_{i-1}$ and the true value $\beta_*$, provided that 
$\Phi_{i-1}$ and $W_{i-1}$ are close to $G$ and $W$ to make $\Phi_{i-1}'W_{i-1}G$ positive definite. The latter requirement is satisfied for sufficiently large $i$ due to the law of large numbers.
When $n$ is large enough,
$(\Phi_{i-1}' W_{i-1} \Phi_{i-1})^{\dagger} \Phi_{i-1}' W_{i-1} G$ is close to an identity matrix.

\begin{remark}
It is important to pre-multiply the scaling matrix
$(\Phi_{i-1}' W_{i-1} \Phi_{i-1})^{\dagger}$ in \eqref{algo:beta:2sls}, so that 
the scale of observations can be automatically adjusted.
From this perspective, 
    our algorithm can be regarded as a sort of \emph{second-order method} using the terminology of
    \citet[Section 6]{bottou2018optimization}. They emphasize that SGD or the batch gradient method, which can be called
     first-order methods, are not scale invariant. 
     Another perspective, which is more intimately tied to asymptotic theory, is that our algorithm is based on an influence function for 2SLS and GMM. In other words, our formulation of the second-order matrix is reverse-engineered to reproduce the same asymptotic variances of offline 2SLS and GMM.
\end{remark}

\begin{remark}
We multiplied $(\Phi_{i-1}' W_{i-1} \Phi_{i-1})^{\dagger} \Phi_{i-1}' W_{i-1}$ before $g_i(\beta_{i-1})=G_i\beta_{i-1}+H_i$ in \eqref{algo:beta:2sls}, as opposed to $(\Phi_{i}' W_{i} \Phi_{i})^{\dagger} \Phi_i' W_{i}$. If the latter were multiplied instead of the former, it would have introduced $O(\gamma_i /i)$-bias in each update, whose exact magnitude depends on the data-generating process. 
Furthermore, asymptotic analysis would have been more complicated. 
\end{remark}

\begin{remark}
It is noteworthy that the updating rule depends only on the relative location to the true value $\Delta_{i-1} = \beta_{i-1} - \beta_*$, but not directly through the location of $\beta_{i-1}$. This guarantees that we can develop asymptotic theory by assuming that $\beta_* = 0$ without loss of generality. 
\end{remark}

\begin{remark}[IV Clustered Dependence]
We can extend IV model \eqref{model} to a cluster-dependent setting:
\begin{align}\label{model:cluster}
y_{i,t} = x_{i,t}'\beta_* + u_{i,t},~~~ \mathbb{E}\left[u_{i,t} z_{i,t} \right]=0, \quad  t=1, \ldots, T_i,  i=1, \ldots, n,
\end{align}
where there are $n\to\infty$ clusters, and  within each cluster we have finite many ($T_i$) observations.
It is straightforward to accommodate clustered dependence. 
Specifically,  we now update the estimator at the  $i$ level: 
run a modified version of \eqref{algo:siv:2sls} with 
\begin{align*}
G_i = \frac{1}{T_i} \sum_{t=1}^{T_i} z_{i,t} x_{i,t}' \; \text{ and } \; 
H_i = - \frac{1}{T_i} \sum_{t=1}^{T_i} z_{i,t} y_{i,t}.
\end{align*}
In words, instead of updating the estimate for each observation, we treat all individuals within a cluster as a ``mini-batch'' and update the estimate in batches. 
This allows for arbitrary dependence within clusters.
\end{remark}

\begin{remark}[Two Sample IV]
It is easy to accommodate our algorithm for the case when 
$G_i$ and $H_i$ are from two different datasets.    
\end{remark}


\subsection{Asymptotic Properties}

We first state basic regularity conditions.

\begin{asm*}\label{asm:conditions}
\begin{enumerate}[({A}1)]
    \item $\mathbb S = \{\mathcal{D}_i =  (x_i, z_i, y_i) \in \mathbb R^{d_\beta} \times \mathbb R^{d_g} \times \mathbb R: i=1,2,\ldots,n \}$ is a random sample of size $n$ drawn from model \eqref{model}, with $d_\beta \leq d_g <\infty$. \label{A:sample}
    \item For some fixed $n_0 \in \mathbb N$, $\mathbb S_0 = \{ \mathcal{D}_{0j} =  (x_{0j}, z_{0j}, y_{0j}) : j=1,\ldots, n_0\}$ is an initialization random sample drawn from model \eqref{model}, set aside from $\mathbb S$. \label{A:initial}
    \item $\lambda_{\min}(G' G) > 0$ with $G = \mathbb E [G_i]= \mathbb E [z_i x_i']$, and $\lambda_{\min}(\mathbb{E} [ z_i z_i']) > 0$. \label{A:eigen}
    \item $\beta_* \in \mathbb R^ {d_\beta}$ is the (unique) solution to $\mathbb E[g_i(\beta)] = \mathbb E[ z_i \left( x_i'\beta - y_i \right)] = 0$. \label{A:iden}
   \item $\gamma_i = \gamma_0 i^ {-a}$ for some $\gamma_0 > 0$ and $a \in (1/2, 1)$. \label{A:learning}

    \item $\mathbb E \|G_i\|^ {2} < \infty$, and $\mathbb{E}[\|g_i(\beta_*)\|^2] < \infty$. \label{A:moment:consistency}
    
    
    \item $\mathbb E\|G_i\|^{2p} < \infty$ and $\mathbb E \|g_i(\beta_*)\|^{2p}<\infty$ for some integer $p > (1 - a)^{-1}$. \label{A:moment:FCLT}
    
\end{enumerate}
\end{asm*}


Assumption~\ref{A:initial} is an initialization sample that is used to construct $(\beta_0,\Phi_0,W_0)$. 
Condition~\ref{A:eigen} amounts to identification conditions and equivalent to the conditions that $G$ has full rank $d_\beta$ and $\mathbb{E} [ z_i z_i']$ is non-singular. 
Condition~\ref{A:iden} defines $\beta_*$ and its uniqueness is guaranteed by \ref{A:eigen}.
Assumption~\ref{A:learning} is the standard condition for the learning rate in the literature \citep{polyak1992acceleration}.
Conditions~\ref{A:moment:consistency}-\ref{A:moment:FCLT} impose moment conditions:
\ref{A:moment:consistency} is a less stringent assumption that ensures that the non-averaged estimator $\beta_n$ is strongly consistent for $\beta_*$ and its $L_2$ convergence rate is $O(\gamma_n)$. It is also used to obtain asymptotic normality of the averaged estimator $\bar \beta_n$, which converges faster than $\beta_n$; 
\ref{A:moment:FCLT} is a more stringent condition under which we obtain the functional central limit theorem (FCLT) for the sequence of S2SLS estimators $\beta_i$.


The following lemma establishes strong consistency of $\beta_n$.

\begin{lem}[Strong consistency]
\label{lem:strong consistency of beta_n}
Let Assumptions~\ref{A:sample}-\ref{A:iden}, \ref{A:moment:consistency} hold, and $\gamma_i=\gamma_0 i^{-a}$ for $a\in (1/2,1]$. Then, 
 as $n \to \infty$, $\beta_n \to \beta_*$ almost surely.
\end{lem}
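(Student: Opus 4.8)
The plan is to run a stochastic Lyapunov argument on $V_i := \|\beta_i - \beta_*\|^2$ and invoke the Robbins--Siegmund almost supermartingale convergence theorem. The first step is to pin down the deterministic limits of the auxiliary sequences by the strong law of large numbers. Unrolling \eqref{algo:Phi:2sls} gives $(n_0+i)\Phi_i = \sum_{j=1}^{n_0} z_{0j}x_{0j}' + \sum_{j=1}^i G_j$, and unrolling \eqref{algo:m:W:2sls}--\eqref{algo:beta:W:2sls} gives $W_i^{-1} = (n_0+i)^{-1}\bigl(\sum_{j=1}^{n_0} z_{0j}z_{0j}' + \sum_{j=1}^i z_j z_j'\bigr) + \tfrac{n_0\eta_0}{n_0+i}I$, so under \ref{A:sample}--\ref{A:eigen} and \ref{A:moment:consistency} we have $\Phi_i \to G$ and $W_i \to W := (\mathbb{E}[z_i z_i'])^{-1}$ almost surely (the $\eta_0$ term vanishes and matrix inversion is continuous at the positive definite limit). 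Hence the $\mathcal{F}_{i-1}$-measurable matrices $B_{i-1} := (\Phi_{i-1}'W_{i-1}\Phi_{i-1})^{\dagger}\Phi_{i-1}'W_{i-1}$ and $A_{i-1} := B_{i-1}G$ converge a.s. to $(G'WG)^{-1}G'W$ and to the identity $I$, respectively; in particular $\Phi_{i-1}'W_{i-1}\Phi_{i-1}$ is eventually invertible (so $(\cdot)^{\dagger}$ becomes a genuine inverse), $\|B_{i-1}\|_{\operatorname{op}}$ is a.s.\ bounded, and $\epsilon_{i-1} := \|A_{i-1}-I\|_{\operatorname{op}} \to 0$ a.s.

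Writing $\Delta_{i-1} = \beta_{i-1}-\beta_*$ and using the conditional-mean identity of Section~2.3, namely $\mathbb{E}[\beta_i-\beta_{i-1}\mid\mathcal{F}_{i-1}] = -\gamma_i A_{i-1}\Delta_{i-1}$, together with $\beta_i - \beta_{i-1} = -\gamma_i B_{i-1}g_i(\beta_{i-1})$ and the decomposition $g_i(\beta_{i-1}) = G_i\Delta_{i-1} + g_i(\beta_*)$, I would expand $V_i = V_{i-1} + 2\Delta_{i-1}'(\beta_i-\beta_{i-1}) + \|\beta_i-\beta_{i-1}\|^2$ and take conditional expectations. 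The moment bounds in \ref{A:moment:consistency} and independence of $\mathcal{D}_i$ from $\mathcal{F}_{i-1}$ give $\mathbb{E}[\|g_i(\beta_{i-1})\|^2\mid\mathcal{F}_{i-1}] \le 2\mathbb{E}\|G_i\|^2\,V_{i-1} + 2\mathbb{E}\|g_i(\beta_*)\|^2$, so that
\[
\mathbb{E}[V_i\mid\mathcal{F}_{i-1}] \le V_{i-1} - 2\gamma_i\,\Delta_{i-1}'A_{i-1}\Delta_{i-1} + C\gamma_i^2\|B_{i-1}\|_{\operatorname{op}}^2\bigl(V_{i-1}+1\bigr)
\]
for a constant $C$ depending only on $\mathbb{E}\|G_i\|^2$ and $\mathbb{E}\|g_i(\beta_*)\|^2$. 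Bounding the quadratic form via its symmetric part, $\Delta_{i-1}'A_{i-1}\Delta_{i-1} \ge \lambda_{\min}\bigl(\tfrac{A_{i-1}+A_{i-1}'}{2}\bigr)\|\Delta_{i-1}\|^2 \ge (1-\epsilon_{i-1})V_{i-1}$, replaces the middle term by $-2\gamma_i(1-\epsilon_{i-1})V_{i-1}$.

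The delicate point, which I expect to be the main obstacle, is that $1-\epsilon_{i-1}$ need not be nonnegative for small (random) $i$, so this is not yet a legitimate ``decrease'' term. I would resolve it by splitting $-2\gamma_i(1-\epsilon_{i-1})V_{i-1} = -2\gamma_i(1-\epsilon_{i-1})^+ V_{i-1} + 2\gamma_i(\epsilon_{i-1}-1)^+ V_{i-1}$ and folding the nonnegative second piece into the multiplicative coefficient of $V_{i-1}$. Since $\epsilon_{i-1}\to 0$ a.s., only finitely many indices satisfy $\epsilon_{i-1}>1$, whence $\sum_i 2\gamma_i(\epsilon_{i-1}-1)^+ < \infty$ a.s.; combined with $\sum_i \gamma_i^2\|B_{i-1}\|_{\operatorname{op}}^2 < \infty$ a.s.\ (using $a>1/2$ and the a.s.\ boundedness of $\|B_{i-1}\|_{\operatorname{op}}$), the recursion takes the exact form
\[
\mathbb{E}[V_i\mid\mathcal{F}_{i-1}] \le (1+a_i)V_{i-1} - b_i + c_i, \qquad \sum_i a_i < \infty,\ \sum_i c_i < \infty \ \text{a.s.},
\]
with $b_i = 2\gamma_i(1-\epsilon_{i-1})^+ V_{i-1}\ge 0$.

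Robbins--Siegmund then gives that $V_i$ converges a.s.\ to a finite limit $V_\infty \ge 0$ and that $\sum_i b_i < \infty$ a.s. To upgrade this to $V_\infty = 0$, I would use that $\epsilon_{i-1}\to 0$ forces $(1-\epsilon_{i-1})^+ \ge \tfrac12$ for all large $i$, so $\sum_i \gamma_i V_{i-1} < \infty$ a.s. Because $a \le 1$ we have $\sum_i \gamma_i = \infty$, so if $V_\infty > 0$ on a positive-probability event, then $V_{i-1} \to V_\infty > 0$ would make $\sum_i \gamma_i V_{i-1}$ diverge there, a contradiction. Hence $V_\infty = 0$ a.s., i.e.\ $\beta_n \to \beta_*$ almost surely, which covers the full range $a\in(1/2,1]$ since the argument needs precisely $\sum_i\gamma_i^2<\infty$ and $\sum_i\gamma_i=\infty$.
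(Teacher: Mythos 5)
Your proposal is correct and follows the same backbone as the paper's proof: a Lyapunov argument on $\|\beta_i-\beta_*\|^2$, the same conditional-mean identity and second-moment bound under \ref{A:moment:consistency}, the Robbins--Siegmund almost-supermartingale theorem, and the identical final contradiction via $\sum_i\gamma_i=\infty$. The one genuine difference is how the pre-asymptotic regime (where the descent direction is not yet valid) is absorbed. The paper writes the cross term as $-\beta_{i-1}'(\Phi_{i-1}'W_{i-1}\Phi_{i-1})^{\dagger}\Phi_{i-1}'W_{i-1}\Phi_{i-1}\beta_{i-1}$ plus an error proportional to $\|\Phi_{i-1}-G\|$, keeps $-2\gamma_i B_{i-1}1\{\|\Phi_{i-1}-G\|<\delta\}$ as the nonnegative decrease term, and places $2\gamma_i\|\Phi_{i-1}-G\|\,\|W_{i-1}\|^{1/2}\|(\Phi_{i-1}'W_{i-1}\Phi_{i-1})^{\dagger}\|^{1/2}$ into the multiplicative coefficient; its summability then requires a \emph{rate} on $\Phi_{i-1}-G$, supplied by the law of the iterated logarithm. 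Your positive-part split of $(1-\epsilon_{i-1})$ keeps the entire $(1-\epsilon_{i-1})^{+}$ piece inside the decrease term and puts only $2\gamma_i(\epsilon_{i-1}-1)^{+}$, which is nonzero for at most finitely many $i$ a.s., into the multiplicative coefficient, so the SLLN alone suffices and the LIL is not needed. That is a mild but real simplification; the paper's version, in exchange, sets up the quantitative bound \eqref{eq:B_i inequality in lem:strong consistency of beta_n:2sls:sgmm} that is reused verbatim (with stopping times) to get the $L^2$ rate in Part~1 of the proof of Theorem~\ref{thm:1}, which your qualitative split would not deliver. One small point to make explicit: your lower bound $\lambda_{\min}\bigl(\tfrac{1}{2}(A_{i-1}+A_{i-1}')\bigr)\ge 1-\epsilon_{i-1}$ implicitly uses that $(\Phi_{i-1}'W_{i-1}\Phi_{i-1})^{\dagger}\Phi_{i-1}'W_{i-1}\Phi_{i-1}=I$, i.e.\ that the generalized inverse is a true inverse; as in the paper, this holds once $\|\Phi_{i-1}-G\|$ is small, and the finitely many exceptional indices are harmlessly absorbed by the same positive-part device.
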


Lemma~\ref{lem:strong consistency of beta_n} is non-trivial to prove because our proposed algorithm is based on the Z-estimator, not on the M-estimator.
The proof of Lemma~\ref{lem:strong consistency of beta_n} relies on martingale techniques: in particular, Robbins and Siegmund \citep{robbins1971convergence}, which provides  a convergence theorem for non-negative ``almost supermartingales.''\footnote{
See, e.g., Chapter 5 of \citet{benveniste2012adaptive} for an application of the Robbins-Siegmund theorem to the Robbins-Monro algorithm \citep{Robbins-Monro-1951}.} Moreover, as for the original Robbins-Monro algorithm, the almost sure convergence of $\beta_n$ allows for the learning rate of $\gamma_i =1/i$.

We now present asymptotic normality of the averaged estimator $\bar\beta_n$.

\begin{thm} 
    \label{thm:1}
      Let Assumptions~\ref{A:sample}-\ref{A:moment:consistency} hold. Then, as $n\to \infty$, 
    \begin{equation*}
        \sqrt {n}  (\bar \beta_{n} - \beta_*) \overset{d}{\to} N(0, \Avar(\bar \beta)),
    \end{equation*}    
    where $\Omega \equiv \var(g_i(\beta_*))$, $W = ( \mathbb{E} [ z_i z_i'] )^{-1}$ and $\Avar(\bar \beta) \equiv (G' W G)^{-1}G' W \Omega W G(G'W G)^{-1}$. 
\end{thm}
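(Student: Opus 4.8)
The plan is to follow the classical Polyak--Juditsky averaging argument, adapted to the random, time-varying gain matrix produced by the sequentially updated pair $(\Phi_{i-1}, W_{i-1})$. By the remark preceding the theorem I may set $\beta_* = 0$, so that $\Delta_i := \beta_i$ and $g_i(\beta_{i-1}) = G_i\Delta_{i-1} + g_i(\beta_*)$. Writing $A_{i-1} := (\Phi_{i-1}'W_{i-1}\Phi_{i-1})^{\dagger}\Phi_{i-1}'W_{i-1}$ and splitting $G_i = G + (G_i - G)$, the update in \eqref{algo:beta:2sls} becomes
\[
\Delta_i = \Delta_{i-1} - \gamma_i A_{i-1}G\,\Delta_{i-1} - \gamma_i A_{i-1}(G_i - G)\Delta_{i-1} - \gamma_i A_{i-1}g_i(\beta_*),
\]
in which the last two terms are martingale differences relative to $\mathcal F_{i-1}$, since $A_{i-1},\Delta_{i-1}\in\mathcal F_{i-1}$ while $G_i$ and $g_i(\beta_*)$ are independent of $\mathcal F_{i-1}$ with $\mathbb E[G_i]=G$ and $\mathbb E[g_i(\beta_*)]=0$.

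First I would record the preliminary facts. Because $\Phi_{i-1}$ and $W_{i-1}$ are (regularized) running sample averages, the strong law gives $\Phi_{i-1}\to G$ and $W_{i-1}\to W$ almost surely; hence $\Phi_{i-1}'W_{i-1}\Phi_{i-1}\to G'WG\succ 0$ is eventually invertible (so the generalized inverse coincides with the ordinary inverse), $A_{i-1}\to A := (G'WG)^{-1}G'W$, and the key identity $AG = I_{d_\beta}$ holds, so $\|A_{i-1}G - I\|\to 0$ with an $O(i^{-1/2})$-type rate. I would also invoke the $L_2$ rate $\mathbb E\|\Delta_i\|^2 = O(\gamma_i)$, i.e.\ $\|\Delta_i\|_{L_2}=O(i^{-a/2})$, coming from the consistency analysis under \ref{A:moment:consistency}.

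The heart of the argument is the Polyak--Juditsky rearrangement: solving the recursion for the drift term gives
\[
A_{i-1}G\,\Delta_{i-1} = -\gamma_i^{-1}(\Delta_i - \Delta_{i-1}) - A_{i-1}(G_i - G)\Delta_{i-1} - A_{i-1}g_i(\beta_*).
\]
Summing over $i=1,\dots,n$, adding and subtracting $\sum_i\Delta_{i-1}$, and applying summation by parts to $\sum_i\gamma_i^{-1}(\Delta_i-\Delta_{i-1})$, I would obtain (up to a negligible $n^{-1/2}(\Delta_n-\Delta_0)$ boundary term)
\[
\sqrt{n}\,\bar\Delta_n = -n^{-1/2}\sum_{i=1}^n A_{i-1}g_i(\beta_*) + r_n,
\]
where $r_n$ collects (i) the boundary term $n^{-1/2}\gamma_n^{-1}\Delta_n$, (ii) the telescoped increment $n^{-1/2}\sum_i(\gamma_{i+1}^{-1}-\gamma_i^{-1})\Delta_i$, (iii) the gain-deviation term $n^{-1/2}\sum_i(I - A_{i-1}G)\Delta_{i-1}$, and (iv) the auxiliary martingale $n^{-1/2}\sum_i A_{i-1}(G_i-G)\Delta_{i-1}$. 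Using $\|\Delta_i\|_{L_2}=O(i^{-a/2})$, $\gamma_i=\gamma_0 i^{-a}$ and $a\in(1/2,1)$, each piece is $o_P(1)$: (i) and (ii) are of order $n^{(a-1)/2}$ (this is exactly where $a<1$ enters); (iv) has variance $n^{-1}\sum_i O(i^{-a})=O(n^{-a})$ by martingale orthogonality (here $a>1/2$ matters); and (iii) is bounded by $n^{-1/2}\sum_i\|I-A_{i-1}G\|\,\|\Delta_{i-1}\|$, controlled by combining the rate of $\|A_{i-1}G-I\|$ with the $L_2$ rate of $\Delta_{i-1}$.

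It then remains to apply a central limit theorem to the leading term. Since $A_{i-1}\to A$ almost surely, I would show $n^{-1/2}\sum_i(A_{i-1}-A)g_i(\beta_*)=o_P(1)$ and apply the i.i.d.\ (Lindeberg) CLT to $n^{-1/2}\sum_i A\,g_i(\beta_*)$, which converges to $N(0, A\Omega A')$ with $A\Omega A' = (G'WG)^{-1}G'W\Omega W G(G'WG)^{-1}=\Avar(\bar\beta)$. The main obstacle is controlling $r_n$, in particular term (iii) and the boundary terms (i)--(ii), all of which hinge on having the sharp $L_2$ rate $\mathbb E\|\Delta_i\|^2=O(\gamma_i)$ together with a quantitative rate for $\|A_{i-1}G - I\|$; coupling the random, time-varying gain $A_{i-1}G_i$ with the deviation $\Delta_{i-1}$ (rather than a fixed matrix, as in textbook Polyak--Juditsky) is what makes the bookkeeping delicate.
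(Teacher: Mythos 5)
Your overall strategy---the Polyak--Juditsky rearrangement with the random gain $A_{i-1}=(\Phi_{i-1}'W_{i-1}\Phi_{i-1})^{\dagger}\Phi_{i-1}'W_{i-1}$, isolating a leading martingale $n^{-1/2}\sum_i A_{i-1}g_i(\beta_*)$ and a gain-deviation remainder $n^{-1/2}\sum_i(I-A_{i-1}G)\Delta_{i-1}$---is exactly the decomposition the paper uses. The paper merely packages the boundary and weight terms by coupling $\beta_i$ to a linearized recursion $\beta_i^1=(1-\gamma_i)\beta_{i-1}^1-\gamma_i\xi_i$ and citing Polyak--Juditsky's Theorem~1(a) for that auxiliary process, whereas you would unroll the summation by parts by hand; the substance is the same.

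The genuine gap is your invocation of the unconditional $L_2$ rate $\mathbb E\|\Delta_i\|^2=O(\gamma_i)$. That bound does not follow from the consistency analysis: the one-step recursion for $\mathbb E[\|\beta_i\|^2\mid\mathcal F_{i-1}]$ carries the random prefactors $\|W_{i-1}\|$ and $\|(\Phi_{i-1}'W_{i-1}\Phi_{i-1})^{\dagger}\|$, and the latter admits no uniform (or uniformly integrable) bound---it blows up on the event that $\Phi_{i-1}'W_{i-1}\Phi_{i-1}$ is nearly singular, which has positive probability at every finite $i$---so the recursion cannot simply be integrated to yield an unconditional moment bound. The same issue infects your remainder terms (iii) and (iv) and the replacement step $n^{-1/2}\sum_i(A_{i-1}-A)g_i(\beta_*)=o_P(1)$: almost-sure convergence of $A_{i-1}$ and of $\Phi_{i-1}$ does not supply the uniform control needed to bound these expectations. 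The paper's fix is a localization: introduce stopping times that cap $\|W_i\|$, $\|(\Phi_i'W_i\Phi_i)^{\dagger}\|$, and $\|\Phi_i-G\|/\eta_{i+1}$ at a level $R$ (with $\eta_i$ the law-of-the-iterated-logarithm rate), prove the stopped rate $\mathbb E[\|\beta_i\|^2\,1\{T_R\ge i\}]=O(\gamma_i)$, bound every remainder on $\{T_R\ge n\}$, and then let $R\to\infty$ using $\mathbb P(T_R=\infty\text{ for some }R)=1$ to convert the stopped bounds into $o_P(1)$ statements. Without this device (or an equivalent truncation) several of the moment bounds in your sketch are unjustified; with it, your argument goes through and coincides with the paper's.
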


To prove Theorem~\ref{thm:1}, it is necessary to extend Theorems 1 and 2 of \cite{polyak1992acceleration} 
to accommodate the additional dynamics due to $\Phi_{i}$ and $W_i$. Since $\Phi_{i} \ne G$ in general, 
we must carefully consider the error $\Phi_{i} - G$. It is central to control this error to obtain asymptotic normality. 

\begin{remark}
The limiting distribution of our averaged S2SLS is the same as that of the standard (offline) 2SLS estimator. 
We will propose an efficient estimator in Section~\ref{sec:eff:est}.
\end{remark}

\begin{remark}
If $x_i$ is exogenous in model \eqref{model}, then we can take $z_i = x_i$, $G = \mathbb{E}[ x_i x_i']$, $W = (\mathbb{E}[ x_i x_i'])^{-1}$, and $\Omega = \mathbb{E}[ u_i^2 x_i x_i']$. This implies that 
$\Avar(\bar \beta) = \mathbb{E}[ x_i x_i']^{-1}  \mathbb{E}[ u_i^2 x_i x_i'] \mathbb{E}[ x_i x_i']^{-1}$, which is exactly identical to the asymptotic variance for the standard (offline) OLS estimator.
In other words, when $x_i$ is exogenous, our S2SLS estimator is not algebraically equivalent to the standard SGD-OLS estimator, but it is first-order asymptotically equivalent to it.
\end{remark}

We now strengthen Theorem \ref{thm:1} to the following functional central limit theorem (FCLT).

\begin{thm} 
\label{thm:2}
Let Assumptions~\ref{A:sample}-\ref{A:learning} and \ref{A:moment:FCLT} hold. Then: as $n\to \infty$,
    \begin{equation*}
        \left\{\frac 1 {\sqrt {n}}  \sum_{i=1}^{\lfloor n r\rfloor}(\beta_i - \beta_*) \right\}_{r\in[0,1]}\rightsquigarrow \Avar(\bar \beta)^{1/2}\{ \mathbb{W}_{d_\beta}(r)\}_{r\in [0,1]},
    \end{equation*}    
    where $\{\mathbb{W}_{d_\beta}(r)\}_{r\in [0,1]}$ denotes the $d_\beta$-dimensional standard Wiener process.
\end{thm}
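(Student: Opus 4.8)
By the translation-invariance noted in the remark above I set $\beta_*=0$ and write $\Delta_i:=\beta_i-\beta_*=\beta_i$, $\xi_i:=g_i(\beta_*)$ (an i.i.d., $\mathcal F_{i-1}$-independent, mean-zero sequence with $\var(\xi_i)=\Omega$), and $A_{i-1}:=(\Phi_{i-1}'W_{i-1}\Phi_{i-1})^{\dagger}\Phi_{i-1}'W_{i-1}$, $A:=(G'WG)^{-1}G'W$. The plan is the Polyak--Juditsky linearization adapted to the \emph{estimated} second-order matrix: I will show that the partial-sum process agrees, uniformly in $r$, with the martingale process $-\tfrac1{\sqrt n}\sum_{i\le\lfloor nr\rfloor}A\xi_i$, and then invoke a functional CLT for the latter. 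Since $A\Omega A'=\Avar(\bar\beta)$ and $-\mathbb{W}_{d_\beta}\overset{d}{=}\mathbb{W}_{d_\beta}$, this yields the claim.

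First I would invert the recursion. Using $g_i(\beta_{i-1})=G_i\Delta_{i-1}+\xi_i$ and $A_{i-1}G=I+(A_{i-1}G-I)$, the update \eqref{algo:beta:2sls} rearranges to
\[
\Delta_{i-1}=\frac{\Delta_{i-1}-\Delta_i}{\gamma_i}-A_{i-1}\xi_i-A_{i-1}(G_i-G)\Delta_{i-1}-(A_{i-1}G-I)\Delta_{i-1}.
\]
Summing over $i\le m$ and applying Abel summation to the telescoping term produces a boundary piece $\gamma_1^{-1}\Delta_0-\gamma_m^{-1}\Delta_m+\sum_{i<m}\Delta_i(\gamma_{i+1}^{-1}-\gamma_i^{-1})$, the noise $-\sum_{i\le m}A_{i-1}\xi_i$, and the two error sums above. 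Replacing $A_{i-1}$ by its limit $A$ in the noise (legitimate because $\Phi_i\to G$ and $W_i\to W$ both a.s.\ and in $L_2$ at the $i^{-1/2}$ rate, by the law of large numbers applied to the running averages defining $\Phi_i$ and $Q_i=W_i^{-1}$) isolates the leading term $-\sum_{i\le m}A\xi_i$.

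The core of the argument is to show that each remainder, scaled by $1/\sqrt n$, is $o_P(1)$ uniformly over $m=\lfloor nr\rfloor\le n$. The two martingale-difference sums $\sum(A_{i-1}-A)\xi_i$ and $\sum A_{i-1}(G_i-G)\Delta_{i-1}$ I would bound with Doob's maximal inequality: their summed conditional second moments are $O(\sum_i i^{-1})=O(\log n)$ and $O(\sum_i \gamma_i)=O(n^{1-a})$ respectively (using $\|A_{i-1}-A\|_{L_2}=O(i^{-1/2})$ and the $L_2$ rate $\mathbb E\|\Delta_i\|^2=O(\gamma_i)$ noted under Assumption~\ref{A:moment:consistency}), so after dividing by $n$ they vanish. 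The drift error $\sum(A_{i-1}G-I)\Delta_{i-1}$ is not a martingale, but bounding it by its sum of norms gives, via Cauchy--Schwarz, $\mathbb E\sum_{i\le m}\|A_{i-1}G-I\|\,\|\Delta_{i-1}\|\lesssim\sum_i i^{-1/2}\gamma_i^{1/2}=O(m^{(1-a)/2})$, which is monotone in $m$ and $o(\sqrt n)$. The one genuinely new obstacle is the Abel boundary term $\gamma_m^{-1}\Delta_m/\sqrt n$: a pointwise $L_2$ rate does not control $\sup_{m\le n}$. This is exactly where Assumption~\ref{A:moment:FCLT} enters — with the higher-moment iterate bound $\mathbb E\|\Delta_m\|^{2p}=O(\gamma_m^p)$ (a preliminary moment-rate lemma under that assumption), a crude union bound gives $\mathbb E\sup_{m\le n}(\gamma_m^{-1}\|\Delta_m\|/\sqrt n)^{2p}\lesssim n^{-p}\sum_{m\le n}m^{ap}=O(n^{1-p(1-a)})$, which tends to $0$ precisely when $p>(1-a)^{-1}$.

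Finally I would handle the leading term by the multivariate Donsker/functional CLT for the i.i.d.\ mean-zero vectors $A\xi_i$ with covariance $A\Omega A'=\Avar(\bar\beta)$, giving $-\tfrac1{\sqrt n}\sum_{i\le\lfloor nr\rfloor}A\xi_i\rightsquigarrow\Avar(\bar\beta)^{1/2}\mathbb{W}_{d_\beta}$ in $D[0,1]^{d_\beta}$. Combining this with the uniform negligibility of the remainders through the converging-together lemma for weak convergence in the Skorokhod space delivers the stated FCLT. I expect the hardest and most novel step to be the uniform (in $r$) control of the terms coupling the data-dependent matrix $A_{i-1}$ — through $\Phi_{i-1}$ and $W_{i-1}$ — with the iterate error $\Delta_{i-1}$, since in the classical Polyak--Juditsky setting $A$ is fixed; the required quantitative inputs are the $i^{-1/2}$ convergence rate of $\Phi_i-G$ and $W_i-W$ together with the $L_{2p}$ iterate bound supplied by Assumption~\ref{A:moment:FCLT}.
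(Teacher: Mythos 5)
Your route is genuinely different from the paper's. The paper first couples $\beta_i$ with the linearized process $\beta_i^1=(1-\gamma_i)\beta_{i-1}^1-\gamma_i\xi_i$, where $\xi_i=(\Phi_{i-1}'W_{i-1}\Phi_{i-1})^{\dagger}\Phi_{i-1}'W_{i-1}(\tilde G_i\beta_{i-1}+H_i)$ is kept as a martingale-difference array (not i.i.d.); it then writes $\sum_{i\le m}\beta_i^1$ exactly in terms of the weights $\alpha_i^m=\gamma_i\sum_{j=i}^m\prod_{k=i+1}^j(1-\gamma_k)$, applies an mds FCLT to $\frac{1}{\sqrt n}\sum_i\xi_i$, and kills the weighted remainder $\sum_i w_i^m\xi_i$ by a stopping-time argument plus Burkholder's inequality using the $2p$-th moments. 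Your inversion-plus-Abel-summation centers the noise at the fixed matrix $A$ and the i.i.d.\ innovations $H_i$, which buys you a plain Donsker theorem at the end, but at the price of two remainders the paper's representation never produces: $\sum_{i\le m}(A_{i-1}-A)\xi_i$ and, more seriously, the boundary term $\gamma_m^{-1}\Delta_m$, which must be controlled uniformly in $m$. (The Abel term $\sum_{i<m}\Delta_i(\gamma_{i+1}^{-1}-\gamma_i^{-1})$, which you do not address, is harmless: $\gamma_{i+1}^{-1}-\gamma_i^{-1}=O(i^{a-1})$ and the sum of norms is $O(n^{a/2})$ in $L_1$, hence uniformly $o_P(\sqrt n)$.)

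There are two concrete gaps. First, essentially every moment bound you invoke --- $\mathbb{E}\|\Delta_i\|^2=O(\gamma_i)$, the $i^{-1/2}$ $L_2$-rate for $A_{i-1}-A$, the summed conditional second moments in your Doob bounds --- is not available unconditionally: $W_{i-1}$ and $(\Phi_{i-1}'W_{i-1}\Phi_{i-1})^{\dagger}$ are not uniformly bounded in expectation (off the good event $\|W_i\|$ can be as large as $(n_0+i)/(n_0\eta_0)$), and the paper's $L^2$ rate for $\beta_i$ is proved only on events $\{T_R\ge i\}$ defined by the stopping times in \eqref{eq:tau_R:stopping-time}. Your argument needs the same localization, together with the explicit passage from the localized $o_P(\sqrt n)$ statement back to the unconditional one via $\mathbb{P}(T_R<\infty)\to0$ as $R\to\infty$. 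Second, your control of $\sup_{m\le n}\gamma_m^{-1}\|\Delta_m\|/\sqrt n$ hinges on the rate $\mathbb{E}\|\Delta_m\|^{2p}=O(\gamma_m^p)$. This is strictly stronger than what the paper establishes (it only shows the $2p$-th moment is $O(\gamma_m)$ on the localized event, which is all a single application of its Lemma~\ref{lem:recursive bound} delivers), and with the weaker rate your union bound $n^{-p}\sum_m\gamma_m^{-2p}\mathbb{E}\|\Delta_m\|^{2p}$ diverges for $a>1/2$. The $O(\gamma_m^p)$ rate is believable but requires a genuine induction over moment levels (feeding the $2(q-1)$-th rate into the $2q$-th moment recursion for $q=1,\dots,p$), again on the localized event; as written it is an unproven lemma carrying the entire weight of the only step where Assumption (A7) enters your proof. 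The remaining bounds --- the drift term via Cauchy--Schwarz, the replacement of $A_{i-1}$ by $A$ given localization, the Donsker step for $AH_i$, and the exponent arithmetic $n^{1-p(1-a)}\to0$ --- check out.
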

The FCLT in Theorem~\ref{thm:2} states that the partial sum of the sequentially updated estimates $\beta_i$ converges weakly to a rescaled Wiener process, with the scaling matrix equal to a square root of the asymptotic variance of $\bar \beta_{n}$. 
Note that Theorem~\ref{thm:1} is a special case of Theorem~\ref{thm:2} with $r=1$ (albeit Theorem~\ref{thm:1} is derived under milder moment conditions).
Theorem~\ref{thm:2} allows us to construct robust online confidence regions for $\beta_*$; see Subsection~\ref{sec:robust-CBs} below.

\section{Efficient Estimation}\label{sec:eff:est}

\subsection{SGMM Algorithm}

In general, the S2SLS estimator in \eqref{algo:siv:2sls} is not efficient for $\beta_*$, just like the standard 2SLS estimator is inefficient.
To obtain an efficient estimator of $\beta_*$, we now propose to implement the following procedure.
First, we randomly partition the main sample into two subsamples:
$\mathbb{S} = \mathbb{S}_1 \cup \mathbb{S}_2$, where the sample size of $\mathbb{S}_j$ is denoted by $n_j$, where $j=1,2$. Thus, $n = n_1 + n_2$.
Using $\mathbb{S}_1$, we run \eqref{algo:siv:2sls} until $i=n_1$, and then using $\mathbb{S}_2$, we  sequentially update from $i = n_1 + 1$ until $i = n$: 
 \begin{subequations}\label{algo:siv:eff}
\begin{align}
    \beta_i &= \beta_{i-1} - \gamma_i (\Phi_{i-1}' W_{i-1} \Phi_{i-1})^{\dagger} \Phi_{i-1}' W_{i-1} g_i(\beta_{i-1}), \label{algo:beta:eff} \\ 
    \Phi_i &= \frac{n_0 + i - 1}{ n_0 + i} \Phi_{i-1} + \frac{1} { n_0 + i} G_i, \label{algo:Phi:eff} \\
            m_i&=n_0+i-1+g_i(\bar\beta_{n_1})'W_{i-1} g_i(\bar\beta_{n_1}), \label{algo:m:W:eff} \\  
        W_{i}&= \frac{n_0+i}{n_0+i-1} W_{i-1}  \left[I- m_i^{-1}g_i(\bar\beta_{n_1})g_i(\bar\beta_{n_1})'W_{i-1} \right], \label{algo:beta:W:eff} \\    
    \bar \beta_i &= \frac{i-1 }{ i} \bar \beta_{i-1} + \frac{1} { i}\beta_i.
    \label{algo:betabar:eff}
\end{align}
\end{subequations}

To achieve efficiency, we assume that $n_1 \rightarrow \infty$ but $n_1 / n \rightarrow 0$. 
In practice, the iterations up to $n_1$ can be viewed as a ``warm-up'' stage to avoid any too abrupt path in $\beta_i$.

\begin{remark} Note that our efficient algorithm \eqref{algo:siv:eff} is virtually the same as the inefficient algorithm \eqref{algo:siv:2sls}, except that we now update the weighting matrix $W$ differently, aiming for the optimal weighting $W=\Omega^{-1} \equiv [\var(g_i (\beta_*))]^{-1}$. As in Remark~\ref{rem:woodbury:W}, we apply SMW formula to sequentially update $W=\Omega^{-1}$ in \eqref{algo:beta:W:eff}.
Also, note that we keep the same $\bar\beta_{n_1}$ in \eqref{algo:m:W:eff} and \eqref{algo:beta:W:eff}, which is a consistent estimator for $\beta_*$.
\end{remark}

\begin{remark}[Computation of $\mathcal{V}_{i-1} =(\Phi_{i-1}' W_{i-1} \Phi_{i-1})^{\dagger}$ for efficient estimation]\label{rem:woodbury:2}
As in Remark~\ref{rem:woodbury:1}, it would be demanding to directly compute
$\mathcal{V}_{i-1} =(\Phi_{i-1}' W_{i-1} \Phi_{i-1})^{\dagger}$ in \eqref{algo:beta:eff} when $d_\beta$ is large. 
Analogous to \eqref{Woodbury:H:matrix}, we use SMW formula: 
for $\mathcal{V}_i := (\Phi_{i}'W_{i} \Phi_{i})^{-1}  \in \mathbb{R}^{d_{\beta} \times d_{\beta}}$,
\begin{equation*}
    \mathcal{V}_i = \frac {n_0 + i} {n_0 + i-1} \left( \mathcal{V}_{i-1} - \mathcal{V}_{i-1} \mathcal{U}_i (\mathcal{D}_i + \mathcal{U}_i'\mathcal{V}_{i-1}\mathcal{U}_i )^{-1} \mathcal{U}_i' \mathcal{V}_{i-1}\right),    
\end{equation*}
where 
\begin{align*}
    \mathcal{D}_i &=\operatorname{diag}\left[-z_i'W_{i-1} z_i, z_i'W_{i-1} z_i, -m_i\right] \in \mathbb{R}^{3 \times 3},\\
    \mathcal{U}_i &= \left[\Phi_{i-1}' W_{i-1} z_i, \Phi_{i-1}' W_{i-1} z_i + \frac{z_i'W_{i-1}z_i}{n_0 + i-1}x_i, \Phi_{i-1}' W_{i-1} g_i(\bar\beta_{n_1}) + \frac{z_i'W_{i-1}g_i(\bar\beta_{n_1})}{n_0 + i-1}x_i\right] \in \mathbb{R}^{d_{\beta} \times 3}.
\end{align*}
Since $\mathcal{D}_i + \mathcal{U}_i' \mathcal{V}_{i-1} \mathcal{U}_i$ is a 3 by 3 matrix, inverting it would require much less computation compared to directly inverting $\Phi_{i-1}' W_{i-1}\Phi_{i-1}$ when $d_\beta \gg 3$.
\end{remark}

\subsection{Asymptotic Efficiency}

We make the following additional regularity condition.
\begin{asm*}\label{asm:conditions:eff}
\begin{enumerate}[({A}8)]
    \item $n_1 \to \infty$, $n_1/n \to 0$, $\mathbb{E}[\|\beta_0\|^{2\tilde{p}} \,] < \infty$ for some constant $\tilde{p}$, and
 $\inf_{\beta \in \mathcal{K}} \lambda_{\min}(\mathbb{E}[g_i(\beta)g_i(\beta)']) > 0$ for some compact set $\mathcal{K}$ that contains $\beta_*$ in its interior. \label{A:eff:8}
    
\end{enumerate}
\end{asm*}

The following theorems establish asymptotic properties of SGMM. 

\begin{thm}
\label{thm:3} 
Let Assumptions \ref{A:sample} -- \ref{A:moment:consistency} and \ref{A:eff:8} hold with $\tilde{p}=1$.
Then, as $n \to \infty$, $\beta_n$ and $\bar \beta_n$ are weakly consistent for $\beta_*$ and
\begin{equation*}
    \sqrt{n}(\bar \beta_n - \beta_*) \overset{d}{\to} N(0, (G' \Omega^{-1} G)^{-1}).
\end{equation*}
\end{thm}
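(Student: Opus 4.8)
The plan is to exploit the fact that the efficient recursion \eqref{algo:siv:eff} is structurally identical to the S2SLS recursion \eqref{algo:siv:2sls}, the sole difference being the target of the weighting matrix: here $W_i$ is designed to converge to $\Omega^{-1}$ rather than to $(\mathbb{E}[z_iz_i'])^{-1}$. If one can show $W_i \to \Omega^{-1}$, then the argument behind Theorem~\ref{thm:1} applies with $W = \Omega^{-1}$, and the sandwich formula $\Avar(\bar\beta) = (G'WG)^{-1}G'W\Omega WG(G'WG)^{-1}$ collapses to the efficient variance $(G'\Omega^{-1}G)^{-1}$. The sample split is the device that makes this work: since $\bar\beta_{n_1}$ is $\mathcal{F}_{n_1}$-measurable while the observations driving the efficient phase $i > n_1$ are drawn from $\mathbb{S}_2$ and are therefore independent of $\mathcal{F}_{n_1}$, conditioning on $\mathcal{F}_{n_1}$ freezes the plug-in value $\bar\beta_{n_1}$ and restores the martingale-difference structure needed for the conditional-expectation computations.

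First I would establish that the weighting matrix converges. Conditioning on $\mathcal{F}_{n_1}$, Lemma~\ref{lem:strong consistency of beta_n} (or Theorem~\ref{thm:1}) applied to $\mathbb{S}_1$ gives $\bar\beta_{n_1} \to \beta_*$ as $n_1 \to \infty$. The SMW update \eqref{algo:m:W:eff}--\eqref{algo:beta:W:eff} is exactly the inverse of the running average
\[
Q_i = \frac{n_0+i-1}{n_0+i}Q_{i-1} + \frac{1}{n_0+i}g_i(\bar\beta_{n_1})g_i(\bar\beta_{n_1})',
\]
so $W_i = Q_i^{-1}$, where $Q_n$ is a weighted average of the outer products $g_i(\bar\beta_{n_1})g_i(\bar\beta_{n_1})'$ over the efficient phase plus a legacy term carrying weight $\tfrac{n_0+n_1}{n_0+n}\to 0$. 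Using the linearity $g_i(b) = g_i(\beta_*) + G_i(b-\beta_*)$, a law of large numbers, the moment bound \ref{A:moment:consistency}, and $\bar\beta_{n_1}\to\beta_*$, the cross and quadratic terms vanish and $Q_n \to \mathbb{E}[g_i(\beta_*)g_i(\beta_*)'] = \Omega$. The uniform eigenvalue lower bound in \ref{A:eff:8} keeps $Q_i$ invertible and $W_i$ bounded along the path, yielding $W_i \to \Omega^{-1}$.

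Next I would feed this into the Polyak--Ruppert machinery used for Theorem~\ref{thm:1}. Writing $M_{i-1} = (\Phi_{i-1}' W_{i-1} \Phi_{i-1})^{\dagger}\Phi_{i-1}' W_{i-1}$ and using $g_i(\beta_{i-1}) = g_i(\beta_*) + G_i(\beta_{i-1}-\beta_*)$, the recursion \eqref{algo:beta:eff} becomes
\[
\beta_i - \beta_* = (\beta_{i-1}-\beta_*) - \gamma_i M_{i-1}G_i(\beta_{i-1}-\beta_*) - \gamma_i M_{i-1}g_i(\beta_*).
\]
Because $\Phi_{i-1}\to G$ (inherited from the $\Phi$ update, unchanged from S2SLS) and $W_{i-1}\to\Omega^{-1}$, the conditional drift satisfies $M_{i-1}G \to (G'\Omega^{-1}G)^{-1}G'\Omega^{-1}G = I$ and the preconditioner satisfies $M_{i-1}\to M := (G'\Omega^{-1}G)^{-1}G'\Omega^{-1}$, whose limiting noise covariance is $M\Omega M' = (G'\Omega^{-1}G)^{-1}$. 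The same averaging and martingale-CLT argument then delivers $\tfrac{1}{\sqrt n}\sum_{i=n_1+1}^n(\beta_i-\beta_*)\overset{d}{\to} N(0,(G'\Omega^{-1}G)^{-1})$. To obtain the statement for $\bar\beta_n$ itself I would split $\sqrt n(\bar\beta_n-\beta_*) = \tfrac{1}{\sqrt n}\sum_{i\le n_1}(\beta_i-\beta_*) + \tfrac{1}{\sqrt n}\sum_{i>n_1}(\beta_i-\beta_*)$ and note that the warm-up block is $\sqrt{n_1/n}\cdot O_P(1) = o_P(1)$ by the $r=1$ case of Theorem~\ref{thm:2} applied to $\mathbb{S}_1$ together with $n_1/n\to 0$. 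Weak consistency of $\beta_n$ and $\bar\beta_n$ follows along the lines of Lemma~\ref{lem:strong consistency of beta_n} but in $L_2$, which is all that $\tilde p = 1$ affords.

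The main obstacle is controlling the perturbation induced by $W_i - \Omega^{-1}$ and by the dependence of the weighting updates on the random plug-in $\bar\beta_{n_1}$. One must show that both the drift error $(M_{i-1}G - I)$ and the preconditioner error $(M_{i-1}-M)$ decay fast enough that, after averaging, the terms $\tfrac{1}{\sqrt n}\sum_i (M_{i-1}G-I)(\beta_{i-1}-\beta_*)$ and $\tfrac{1}{\sqrt n}\sum_i (M_{i-1}-M)g_i(\beta_*)$ are $o_P(1)$. These errors inherit the convergence rate of $\bar\beta_{n_1}\to\beta_*$ (of order $n_1^{-1/2}$) superimposed on the law-of-large-numbers fluctuation of $Q_n$; the sample split is essential precisely because it decouples $\bar\beta_{n_1}$ from the innovations $g_i(\beta_*)$ of the efficient phase, without which $g_i(\bar\beta_{n_1})$ and $g_i(\beta_*)$ would be correlated and neither the conditional-mean calculation nor the martingale CLT would go through. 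Verifying that this combined error is asymptotically negligible under the weaker moment condition $\tilde p = 1$ is the technical crux.
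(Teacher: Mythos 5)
Your overall strategy coincides with the paper's: reduce to the Theorem~\ref{thm:1}/Polyak--Juditsky argument with the weighting matrix now targeting $\Omega^{-1}$, use the sample split so that conditioning on $\bar\beta_{n_1}$ restores the martingale-difference structure of the second-phase innovations, dispose of the warm-up block via $n_1/n\to 0$, and invoke the eigenvalue bound in Assumption~\ref{A:eff:8} to keep the weighting matrices well conditioned. All of these ingredients appear in the paper's proof in essentially the form you describe.

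There is, however, one genuine gap. You assert ``$W_i\to\Omega^{-1}$'' and then propose to rerun the Theorem~\ref{thm:1} machinery as if $(W_i)$ were a single sequence with an almost-sure limit. But because $n_1=n_1(n)$ grows with $n$, the weighting matrices form a \emph{triangular array} $(W_{i,n})_{i\le n}$: the entire path changes when $n$ changes, so ``$\lim_{i\to\infty}W_i$'' is not a well-defined object and the SLLN/LIL-based stopping-time arguments used for Theorem~\ref{thm:1} (which rely on almost-sure statements about a fixed sequence) do not transfer. The paper flags this explicitly as the reason Theorem~\ref{thm:3} needs a different technique: it proves the $L^2$ rate $\mathbb E[\|\beta_i\|^2 1\{T_R\ge i,\,W_{[i]}\in\mathcal E(C)\}]=O(\gamma_i)$ \emph{uniformly over the class of all admissible ($\mathcal F_i$-adapted) weighting schemes}, and separately establishes the uniform-in-$n$ eigenvalue control
\begin{equation*}
\lim_{C\to\infty}\liminf_{n\to\infty}\mathbb P\bigl(W_{[n],n}\in\mathcal E(C)\bigr)=1,
\end{equation*}
via the conditioning-on-$\bar\beta_{n_1}$ device and a ULLN over the compact set $\mathcal K$ in Assumption~\ref{A:eff:8}. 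Moreover, the paper never needs (and does not prove) pointwise convergence $W_{i,n}\to\Omega^{-1}$; for the variance identification it suffices to show the Ces\`aro-mean statement $\frac1n\sum_{i>n_1}\|W_{i-1,n}^{-1}-\Omega\|^2\,1\{\cdot\}=o_P(1)$, again by the law of iterated expectations over the distribution of $\bar\beta_{n_1}$. Your plan would need to be reformulated in this uniform-in-$n$, in-probability fashion to be correct; as written, the step ``$W_i\to\Omega^{-1}$, hence Theorem~\ref{thm:1} applies with $W=\Omega^{-1}$'' does not go through.
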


Theorem~\ref{thm:3} shows that SGMM is asymptotically first-order equivalent to the standard (offline) efficient GMM estimator. Theorem~\ref{thm:4} below strengthens Theorem~\ref{thm:3} by establishing the FCLT under extra moment condition. 

\begin{thm}
\label{thm:4} 
Let Assumptions \ref{A:sample} - \ref{A:learning}, \ref{A:moment:FCLT} and \ref{A:eff:8} hold with $\tilde{p}=p$, where $p$ is defined in Assumption \ref{A:moment:FCLT}.
Then, it holds
\begin{equation*}
    \left\{ \frac 1 {\sqrt n}\sum_{i=1}^{\lfloor nr \rfloor} (\beta_i - \beta_*) \right\}_{r\in[0,1]}\rightsquigarrow (G'\Omega^{-1}G)^{-1/2} \{ \mathbb{W}_{d_\beta}(r) \}_{r \in [0, 1]}.
\end{equation*}
\end{thm}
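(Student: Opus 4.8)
The plan is to mirror the proof of Theorem~\ref{thm:2}, since the efficient algorithm \eqref{algo:siv:eff} differs from the inefficient one \eqref{algo:siv:2sls} only through the weighting-matrix update, and to track the single genuinely new ingredient: the convergence of $W_i$ to the optimal weight $\Omega^{-1}$ rather than to $(\mathbb{E}[z_iz_i'])^{-1}$. First I would invoke the reduction to $\beta_* = 0$ without loss of generality, using the fact that the updates depend on $\beta_{i-1}$ only through $\Delta_{i-1} := \beta_{i-1}-\beta_*$. I would then establish two convergence facts: $\Phi_i \to G$ almost surely, exactly as in the proof of Theorem~\ref{thm:2}, and $W_i \to \Omega^{-1}$. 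For the latter, writing $Q_i := W_i^{-1}$, the recursion \eqref{algo:beta:W:eff} is the Sherman--Morrison--Woodbury form of the running average $Q_i = (n_0+i)^{-1}\big(\text{warm-up terms} + \sum_{j=n_1+1}^i g_j(\bar\beta_{n_1})g_j(\bar\beta_{n_1})'\big)$ for $i>n_1$. Because $n_1/n \to 0$, the warm-up block contributes a vanishing fraction to this average, while the remaining sum converges to $\Omega = \var(g_i(\beta_*))$ by the law of large numbers combined with the consistency of $\bar\beta_{n_1}$ (which holds by Lemma~\ref{lem:strong consistency of beta_n} since $n_1 \to \infty$); Assumption~\ref{A:eff:8} keeps the smallest eigenvalue bounded away from zero so that $W_i$ remains bounded. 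Crucially, I would quantify the \emph{rate} of this convergence, since mere consistency is insufficient downstream.

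The core of the argument is the linearized recursion
\begin{equation*}
\Delta_i = (I - \gamma_i M_{i-1}G)\Delta_{i-1} - \gamma_i M_{i-1} g_i(\beta_*) - \gamma_i M_{i-1}(G_i - G)\Delta_{i-1},
\end{equation*}
with $M_{i-1} := (\Phi_{i-1}' W_{i-1} \Phi_{i-1})^{\dagger}\Phi_{i-1}' W_{i-1}$. Since $M_{i-1} \to (G'\Omega^{-1}G)^{-1}G'\Omega^{-1}$ and hence $M_{i-1}G \to I$, the Polyak--Juditsky martingale machinery already developed for Theorem~\ref{thm:2} applies and yields
\[
\frac{1}{\sqrt n}\sum_{i=1}^{\lfloor nr\rfloor}\Delta_i = -\frac{1}{\sqrt n}\sum_{i=1}^{\lfloor nr\rfloor}(G'\Omega^{-1}G)^{-1}G'\Omega^{-1}g_i(\beta_*) + o_P(1)
\]
uniformly in $r\in[0,1]$. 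The martingale FCLT, under Assumption~\ref{A:moment:FCLT} with the strengthened moment $\tilde p = p$, then produces a rescaled Wiener process with variance $(G'\Omega^{-1}G)^{-1}G'\Omega^{-1}\Omega\Omega^{-1}G(G'\Omega^{-1}G)^{-1}$, which collapses to $(G'\Omega^{-1}G)^{-1}$ precisely because the limiting weight is $\Omega^{-1}$. The warm-up block $i \le n_1$ contributes martingale increments carrying a different (2SLS-type) gain, but its quadratic-variation contribution is $O(n_1/n) \to 0$, so $\frac{1}{\sqrt n}\sum_{i=1}^{n_1}\Delta_i = o_P(1)$ and the block is asymptotically negligible in the functional limit.

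The main obstacle I anticipate is controlling the propagation of the weighting-matrix estimation error into the leading martingale term. Replacing $M_{i-1}$ by its limit in the display above requires $\frac{1}{\sqrt n}\sum_{i}\big(M_{i-1} - (G'\Omega^{-1}G)^{-1}G'\Omega^{-1}\big)g_i(\beta_*) = o_P(1)$, which demands a genuine rate for $W_i \to \Omega^{-1}$ and $\Phi_i \to G$, not just consistency. This is further complicated by the fact that $W_i$ is built from $g_j(\bar\beta_{n_1})$ rather than $g_j(\beta_*)$, so the \emph{single} estimation error $\bar\beta_{n_1}-\beta_*$ enters every increment of $Q_i$ and is shared across the entire efficient stage rather than averaging out. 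Showing that this common error contributes only negligibly, uniformly over $n_1 < i \le n$, is the delicate step, and it is here that the conditions $n_1 \to \infty$, $n_1/n \to 0$, and the higher-moment bound $\tilde p = p$ in Assumption~\ref{A:eff:8} do the essential work.
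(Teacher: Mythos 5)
Your plan follows the same broad route as the paper's proof: reduce to $\beta_*=0$, couple $\beta_i$ with a linearized process, discard the warm-up block $i\le n_1$ using $n_1/n\to 0$, apply a martingale FCLT to the leading term, and kill the weighted remainder $\sum_i w_i^m\xi_i$ (the paper uses Burkholder's inequality under the $\tilde p=p$ moment condition, exactly as in Theorem~\ref{thm:2}). You also correctly identify the genuinely new difficulty: the weighting matrix is built from $g_j(\bar\beta_{n_1})$, so the single error $\bar\beta_{n_1}-\beta_*$ contaminates every increment of $Q_i=W_i^{-1}$ and does not average out. But you flag this as "the delicate step" without supplying the idea that resolves it, and that idea is the substance of the proof. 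The paper's device is to condition on $\bar\beta_{n_1}=\beta$: given $\bar\beta_{n_1}$, the efficient-stage summands $g_j(\bar\beta_{n_1})g_j(\bar\beta_{n_1})'$ are i.i.d.\ copies of $g_j(\beta)g_j(\beta)'$, so one proves bounds \emph{uniformly in} $\beta$ over the compact set $\mathcal{K}$ of Assumption~\ref{A:eff:8} (a ULLN for $\lambda_{\min}$ and $\lambda_{\max}$ of $\frac1k\sum_j g_j(\beta)g_j(\beta)'$, plus the uniform lower bound $\inf_{\beta\in\mathcal K}\lambda_{\min}(\mathbb{E}[g_i(\beta)g_i(\beta)'])>0$) and then integrates against the law of $\bar\beta_{n_1}$, which concentrates on $\mathcal K$ by consistency. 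This yields the key fact that $\lambda_{\min}(W_{j,n})$ and $\lambda_{\max}(W_{j,n})$ are bounded away from $0$ and $\infty$ with probability tending to one \emph{uniformly over the triangular array} indexed by $n$ (the dependence of $W_{i,n}$ on $n$ through $n_1(n)$ is why the a.s.\ SLLN/LIL arguments of Theorem~\ref{thm:2} no longer apply, and why the paper introduces the class of admissible weighting schemes to get uniform $L^2$ rates for $\beta_i$). Without this step your stopping-time truncations and the $L^2$ rate $\mathbb{E}[\|\beta_i\|^2 1\{\cdot\}]=O(\gamma_i)$, which all your subsequent bounds rely on, are not available.

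A secondary point: you somewhat mislocate where a rate for $W_i\to\Omega^{-1}$ is needed. Writing $M:=(G'\Omega^{-1}G)^{-1}G'\Omega^{-1}$, the term $n^{-1/2}\sum_i(M_{i-1}-M)g_i(\beta_*)$ is itself a martingale (each $M_{i-1}$ is $\mathcal F_{i-1}$-measurable), so it is controlled through the conditional-variance condition of the martingale FCLT; all that is required is Ces\`aro $L^2$ convergence $\frac1n\sum_i\mathbb{E}\|W_{i-1}^{-1}-\Omega\|^2\to 0$, not a uniform rate. The paper obtains this by decomposing $\|W_i^{-1}-\Omega\|$ into the warm-up contribution $O(n_1/i)$, the empirical fluctuation $\frac1i\|\sum_j(g_jg_j'-\Omega_{\bar\beta_{n_1}})\|$ (bounded via the conditioning argument above), and the bias $\|\Omega_{\bar\beta_{n_1}}-\Omega\|=o_P(1)$, and by exploiting local Lipschitz continuity of $(\Phi,W^{-1})\mapsto(\Phi'W\Phi)^{-1}\Phi'W\Omega W\Phi(\Phi'W\Phi)^{-1}$ near $(G,\Omega)$. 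So the missing ingredient is not a sharper rate but the conditioning-plus-uniformity argument that makes even the consistency statements valid for the triangular array.
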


Theorem~\ref{thm:4} suggests robust online confidence regions as those presented in Subsection~\ref{sec:robust-CBs} below.

\section{Inference}\label{sec:inference}

In this section, we first present two simple methods to construct fast online confidence regions. We then show that a couple of well-known statistical tests can be seamlessly integrated within the SGMM framework.\footnote{The purpose of this section is to showcase the usefulness of our approach. It is a topic for future research to investigate a variety of inference problems more extensively.}

\subsection{Online Confidence Regions}\label{sec:robust-CBs}

We propose two simple online confidence regions: the plug-in (PI) base approach and the random scaling (RS) approach.

\subsubsection{Plug-in consistent online confidence regions}
%
As a by-product of the efficient algorithm, $W_n$ defined in \eqref{algo:beta:W:eff} consistently estimates $\Omega^{-1}=(\var[g_i(\beta_*)])^{-1}$, and the $\mathcal{V}_n=(\Phi_{n}' W_{n} \Phi_{n})^{\dagger}$ defined in Remark~\ref{rem:woodbury:2} consistently estimate the asymptotic efficient variance $(G' \Omega^{-1} G)^{-1}$ in Theorem~\ref{thm:3}. Hence, we can conduct asymptotic optimal inference using $\mathcal{V}_n =(\Phi_{n}' W_{n} \Phi_{n})^{\dagger}$, and the resulting inference will be called ``plug-in inference''.  In particular, we can bulid the optimal plug-in online confidence regions based on
\begin{equation}\label{Wald-eff}
\mathcal{W}_{eff}:=n(\bar{\beta}_n - \beta_*)'(\Phi_{n}' W_{n} \Phi_{n})(\bar{\beta}_n - \beta_*)
\overset{d}{\to}\mathcal{X}_{d_{\beta}}.
\end{equation}

\subsubsection{Random scaling robust online confidence regions}
For both inefficient S2SLS and efficient SGMM, 
given the FCLT Theorems~\ref{thm:2} and ~\ref{thm:4}, we can apply the random scaling method proposed in 
\cite{lee2021fast, lee2022fastqr}, which is based on the following robust, inconsistent long-run variance (LRV) estimate idea of \citet{kiefer2000simple,velasco2001edgeworth,gupta2021robust} for $\Avar(\bar \beta)$: 
\begin{equation}\label{HAC-kernel}
\widehat V_{rs,n}:=\frac{1}{n}\sum_{s=1}^n\left(\frac{1}{\sqrt{n}}\sum_{i=1}^s(\beta_i-\bar\beta_n)\right)\left(\frac{1}{\sqrt{n}}\sum_{i=1}^s(\beta_i-\bar\beta_n)\right)'.
\end{equation}
See \cite{lee2021fast} for simple online version to compute $\widehat V_{rs,n}$ sequentially.
Then a robust online confidence region for $\beta_{*}$ can be constructed using the following statistic:
\begin{align}\label{Wald-rs}
\mathcal W_{rs}  &  \equiv\frac{n}{d_{\beta}}(\bar
{\beta}_{n}-\beta_{*})^{\prime}\widehat{V}_{rs,n}^{-1}(\bar{\beta}%
_{n}-\beta_{*})\nonumber\\
&  \overset{d}{\to}\frac{1}{d_{\beta}}\mathbb{W}_{d_{\beta}}(1)'
\left(\int_0^1[\mathbb{W}_{d_{\beta}}(r) -r \mathbb{W}_{d_{\beta}}(1)]
[\mathbb{W}_{d_{\beta}}(r) -r \mathbb{W}_{d_{\beta}}(1)]'dr\right)^{-1}\mathbb{W}_{d_{\beta}}(1),
\end{align}
where the critical values can be simulated as in \citet{kiefer2000simple}.

We have implemented online confidence sets using $\mathcal W_{eff}$ and $\mathcal W_{rs}$ versions in Monte Carlo experiments and real data applications below. 

\subsection{Online Endogeneity Tests}\label{sec:DWH-test}

To further illustrate the usefulness of our inference method, we now consider an endogeneity test focusing on only a subset  $\beta_{\text{sub}}$ of $\beta_*$. Under the null, the probability limits of OLS and IV estimators are the same; under the alternative, the IV estimator is still consistent but OLS is not. 
  Let $\alpha_{\sub}$ denote the probability limit of OLS for the subvector. The null hypothesis is then 
\begin{align}\label{null:DWH}
    \mathbb H_0:  \alpha_{\sub}= \beta_{\sub}.
\end{align}

We propose an online algorithm to implement the  Durbin-Wu-Hausman (DWH) test. Let $\beta_i$ and $\alpha_i$ respectively denote the stochastic sequences of the IV-estimator and OLS. They are jointly updated as follows:
\begin{equation}
\begin{pmatrix}
\beta_i\\
\alpha_i
\end{pmatrix} =\begin{pmatrix}
\beta_{i-1}\\
\alpha_{i-1}
\end{pmatrix}   - \gamma_i\begin{pmatrix}
(\Phi_{i-1}' W_{i-1} \Phi_{i-1})^{\dagger} \Phi_{i-1}' W_{i-1} z_i (x_i'\beta_{i-1}-y_i)\\
x_i (x_i'\alpha_{i-1}-y_i)
\end{pmatrix}.
\end{equation}
\begin{remark}
 Note that $\alpha_i$ follows the usual SGD path for M-estimation. To improve the finite-sample performance, we could have multiplied $x_i(x_i'\alpha_{i-1}-y_i)$ by $(\frac{1}{n_0+i-1}\sum_{j\le n_0} x_{0j}x_{0j}' + \frac{1}{n_0+i-1}\sum_{j\le i-1} x_j x_j')^{-1}$.
\end{remark}
Let $\bbeta_i=(\beta_i',\alpha_i')'$ denote the vector stacking all elements of the updating sequences and let $\bar\bbeta_n=\frac{1}{n}\sum_{i=1}^n\bbeta_i$. Let $$\bbeta_{*}=
\begin{pmatrix}
\beta_*\\
(\mathbb Ex_ix_i')^{-1}\mathbb E x_i y_i
\end{pmatrix}.
$$
We show that under either the null or the alternative, for some covariance matrix $\Gamma$, 
 \begin{equation*}
        \left\{\frac 1 {\sqrt {n}}  \sum_{i=1}^{\lfloor n r\rfloor}(\bbeta_i - \bbeta_*) \right\}_{r\in[0,1]}\rightsquigarrow \Gamma^{1/2}\{ \mathbb{W}_{2d_\beta}(r)\}_{r\in [0,1]},
    \end{equation*}    
The above FCLT allows us to construct a simple online DWH test for endogeneity.
The test statistic has to be properly scaled using the asymptotic variance. While the  scaling asymptotic variance in the DWH test stems from the idea of estimation efficiency, it is computationally demanding to implement its exact form via stochastic approximation in the online context. We adopt the above random scaling robust LRV estimate $\widehat{V}_{rs,n}$ in \eqref{HAC-kernel} instead. In particular we use
$$
\widehat{V}_{n}:=\frac{1}{n}\sum_{s=1}^n\left(\frac{1}{\sqrt{n}}\sum_{i=1}^s(\bbeta_i-\bar\bbeta_n)\right)\left(\frac{1}{\sqrt{n}}\sum_{i=1}^s(\bbeta_i-\bar\bbeta_n)\right)'.
$$
See \cite{lee2021fast} for updating ($\bar\bbeta_n,~\widehat V_{n}$) sequentially.

Let $\bbeta_{\sub,i}$ denote the subvector of $\bbeta_i$, corresponding to $\beta_{\sub}$ and $\alpha_{\sub}$.  In the algorithm, $\bbeta_i$, $\Phi_i$ and $W_i$ are potentially high-dimensional objects. Instead of sequentially update the full vector/matrix $\bar\bbeta_i$ and $\widehat{V}_{i}$, we just need to update the subvector $\bar\bbeta_{\sub,i}$ and its corresponding submatrix $\widehat{V}_{\sub,i}$.

Note that we can express $\bar\bbeta_{\sub,n}=(\bar\beta_{\sub,n}',\bar\alpha_{\sub,n}')'$ corresponding to the online IV and OLS estimators. The online DWH test is then conducted by comparing $\bar\beta_{\sub,n}$ and $\bar\alpha_{\sub, n}$, which can be expressed as
$$
\bar\beta_{\sub,n}-\bar\alpha_{\sub, n}= \Xi \bar\bbeta_{\sub,n},\text{ where } \Xi =(I, -I).
$$
Let $q$ denote the number of restrictions in the null hypothesis \eqref{null:DWH}.
The pivotal statistic is now defined as 
$$
\mathcal{S}_{rs}:= \frac{n}{q}(\bar\beta_{\sub,n}-\bar\alpha_{\sub, n})'(\Xi\widehat{V}_{\sub,n}\Xi')^{-1} (\bar\beta_{\sub,n}-\bar\alpha_{\sub, n}).
$$
The asymptotic distribution of the pivotal statistic can be derived using the FCLT of the stacked vector. This implies the asymptotic null distribution of the pivotal statistic, stated as follows. 

\begin{cor}\label{cor:dwh}
Let assumptions for Theorem~\ref{thm:2} and Theorem~\ref{thm:4} hold, respectively. Under the null hypothesis that $\alpha_{\sub}=\beta_{\sub}$, we have:
$$
\mathcal{S}_{rs}  \rightarrow\!_{d} ~\frac{1}{q} \mathbb{W}_{q}(1)'\left(\int_0^1[\mathbb{W}_{q}(r) -r \mathbb{W}_{q}(1)][\mathbb{W}_{q}(r) -r \mathbb{W}_{q}(1)]'dr\right)^{-1}\mathbb{W}_{q}(1),
$$
where $q = \mathrm{dim}(\alpha_{\sub})=\mathrm{dim}(\beta_{\sub})$.
\end{cor}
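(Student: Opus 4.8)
The plan is to reduce the statement to a continuous-mapping argument applied to the joint FCLT for the stacked process stated immediately above the corollary. Write $S_n(r) := n^{-1/2}\sum_{i=1}^{\lfloor nr\rfloor}(\bbeta_{\sub,i}-\bbeta_{\sub,*})$ for the subvector partial-sum process, where $\bbeta_{\sub,*}=(\beta_{\sub}',\alpha_{\sub}')'$ and $\alpha_{\sub}$ is the OLS probability limit of the subvector. Restricting the joint FCLT to the coordinates indexed by the subvector gives $S_n \rightsquigarrow \Gamma_{\sub}^{1/2}\mathbb{W}_{2q}$ in the uniform topology on $[0,1]$, where $\Gamma_{\sub}$ is the corresponding principal submatrix of $\Gamma$. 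First I would apply the fixed linear map $\Xi=(I,-I)$: since $\Xi$ is continuous, $\Xi S_n \rightsquigarrow \Xi\Gamma_{\sub}^{1/2}\mathbb{W}_{2q}$. Setting $\Psi := \Xi\Gamma_{\sub}\Xi'$ and noting that $\Xi\Gamma_{\sub}^{1/2}\mathbb{W}_{2q}(\cdot)$ is a continuous centered Gaussian process with instantaneous covariance $\Psi$, it is equal in law as a process to $B(\cdot):=\Psi^{1/2}\mathbb{W}_q(\cdot)$.

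Next I would express the two ingredients of $\mathcal{S}_{rs}$ as continuous functionals of $\Xi S_n$. Under $\mathbb H_0$ we have $\Xi\bbeta_{\sub,*}=\beta_{\sub}-\alpha_{\sub}=0$, so $\sqrt n(\bar\beta_{\sub,n}-\bar\alpha_{\sub,n})=\Xi S_n(1)\rightsquigarrow B(1)$; this is the only place the null is used, and it is exactly what kills the $\sqrt n\,\Xi\bbeta_{\sub,*}$ drift that would otherwise diverge and render the test consistent under the alternative. For the denominator, the identity $n^{-1/2}\sum_{i=1}^s(\bbeta_{\sub,i}-\bar\bbeta_{\sub,n})=S_n(s/n)-\tfrac{s}{n}S_n(1)$ exhibits $\Xi\widehat V_{\sub,n}\Xi'$ as a Riemann-sum approximation of a bridge functional; this quantity is invariant to recentering of the $\bbeta_{\sub,i}$, so it has the same limit under the null and the alternative. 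By the continuous mapping theorem,
\begin{equation*}
\Xi\widehat V_{\sub,n}\Xi'\rightsquigarrow \int_0^1[B(r)-rB(1)][B(r)-rB(1)]'\,dr,
\end{equation*}
and joint convergence of $(\Xi S_n(1),\,\Xi\widehat V_{\sub,n}\Xi')$ holds because both are functionals of the single weakly converging process $\Xi S_n$.

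Finally I would assemble the quadratic form and verify pivotality. Writing $B(r)=\Psi^{1/2}\mathbb{W}_q(r)$ and factoring the symmetric root $\Psi^{1/2}$ out of $B(1)$, out of $B(r)-rB(1)=\Psi^{1/2}[\mathbb{W}_q(r)-r\mathbb{W}_q(1)]$, and hence out of the limiting integral, the factors $\Psi^{1/2}$ cancel in
\begin{equation*}
\tfrac1q\, B(1)'\Big(\!\int_0^1[B(r)-rB(1)][B(r)-rB(1)]'\,dr\Big)^{-1}B(1),
\end{equation*}
leaving precisely the stated nuisance-free limit with $\mathbb{W}_q$ in place of $B$. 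The continuous mapping theorem then delivers $\mathcal S_{rs}\overset{d}{\to}$ the claimed distribution, provided the limiting random matrix is almost surely invertible.

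I expect the main obstacle to be the regularity needed for that last inversion, namely showing $\Psi=\Xi\Gamma_{\sub}\Xi'$ is nonsingular, so that $B$ is a nondegenerate $q$-dimensional Brownian motion, the bridge integral is almost surely positive definite, and matrix inversion is continuous at the limit. Unlike the classical Hausman statistic, the random-scaling denominator estimates the long-run variance of the difference $\beta_{\sub}-\alpha_{\sub}$ directly, so no efficiency-based variance cancellation is invoked; I would instead deduce nonsingularity of $\Psi$ from the nondegeneracy of the joint influence functions of the IV and OLS updates on the subvector implied by the assumptions behind Theorems~\ref{thm:2} and~\ref{thm:4}. The remaining steps—tightness, the Skorokhod/uniform-topology bookkeeping, and the Riemann-sum convergence—are routine given the FCLT.
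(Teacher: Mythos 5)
Your proposal is correct and follows essentially the same route as the paper: both reduce the statistic to a continuous functional of the linear image (your $\Xi$ applied to the subvector, the paper's selection matrix $S$ applied to the full stacked process) of the joint FCLT limit, use the null only to annihilate the drift $\sqrt n\,\Xi\bbeta_{\sub,*}$, identify $\Xi\widehat V_{\sub,n}\Xi'$ with the bridge integral of the partial-sum process, and let the square root of $S\Gamma S'$ cancel in the quadratic form via the continuous mapping theorem. Your explicit attention to the nonsingularity of $\Psi=\Xi\Gamma_{\sub}\Xi'$ is a point the paper passes over silently, but otherwise the arguments coincide.
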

Critical values for testing linear restrictions are given in \citet[Table II]{kiefer2000simple}. 
%

\subsection{Online Sargan-Hansen Tests}\label{Jtest}

As a straightforward corollary to the main result in Section \ref{sec:eff:est}, which proposes an online efficient GMM estimation, we also implement the test for the  overidentifying moment restrictions. Let $\hat{g}_{n_{1}}=\frac{1}{n_{1}}\sum_{i=1}^{n_{1}}g_{i}\left(\bar{\beta}_{n_{1}}\right)$
and add the following to the end of the efficient algorithm \eqref{algo:siv:eff}: from $i = n_1 + 1$ until $i = n$, 
\begin{eqnarray*}
\hat{g}_{i} & = & \frac{i-1}{i}\hat{g}_{i-1}+\frac{1}{i}g_{i}\left(\bar{\beta}_{i}\right).
\end{eqnarray*}
Then, we obtain the conventional chi-squared test for the overidentifying restrictions. 

\begin{cor}\label{overid:test}
Let Assumptions in Theorem~\ref{thm:3} hold with $d_g>d_{\beta}$. Then we have:
\[
\mathcal {J}_{n}:=n\hat{g}_{n}'W_{n}\hat{g}_{n}\overset{d}{\to}\mathcal{X}_{d_{g}-d_{\beta}},
\]
as $n\to\infty$, where $W_n$ is defined in \eqref{algo:beta:W:eff}.
\end{cor}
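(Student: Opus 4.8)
The plan is to write $\mathcal J_n=(\sqrt n\,\hat g_n)'W_n(\sqrt n\,\hat g_n)$ and reduce it to a quadratic form in an asymptotically Gaussian vector with idempotent weight. First I would record the consistency $W_n\overset{p}{\to}\Omega^{-1}$: the update \eqref{algo:beta:W:eff} is the Sherman--Morrison--Woodbury form of the inverse of $\tfrac1{n_0+n}(\eta_0 I+\sum_i g_i(\bar\beta_{n_1})g_i(\bar\beta_{n_1})')$, so since $\bar\beta_{n_1}\to\beta_*$ and $\mathbb E[g_i(\beta_*)g_i(\beta_*)']=\Omega$, a uniform law of large numbers on the compact set $\mathcal K$ of Assumption~\ref{A:eff:8} (which keeps $\lambda_{\min}$ bounded away from $0$) yields $W_n\overset{p}{\to}\Omega^{-1}$. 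By Slutsky it then suffices to show $\sqrt n\,\hat g_n\rightsquigarrow N(0,\Sigma)$ with $\Omega^{-1}\Sigma$ idempotent of rank $d_g-d_\beta$, after which $\mathcal J_n\overset{d}{\to}\chi^2_{d_g-d_\beta}$ is the standard quadratic-form-in-a-normal fact.

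I would obtain this limit in two steps. The easy half is an offline reduction to the plug-in moment at the \emph{final} iterate, $\hat g_n^{\,\mathrm{off}}:=\tfrac1n\sum_{i=1}^n g_i(\bar\beta_n)$. Exact linearity $g_i(\beta)=G_i\beta+H_i$ gives $\sqrt n\,\hat g_n^{\,\mathrm{off}}=\tfrac1{\sqrt n}\sum_i g_i(\beta_*)+(\tfrac1n\sum_i G_i)\,\sqrt n(\bar\beta_n-\beta_*)$; Theorem~\ref{thm:4} at $r=1$, together with the martingale representation behind it, supplies the joint limit $(\tfrac1{\sqrt n}\sum_i g_i(\beta_*),\,\sqrt n(\bar\beta_n-\beta_*))\rightsquigarrow(\mathbb B(1),-B\,\mathbb B(1))$, where $\mathbb B$ is Brownian with $\var(\mathbb B(r))=r\Omega$ and $B:=(G'\Omega^{-1}G)^{-1}G'\Omega^{-1}$. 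With $\tfrac1n\sum_i G_i\to G$ this gives $\sqrt n\,\hat g_n^{\,\mathrm{off}}\rightsquigarrow(I-P)\mathbb B(1)$ with $P:=GB$, so $\Sigma=(I-P)\Omega(I-P)'$ and $\Omega^{-1}\Sigma=I-\Omega^{-1}G(G'\Omega^{-1}G)^{-1}G'$ is idempotent of rank $d_g-d_\beta$ — exactly the required structure.

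The crux, which I expect to be the main obstacle, is the online-to-offline reduction $\sqrt n(\hat g_n-\hat g_n^{\,\mathrm{off}})=o_P(1)$. Unrolling the $\hat g_i$ recursion, $n(\hat g_n-\hat g_n^{\,\mathrm{off}})=\sum_{i=1}^{n_1}G_i(\bar\beta_{n_1}-\bar\beta_n)+\sum_{i=n_1+1}^n G_i(\bar\beta_i-\bar\beta_n)$. The warm-up block is $O_P(\sqrt{n_1/n})=o_P(1)$ after scaling by $n^{-1/2}$, using $n_1/n\to0$ and $\bar\beta_{n_1}-\beta_*=O_P(n_1^{-1/2})$; and $G_i$ may be replaced by $G$ in the main block since $\tfrac1{\sqrt n}\sum_i(G_i-G)(\bar\beta_i-\bar\beta_n)$ is a martingale-type array controlled by $\mathbb E\|G_i\|^2<\infty$ (Assumption~\ref{A:moment:consistency}) and $\bar\beta_i-\beta_*=O_P(i^{-1/2})$. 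What remains is to show $G\cdot\tfrac1{\sqrt n}\sum_{i=n_1+1}^n(\bar\beta_i-\bar\beta_n)=o_P(1)$, and this is where the online averaging is genuinely delicate. Writing $U_n(r):=\tfrac1{\sqrt n}\sum_{i\le\lfloor nr\rfloor}(\beta_i-\beta_*)$ so that $\bar\beta_i-\beta_*=\tfrac{\sqrt n}{i}U_n(i/n)$, the partial-average sum is a Riemann approximation to $\int_0^1 r^{-1}U_n(r)\,dr$, which must be shown to agree to leading order (after premultiplication by $G$) with the endpoint $\sqrt n(\bar\beta_n-\beta_*)=U_n(1)$. Reconciling these — in particular controlling the $r^{-1}$-weighted functional near $r=0$, which I would handle by truncating at $r=\delta$ and bounding the $[0,\delta]$ contribution with the $L^2$ maximal inequality already used for Theorem~\ref{thm:4} — is the heart of the matter, since it is exactly here that evaluating each moment at its own path iterate $\bar\beta_i$, rather than at a single consistent estimate, must be shown not to disturb the offline projection structure.
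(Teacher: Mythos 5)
Your offline half is sound and, for what it is worth, already goes beyond the paper, whose entire proof of this corollary is the sentence ``It follows directly from Theorem~\ref{thm:3}'': the consistency $W_n\overset{p}{\to}\Omega^{-1}$, the degenerate joint limit $\sqrt n(\bar\beta_n-\beta_*)=-B\cdot n^{-1/2}\sum_i g_i(\beta_*)+o_P(1)$ with $B=(G'\Omega^{-1}G)^{-1}G'\Omega^{-1}$, and the idempotency/rank computation for $\Omega^{-1}\Sigma$ are all correct. The genuine gap is precisely the step you flag as the crux: the online-to-offline reduction $\sqrt n(\hat g_n-\hat g_n^{\,\mathrm{off}})=o_P(1)$ is \emph{false}, so no truncation near $r=0$ or maximal inequality can close it. In your notation the main block reduces to $G\int_0^1 r^{-1}U_n(r)\,dr+o_P(1)$, and since $U_n\rightsquigarrow M:=-B\,\mathbb{B}$, where $\mathbb{B}$ is the Brownian limit of $n^{-1/2}\sum_{i\le n\cdot}g_i(\beta_*)$ with $\var(\mathbb{B}(1))=\Omega$, integration by parts gives $\int_0^1 r^{-1}M(r)\,dr=\int_0^1\log(1/r)\,dM(r)=:L_M$. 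This does \emph{not} agree with the endpoint $M(1)$: $\var(L_M)=2\var(M(1))$ and $\cov(L_M,M(1))=\var(M(1))$, so $L_M-M(1)\sim N\bigl(0,(G'\Omega^{-1}G)^{-1}\bigr)$ is nondegenerate. Evaluating each moment at its own running average $\bar\beta_i$ therefore leaves a non-negligible $G\times O_P(1)$ residual in $\sqrt n\,\hat g_n$.

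Carrying the computation through, $\sqrt n\,\hat g_n\rightsquigarrow(I-P)\mathbb{B}(1)+P\bigl(\mathbb{B}(1)-L\bigr)$ with $P=GB$ and $L=\int_0^1\log(1/r)\,d\mathbb{B}(r)$; the two pieces are orthogonal in the $\Omega^{-1}$ inner product and independent (since $\cov(\mathbb{B}(1)-L,\mathbb{B}(1))=\Omega-\Omega=0$ and $\var(\mathbb{B}(1)-L)=\Omega$), and the second contributes an additional independent $\chi^2_{d_\beta}$, so the quadratic form tends to $\chi^2_{d_g}$ rather than $\chi^2_{d_g-d_\beta}$. In other words, the stated limit is what one obtains for the offline plug-in $\hat g_n^{\,\mathrm{off}}=n^{-1}\sum_{i}g_i(\bar\beta_n)$ --- for which your argument is essentially complete --- but for the recursion as written, with $g_i(\bar\beta_i)$ inside the average, the reduction you need fails and the limit changes. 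Your Step~3 cannot be repaired as stated; one must either evaluate all moments at a single $\sqrt n$-consistent terminal estimate, or account for the extra $P(\mathbb{B}(1)-L)$ component in the reference distribution. The paper's one-line proof does not engage with this issue at all.
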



\section{Monte Carlo Experiments}\label{sec:Monte-Carlo}

In this section, we investigate the numerical performance of the SGMM estimator via Monte Carlo experiments. Initially, we discuss the process of selecting the learning rate $\gamma_1$, which will be useful when working with a real data set.

\subsection{Selection of the Learning Rate in Applications}\label{subsec:Selection_gamma0}

In this section, we describe a rule of thumb regarding how to choose $\gamma_i = \gamma_0 i^ {-a}$.
Suppose that $a \in (1/2,1)$ is fixed at a given constant (in the examples reported below, we set $a=0.501$). Then, it remains to choose $\gamma_0 > 0$, that is, 
the initial value of the learning rate.
Recall that we have the initialization sample $\mathbb S_0$ with sample size $n_0$ to compute 
$\beta_0$,  $\Phi_0$ and $W_0$ and start with the first update as   
\begin{align}\label{sgmmm:beta1}
    \beta_1 &= \beta_{0} - \gamma_1 (\Phi_{0}' W_{0} \Phi_{0})^{\dagger} \Phi_{0}' W_{0} (G_1\beta_0 + H_1), 
\end{align}
where $\gamma_1 = \gamma_0$.
We first define
\begin{align}\label{sgmmm:gamma0:def:Psi}
\Psi_0(\alpha) := \text{quantile}_{1-\alpha} \{ d_{\beta}^{-1}  \| (\Phi_{0}' W_{0} \Phi_{0})^{\dagger} \Phi_{0}' W_{0} G_{0i} \|_2: G_{0i} \in \mathbb S_0;\;  i =1, \ldots, n_0 \},
\end{align}
where $\| \cdot \|_2$ is the spectral norm. 
We propose to use 
\begin{align}\label{sgmmm:gamma0:def}
   \gamma_0 = \frac{1}{\Psi_0(\alpha)}, 
\end{align}
where $\alpha$ is a predetermined quantile level (e.g., $\alpha = 0.5$).
The rational behind this rule of thumb is that we choose $\gamma_0$ small enough 
such that it is likely that the $\beta_i$ path is not explosive when $i$ is relatively small.  


\subsection{Simultation Results}
We consider the following data generating process as a baseline model:
\begin{align}\label{eq:sim_reg}
    y_i = x_i'\beta_* + \varepsilon_i ~~~\mbox{ for } i=1,\ldots,n, 
\end{align}
where $x_i$ is a $p$-dimensional vector of regressors, with the first element being endogenous. There exists a $q$-dimensional vector of exogenous variables $z_i$, which follows a multivariate normal distribution $N(0,\Sigma)$. The $(i,j)$ element of $\Sigma$ is set to be $\Sigma_{i,j}:=\rho^{|i-j|}$. The endogenous regressor $x_{i,1}$ is generated as follows: for some $\underline{p} \le p$ and $\underline{q} \le q$,
\begin{align}
    x_{i,1} = 0.1 \times \sum_{j=2}^{\underline{p}} x_{i,j} + 0.5 \times \sum_{j=\underline{p}}^{\underline{q}} z_{i,j} + \nu_i,
\end{align}
where $x_{i,j} = z_{i,j-1}$ for $j=2,\ldots,\underline{p}$ and $\nu_i \sim N(0,1)$. Finally, the error term in \eqref{eq:sim_reg} is generated by
\begin{align}
    \varepsilon_i = \sigma_i \cdot (\nu_i + \eta_i),
\end{align}
where $\sigma_i = 5 \cdot \exp(z_{i,\underline{q}})$ and $\eta_i\sim N(0,1)$. Therefore, the model allows for both heteroskedasticity and endogeneity. 

We consider four different sample sizes $n=\{10^4, 10^5, 10^6,10^7\}$. We set the correlation coefficient of $z_i$ as $\rho=0.5$ and the true regression coefficients as $\beta_*=(1,\ldots,1)$. The dimensions of $x_i$ and $z_i$ are set to $(p,q)=(5,20)$ and $(10,25)$ with $(\underline{p},\underline{q})= (5,20)$. Therefore, we conduct the Monte Carlo experiments over 8 different designs. We replicate each design 1,000 times to compute the performance statistics. 

The simulations are conducted using the Graham cluster of the Digital Research Alliance of Canada, which consists of several Intel CPUs (Broadwell, Skylake, and Cascade Lake) operating at frequencies between 2.1GHz and 2.5GHz. The memory budget is set to 64 gigabytes of RAM.

Tables \ref{tb: (p,q)=(5,20)}--\ref{tb: (p,q)=(10,25)} summarize the simulation results. We estimate the model using two different weight schemes, as described in Sections \ref{sec:baseline-weight} and \ref{sec:eff:est}. We denote them as S2SLS and SGMM, respectively. To compare the performance, we also estimate the model using the offline counterparts: 2SLS and GMM through R packages \texttt{ivreg} (CRAN version 0.6.2) and \texttt{gmm} (CRAN version 1.7.0), respectively. 

For S2SLS and SGMM, we need to set some tuning parameters and initial values. The learning rate $\gamma_i \equiv \gamma_0 i^{-a}$ is set with $a=0.501$ and $\gamma_0$ as the rule of thumb method described in section \ref{subsec:Selection_gamma0}. This size of an initialization sample is set to $n_0=1000$. Using the initialization sample, we estimate the initial value $\beta_0$ by 2SLS, $\Phi_0=n_0^{-1}\sum_{j=1}^{n_0} G_{0j}$, and $W_0 = (n_0^{-1}\sum_{j=1}^{n_0} z_{0j}z_{0j}')^{-1}$. Finally, we fix $n_1=10\sqrt{n}$ for SGMM. Two alternative methods for inference are considered: SGMM RS
(specifically, as in $\mathcal W_{rs}$ in \eqref{Wald-rs})
and SGMM PI, respectively, refer to random scaling (RS) and plug-in (PI) inference with the same point estimate SGMM.

In the tables, we focus on the coefficient of the endogenous regressor $x_{i,1}$ and report the following performance statistics: root mean square error (RMSE), average bias (Bias), standard deviation (SD), coverage probability of the 95\% confidence interval (Coverage Prob), the average confidence interval length (CI Length), and the average computation time in seconds (Time). 

\begin{table}[th]
\caption{Simulation Results with $(p,q)= (5,20)$} \label{tb: (p,q)=(5,20)}
\begin{threeparttable}
\begin{tabular}{l c c c c c S[table-format=3.2]}
\toprule
& RMSE & Bias & SD & Coverage Prob & CI Length & \mbox{Time (sec.)}\\
\midrule
\underline{$n=10^4$} \\
2SLS & 0.06833 & 0.00165 & 0.06831 & 0.945 & 0.26464 & 0.08 \\
GMM & 0.05858 & 0.00150 & 0.05856 & 0.942 & 0.22480 & 2.84 \\
S2SLS & 0.07004 & 0.00109 & 0.07003 & 0.955 & 0.34386 & 0.09 \\
SGMM RS & 0.06946 & 0.00314 & 0.06939 & 0.954 & 0.34677 & 0.20 \\
SGMM PI & 0.06946 & 0.00314 & 0.06939 & 0.875 & 0.20418 & 0.19 \\
\underline{$n=10^5$} \\
2SLS & 0.02058 & -0.00002 & 0.02058 & 0.963 & 0.08491 & 0.96 \\
GMM & 0.01805 & -0.00052 & 0.01804 & 0.957 & 0.07479 & 22.96 \\
S2SLS & 0.02092 & -0.00018 & 0.02092 & 0.955 & 0.11270 & 0.82 \\
SGMM RS & 0.01896 & -0.00064 & 0.01895 & 0.958 & 0.10337 & 1.93 \\
SGMM PI & 0.01896 & -0.00064 & 0.01895 & 0.940 & 0.07334 & 1.90 \\
\underline{$n=10^6$} \\
2SLS & 0.00706 & -0.00004 & 0.00706 & 0.943 & 0.02693 & 13.44 \\
GMM & 0.00625 & -0.00022 & 0.00625 & 0.935 & 0.02386 & 263.67 \\
S2SLS & 0.00706 & -0.00006 & 0.00706 & 0.942 & 0.03511 & 8.17 \\
SGMM RS & 0.00630 & -0.00023 & 0.00630 & 0.950 & 0.03163 & 19.96 \\
SGMM PI & 0.00630 & -0.00023 & 0.00630 & 0.934 & 0.02374 & 19.65 \\
\underline{$n=10^7$}\\
2SLS & 0.00223 & 0.00009 & 0.00223 & 0.943 & 0.00851    & 150.33 \\
GMM & NA & NA & NA & NA & NA                          & NA \\
S2SLS & 0.00223 & 0.00008 & 0.00223 & 0.941 & 0.01110   & 77.23 \\
SGMM RS & 0.00199 & 0.00009 & 0.00199 & 0.937 & 0.00977 & 193.20 \\
SGMM PI & 0.00199 & 0.00009 & 0.00199 & 0.935 & 0.00754 & 189.89 \\
\bottomrule
\end{tabular}
\begin{tablenotes}
\footnotesize
\item \emph{Notes.} These results are based on 1,000 replications. `RMSE', `Bias', and `SD' are obtained over simulation draws. `Coverage Prob' denotes coverage probability computed for the 95\% confidence interval. `CI Length' denotes the average length of the confidence interval. GMM does not meet the memory budget of 64 gigabytes when $n=10^7$ and is denoted as `NA (Not Available)'. The average computation time is measure in seconds.  
\end{tablenotes}
\end{threeparttable}
\end{table}

\begin{table}[th]
\caption{Simulation Results with $(p,q)= (10,25)$} \label{tb: (p,q)=(10,25)}
\begin{tabular}{l c c c c c S[table-format=3.2]}
\toprule
& RMSE & Bias & SD & Coverage Prob & CI Length & \mbox{Time (sec.)}\\
\midrule
\underline{$n=10^4$} \\
2SLS & 0.09251 & 0.00355 & 0.09244 & 0.947 & 0.35381 & 0.12 \\
GMM & 0.07579 & 0.00198 & 0.07577 & 0.951 & 0.28601 & 4.09 \\
S2SLS & 0.13041 & -0.00274 & 0.13038 & 0.952 & 0.61009 & 0.16 \\
SGMM RS & 0.12553 & -0.00171 & 0.12552 & 0.966 & 0.58261 & 0.44 \\
SGMM PI & 0.12553 & -0.00171 & 0.12552 & 0.833 & 0.27651 & 0.44 \\
\underline{$n=10^5$}\\
2SLS & 0.02870 & -0.00011 & 0.02870 & 0.955 & 0.11265 & 1.40 \\
GMM & 0.02476 & 0.00039 & 0.02476 & 0.951 & 0.09463 & 29.59 \\
S2SLS & 0.02989 & -0.00056 & 0.02989 & 0.945 & 0.15533 & 1.35 \\
SGMM RS & 0.02704 & 0.00039 & 0.02703 & 0.948 & 0.13795 & 3.99 \\
SGMM PI & 0.02704 & 0.00039 & 0.02703 & 0.917 & 0.09338 & 3.93 \\
\underline{$n=10^6$}\\
2SLS & 0.00907 & 0.00048 & 0.00906 & 0.953 & 0.03568 & 25.94 \\
GMM & 0.00755 & 0.00022 & 0.00755 & 0.954 & 0.03021 & 305.57 \\
S2SLS & 0.00908 & 0.00045 & 0.00907 & 0.959 & 0.04774 & 13.50 \\
SGMM RS & 0.00762 & 0.00023 & 0.00762 & 0.959 & 0.04016 & 40.39 \\
SGMM PI & 0.00762 & 0.00023 & 0.00762 & 0.949 & 0.03009 & 39.86 \\
\underline{$n=10^7$}\\
2SLS & 0.00297 & -0.00005 & 0.00297 & 0.949 & 0.01129    & 247.53 \\
GMM & NA & NA & NA & NA & NA                           & NA \\
S2SLS & 0.00298 & -0.00005 & 0.00298 & 0.945 & 0.01479   & 129.14 \\
SGMM RS & 0.00249 & -0.00007 & 0.00249 & 0.944 & 0.01237 & 393.74 \\
SGMM PI & 0.00249 & -0.00007 & 0.00249 & 0.945 & 0.00955 & 387.85 \\
\bottomrule
\end{tabular}
\end{table}

Overall, the numerical performance of S2SLS and SGMM is satisfactory. First, both S2SLS and SGMM demonstrate good coverage probabilities across all designs. Additionally, other measures such as RMSE, Bias and SD also indicate good performance. When we examine RMSEs specifically, they are slightly larger than those of the offline estimators when $n=10^4$ and $10^5$. However, for sample sizes $n\ge10^6$, RMSEs become comparable to those of the offline estimators, aligning with the asymptotic theory in the previous sections. 

Second, both S2SLS and SGMM shows substantial gains in computation time as the sample size increases. In the model of $(p,q)=(5,20)$, 2SLS takes 1.65 times longer computation time than S2SLS, and GMM does 7.6 times more than SGMM when $n$ is bigger than $10^6$. We observe a similar pattern in the model of $(p,q)=(10,25)$. Note that we compute the whole matrix $\widehat{V}_n$ in these simulations. If we are interested in the inference of a single parameter, we can improve the result further by focusing on a single element of $\widehat{V}_n$. 

Third, SGMM demonstrates efficiency gains over S2SLS across all designs as predicted by asymptotic theory. As we discuss earlier, RMSEs of SGMM are comparable to those of GMM when the sample size is $10^6$. If the sample size is as large as $10^7$, GMM exceeds the memory budget of 64 gigabytes, resulting in a value of `NA' in the tables. These results highlight the computational advantage of SGMM over GMM while maintaining its efficiency property. 

Finally, we observe that the average confidence interval length (CI Length) is larger for S2SLS and SGMM RS than their offline counterparts. We observe similar phenomena in linear mean and quantile regression models (see \citet{lee2021fast, lee2022fastqr}.

\section{Empirical Examples}\label{sec:examples}
In this section, we explore two empirical applications to demonstrate the effectiveness of the SGMM estimator. Specifically, we revisit the empirical findings presented in \citet{angrist1991does} and \citet{angrist1998children}. 

\subsection{\citet{angrist1991does}} 

We re-visit the 2SLS estimate of return to education in column (2) of Table IV in \citet{angrist1991does}:
\begin{align*}
\log(wage_i) = \beta_0 + \beta_1 educ_i + x_i'\beta_3 + \varepsilon_i,
\end{align*}
where $wage_i$ denotes a weekly wage, $educ_i$ denotes the years of education, and $x_i$ is a vector of 9 cohort dummies. The object of interest is $\beta_1$ representing returns to schooling. A vector of instruments, $z_i$, is constructed by the interaction of quarter of birth and cohort dummies, where $d_z=30$. The model is overidentified in this application, as we have only 11 regressors. 

Table \ref{tb:AK} provides a summary of the estimation results. Similar to the simulation studies, we employ five different estimators: 2SLS, GMM, S2SLS, SGMM RS, and SGMM PI. Among the total 247,199 observations, we allocate $n_0=20,000$ into the initialization sample, resulting in $n=227,199$.\footnote{Because of exclusion of the initialization sample, the 2SLS estimate in Table~\ref{tb:AK} is slightly different from one reported in column (2) of Table IV in \citet{angrist1991does}. The latter is $0.0769$ with the standard error of $0.0150$.} Similar to the simulations studies, we set $n_1=10\sqrt{n}$ for SGMM. 
Finally, we adopt the method described in Subsection \ref{subsec:Selection_gamma0} and set $\gamma_0=0.200$. In the table, we present the point estimates of $\beta_1$, along with their corresponding 95\% confidence intervals, the lengths of these confidence intervals, and the computation times. 

    \begin{table}[hptb]
        \caption{Estimation Results of \citet{angrist1991does}} \label{tb:AK}
        \centering
        \begin{tabular}{l c c c S[table-format=3.2]}
        \hline
                & Estimate of $\hat{\beta_1}$ & 95\% CI & CI Length & {Time (sec.)} \\
        \hline
        2SLS       & 0.0764 & (0.0459, 0.1070) & 0.0611  &   5.61 \\
        GMM        & 0.0755 & (0.0450, 0.1060) & 0.0610  & 171.06 \\
        S2SLS      & 0.1108 & (0.0593, 0.1623) & 0.1030  &   5.09 \\
        SGMM RS    & 0.1113 & (0.0601, 0.1625) & 0.1024  &  13.54 \\
        SGMM PI    & 0.1113 & (0.0758, 0.1468) & 0.0710  &  13.41 \\
        SGMM ME    & 0.0790 & (0.0474, 0.1106) & 0.0632  & 132.44 \\
        \hline
        \end{tabular}
    \end{table}

We observe that both SGMM estimators are computed nearly 12.6 times faster than the corresponding offline GMM estimator, and S2SLS performs slightly than 2SLS in this application. Additionally, the confidence intervals of S2SLS and SGMM are wider than those of their offline counterparts, as we confirmed in the simulation studies. 

To explore the gap between the point estimates, we next consider implementing a multi-epoch algorithm within this application. Note that the point estimates of the stochastic methods (S2SLS and SGMM) are around 0.111, while the offline (2SLS and GMM) estimates are around 0.076. In order to assure a stable estimation result, we embark on a multi-epoch approach for the stochastic estimations.\footnote{In machine learning, 
an epoch refers to a complete pass through a training dataset in an iterative optimization algorithm. In our setting, an epoch corresponds to the use of the all observations to run S2SLS or SGMM.
In practice, training a machine learning model typically involves running through multiple epochs. This is because a single pass through the training data might not be sufficient for the estimate to converge to an optimal or near-optimal value. This seems the case with the \citet{angrist1991does} example.}
Precisely, we shuffle the order of observations within the sample during each epoch and subsequently compute S2SLS or SGMM over a series of epochs.\footnote{In other words, within each epoch, we implement uniform sampling without replacement with sample size $n$.}

Figure \ref{fig:multi-epoch} demonstrates the estimation path of SGMM PI over 10 epochs along with the pointwise 95\% confidence band. The pointwise SGMM plug-in confidence interval was calculated using a sample size of $\min\{i,n\}$, i.e.\ the variance was initially divided by $\sqrt{i}$ for $i=1,\ldots, n$, in the first epoch, and subsequently by $\sqrt{n}$ thereafter. In the graph, we can observe that the estimation process stabilizes after the fifth epoch. The estimation results after the 10th epoch are presented at the bottom of Table \ref{tb:AK}, designated as SGMM ME. The point estimate stands at 0.0790, closely aligned with the offline GMM estimate, and the length of the confidence interval is considerably reduced (0.632). Furthermore, SGMM ME necessitates only approximately half the computation time of GMM. 
Therefore, when the dataset size allows for the storage of all observations, as is the case in both examples, we recommend opting for a multi-epoch algorithm and an assessment of SGMM stability in empirical applications.\footnote{However, it is an interesting open question for future research to formally extend our theory to a multi-epoch (or multi-pass) setting and develop an early stopping rule for the number of epochs.} 

\begin{figure}[hpbt]
\caption{Estimation Path over Multiple Epochs in  \citet{angrist1991does}}\label{fig:multi-epoch}
    \includegraphics[scale=0.5]{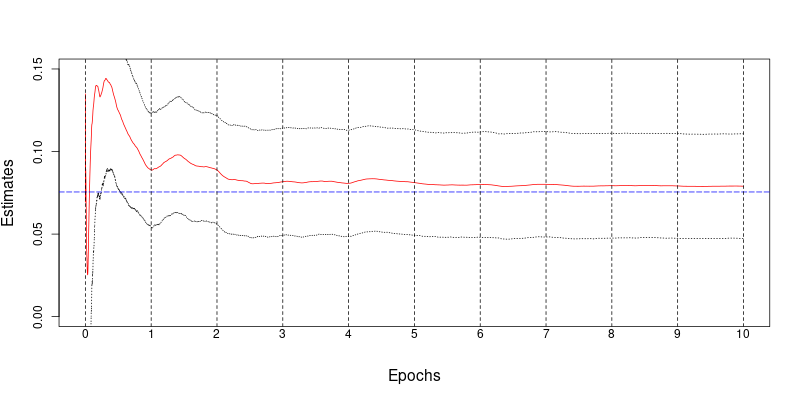}   
    \caption*{
    \footnotesize
    \emph{Notes.} The solid line denotes the estimation path of SGMM PI over 10 epochs. The dotted line around the solid line represents the path of the corresponding 95\% SGMM PI confidence intervals. The dashed horizontal line denotes the offline GMM estimate.}
\end{figure}

\subsection{\citet{angrist1998children}}

In \citet{angrist1998children}, they study the effect of childbearing on female labor supply.
In our application, we use data consisting of 394,840 observations from the 1980 U.S. census. The dependent variable
 is the number of working weeks divided by 52;
the endogenous regressor is a binary variable that takes value 1 if the number of children is greater than 2;
the instrument is a binary variable that takes value 1 if siblings are of the same sex.
To be consistent between two applications, we repeat the same exercises as in the previous subsection.
Specifically, we set $n_0=20,000$, $n = 374,840$, and $n_1=10\sqrt{n}$. The initial-data-dependent choice $\gamma_0$ was that $\gamma_0 = 0.058$. 

Table~\ref{tb:AE} and Figure~\ref{fig:multi-epoch2} present the estimation results. 
As this is a just-identified case, we expect little difference across S2SLS and SGMM, which was empirically verified.
It is interesting to notice that the SGMM estimate here basically converges only after one or two epochs, unlike the previous application.
This is likely due to the fact that the number of parameters is just two, including the intercept term, and the model is just-identified in the second example.

    \begin{table}[hptb]
        \caption{Estimation Results of \citet{angrist1998children}} \label{tb:AE}
        \centering
        \begin{tabular}{l c c c S[table-format=3.2]}
        \hline
                & Estimate of $\hat{\beta_1}$ & 95\% CI & CI Length & {Time (sec.)} \\
        \hline
        2SLS     & -0.1256  &(-0.1714,  -0.0797) &    0.0917 &  1.77     \\
        GMM      & -0.1256  &(-0.1714,  -0.0797) &    0.0917 &  1.92     \\
        S2SLS    & -0.1244  &(-0.1921,  -0.0567) &    0.1354 &  0.31     \\
        SGMM RS  & -0.1244  &(-0.1921,  -0.0567) &    0.1354 &  1.45     \\
        SGMM PI  & -0.1244  &(-0.1770,  -0.0717) &    0.1053 &  1.40     \\
        SGMM ME  & -0.1277  &(-0.1759,  -0.0796) &    0.0963 &  4.23     \\
        \hline
        \end{tabular}
    \end{table}

\begin{figure}[hpbt]
\caption{Estimation Path over Multiple Epochs in \citet{angrist1998children}}\label{fig:multi-epoch2}
    \includegraphics[scale=0.5]{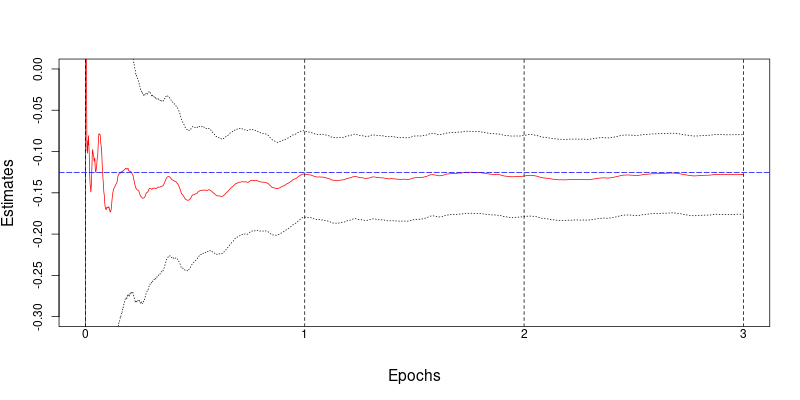}   
    \caption*{
    \footnotesize
    \emph{Notes.} Refer to the captions in Figure \ref{fig:multi-epoch} for the corresponding legends.}
\end{figure}

\section{Extensions}\label{sec:sgmm:extension}

We conclude the paper by mentioning two possible extensions. 
First, recall the standard (nonlinear) GMM estimator for the general GMM model, $\mathbb{E}[g_i (\beta_*)]=0$ with $\dim(g_i)\ge d_{\beta}$:
\begin{align}\label{def:GMM:nl}
    \min_{\beta}\bar{g}_n (\beta)' W_n \bar{g}_n (\beta),
\end{align}
where $\bar{g}_n (\beta) = n^{-1} \sum_{i=1}^n g_i(\beta)$, and $W_n$ is a weighting matrix that may depend on an initial estimator of $\beta_*$.
The first-order condition to \eqref{def:GMM:nl} is
\begin{align*}
    \frac{\partial}{\partial \beta} \bar{g}_n (\beta)' W_n \bar{g}_n (\beta) = 0.
\end{align*}
The following is a natural efficient online algorithm for nonlinear GMM.
We assume that the parameter space for $\beta$ is bounded in this case. For simplicity, we drop the step using the subsample of size $n_1$. Specifically, for $ n_0 $ as before, compute an initial estimate
  \begin{subequations}
      		\begin{align}
			\beta_0 &= \arg \min_{\beta}\bar{g}_{n_0} (\beta)'  \bar{g}_{n_0} (\beta)\\
			\Phi_0 &= \frac{\partial}{\partial \beta} \bar{g}_{n_0} (\beta_0) \\
            W_0 &= \left( \frac{1}{n_0} \sum_{j=1}^{n_0} g_{j}(\beta_0) g_{j}(\beta_0)' +\eta_0 I \right)^{-1} .
		\end{align}

  \end{subequations}
  Let $ G_i (\beta) = \frac{\partial}{\partial \beta} g_i (\beta) $.
	We  sequentially update from $i = 1$ until $i = n$:
	\begin{subequations}
	 	\begin{align}
	 	\beta_i &= \beta_{i-1} - \gamma_i ( \Phi_{i-1}'W_{i-1}\Phi_{i-1})^{\dagger}
      \Phi_{i-1}' W_{i-1} g_i(\beta_{i-1}),  \\ 	 		
	 	\Phi_i &= \Phi_{i-1} - \frac{1}{ n_0 + i}(\Phi_{i-1} - G_i (\bar\beta_{i-1})), \\
				m_i &= n_0+i-1+g_i(\bar\beta_{i-1})'W_{i-1} g_i(\bar\beta_{i-1}),  \\
		W_{i} &= \frac{n_0+i}{n_0+i-1} W_{i-1}  \left[ I- m_i^{-1}g_i(\bar\beta_{i-1})g_i(\bar\beta_{i-1})'W_{i-1} \right],  \\
	 	\bar \beta_i &=  \bar{\beta}_{i-1} - \frac{1}{i} (\bar{\beta}_{i-1}-\beta_i ).
	 \end{align}
	\end{subequations}
	We leave it to future research to derive the asymptotic properties of the above nonlinear SGMM. The online Sargan-Hansen test can be computed the same way as that in subsection \ref{Jtest}.

Second, the instruments considered in the paper are assumed to be valid and strong. Allowing many weak IVs in general nonlinear GMM models is an important question that has been fruitfully studied in the past two decades. 
We expect that overidentified Z-type stochastic approximation is appealing with many weak IVs because it can be helpful to deal  with the problem of many local minima. For instance, being a fast algorithm, overidentified Z-type stochastic approximation gives us an avenue of attempting many different starting values. However, we expect that the theoretical studies could be technically challenging, so we leave this extension for future research.  

\appendix

\section{Appendix}\label{sec:appendix}

\subsection{Proofs}

Throughout the proofs, with no loss of generality, assume $\beta_* = 0$ so that $\mathbb E[H_i] = 0$, under iid assumption $\mathbb E[H_i\mid\mathcal F_{i-1}] = 0$ and $\mathbb E[g_i(\beta_{i-1})\mid\mathcal F_{i-1}] = G\beta_{i-1}$.

A generic positive constant will be denoted by $K > 0$, whose value may differ in each occurrence. For future reference, for each $R > 0$, we consider the stopping times defined as follows.
\begin{equation}
\label{eq:tau_R:stopping-time}
    \begin{array}{rl}
    \tau_R &= \inf \{ i \ge 0: \|\beta_i\| \ge R\},\\[6pt]
    \sigma_R &= \inf \{ i \ge 0: \max\{\|W_i\|, \|(\Phi_{i-1}' W_{i-1} \Phi_{i-1})^{\dagger}\| \} \ge R\}, \\[6pt]
    \rho_R &= \inf \{ i \ge 0: \|\Phi_i - G\| / \eta_{i+1} \ge R \},\\[6pt]
    \iota_R &= \inf \{ i \ge 0: \|(\Phi_i' \Phi_i)^{\dagger}\| \ge R \},
    \end{array}        
\end{equation}
where $\eta_i = i^{-1/2} \log_+ \log_+ (i)^{1/2}$ for $\log_+(x) := \max\{ \log x, 1 \}$. We regard $\inf \varnothing = \infty$.

\begin{proof}[\textbf{Proof of Lemma~\ref{lem:strong consistency of beta_n}}]
Let us denote $B_i := \|\beta_i\|^ 2 $. Expanding the square $\|\beta_i\|^2 = \|\beta_{i-1} - \gamma_i (\Phi_{i-1}' W_{i-1} \Phi_{i-1})^{\dagger} \Phi_{i-1}' W_{i-1}g_i(\beta_{i-1})\|^2$, we have 
\begin{align}
\label{lem:basic:expansion:2sls:sgmm}
    B_i  &= B_{i-1} -  2 \gamma_i \beta_{i-1}' (\Phi_{i-1}' W_{i-1} \Phi_{i-1})^{\dagger} \Phi_{i-1}' W_{i-1}g_i(\beta_{i-1}) \\
    & \qquad + \gamma_i^ 2 g_i(\beta_{i-1})' W_{i-1}\Phi_{i-1} (\Phi_{i-1}' W_{i-1} \Phi_{i-1})^{\dagger}{}^{2} \Phi_{i-1}'W_{i-1} g_i(\beta_{i-1}) \nonumber.
\end{align}
Taking a conditional expectation $\mathbb E[\cdot\mid\mathcal F_{i-1}]$ on both sides, the second term on the right-hand side of \eqref{lem:basic:expansion:2sls:sgmm} yields $-2\gamma_i \beta_{i-1}' (\Phi_{i-1}' W_{i-1} \Phi_{i-1})^{\dagger} \Phi_{i-1}' W_{i-1}  G \beta_{i-1}$. 
For the third term, since we have 
\begin{align*}
    \opn{(\Phi_{i-1}' W_{i-1} \Phi_{i-1})^{\dagger}\Phi_{i-1}' W_{i-1}^{1/2}}^ 2 &=\opn{W_{i-1}^{1/2} \Phi_{i-1} (\Phi_{i-1}' W_{i-1} \Phi_{i-1})^{\dagger}{}^{2}\Phi_{i-1}' W_{i-1}^{1/2}} \\
    &= \opn{(\Phi_{i-1}' W_{i-1} \Phi_{i-1})^{\dagger}},
\end{align*}
it follows that 
\begin{align*}
    &\mathbb E [g_i(\beta_{i-1})' W_{i-1}\Phi_{i-1}(\Phi_{i-1}' W_{i-1} \Phi_{i-1})^{\dagger}{}^{2}\Phi_{i-1}' W_{i-1} g_i(\beta_{i-1})\mid \mathcal F_{i-1}] \\
    &\le  2\mathbb E[\beta_{i-1}'G_i' W_{i-1} \Phi_{i-1}(\Phi_{i-1}' W_{i-1} \Phi_{i-1})^{\dagger}{}^{2}\Phi_{i-1}' W_{i-1} G_i \beta_{i-1} \\
    & \qquad + H_i' W_{i-1} \Phi_{i-1} (\Phi_{i-1}' W_{i-1} \Phi_{i-1})^{\dagger}{}^{2}\Phi_{i-1}' W_{i-1}H_i \mid \mathcal F_{i-1}] \\
    &\lesssim \opn{W_{i-1}} \opn{(\Phi_{i-1}' W_{i-1} \Phi_{i-1})^{\dagger}}\mathbb E[\|\beta_{i-1}\|^ 2 \opn{G_i}^ 2 + \|H_i\|^2\mid \mathcal F_{i-1}]\\
    &\lesssim \opn{W_{i-1}} \opn{(\Phi_{i-1}' W_{i-1} \Phi_{i-1})^{\dagger}} (B_{i-1}+1)
\end{align*}
where the second-to-last inequality uses Assumptions \ref{A:sample} and \ref{A:moment:consistency}.
As a result, 
\begin{align}
\label{eq:middle-equation}
\mathbb E[B_i \mid \mathcal F_{i-1}] &\le B_{i-1} - 2\gamma_i \beta_{i-1}' (\Phi_{i-1}' W_{i-1} \Phi_{i-1})^{\dagger}\Phi_{i-1}' W_{i-1} G \beta_{i-1} \\
& \quad + K \gamma_i^ 2 \|{W_{i-1}}\|\|(\Phi_{i-1}' W_{i-1} \Phi_{i-1})^{\dagger}\| (B_{i-1} + 1) \nonumber.
\end{align}
for some $K>0$. Note that
\begin{align}
&-\beta_{i-1}' (\Phi_{i-1}' W_{i-1} \Phi_{i-1})^{\dagger}\Phi_{i-1}' W_{i-1} G \beta_{i-1} \label{neg} \\ 
= &-\beta_{i-1}' (\Phi_{i-1}' W_{i-1} \Phi_{i-1})^{\dagger}\Phi_{i-1}' W_{i-1} \Phi_{i-1} \beta_{i-1}  + \beta_{i-1}'(\Phi_{i-1}' W_{i-1} \Phi_{i-1})^{\dagger} \Phi_{i-1}' W_{i-1}(\Phi_{i-1} - G) \beta_{i-1}\nonumber\\
    \le & -\beta_{i-1}' (\Phi_{i-1}' W_{i-1} \Phi_{i-1})^{\dagger}\Phi_{i-1}' W_{i-1} \Phi_{i-1} \beta_{i-1} + \|\beta_{i-1}\|^ 2 \|(\Phi_{i-1}' W_{i-1} \Phi_{i-1})^{\dagger}\|^ {1/2} \| W_{i-1}\|^ {1/2} \|\Phi_{i-1} - G\|.\nonumber
\end{align}
Note that there exists $\delta > 0$ such that if $\|\Phi_{i-1} - G\| < \delta$ holds true, then the inverse matrix $(\Phi_{i-1}' W_{i-1} \Phi_{i-1})^{\dagger} = (\Phi_{i-1}' W_{i-1} \Phi_{i-1})^{-1}$ exists. As a result, on this event $\{ \|\Phi_{i-1} - G\| < \delta \}$, it holds
\begin{align*}
&-\beta_{i-1}' (\Phi_{i-1}' W_{i-1} \Phi_{i-1})^{\dagger}\Phi_{i-1}' W_{i-1} G \beta_{i-1} \\
    \le & -\|\beta_{i-1}\|^ 2 + \|\beta_{i-1}\|^ 2 \|(\Phi_{i-1}' W_{i-1} \Phi_{i-1})^{\dagger}\|^ {1/2} \| W_{i-1}\|^ {1/2} \|\Phi_{i-1} - G\|;
\end{align*}
whereas on the event $\{ \|\Phi_{i-1} - G\| \ge \delta \}$, \eqref{neg} implies trivially that
\begin{align*}
    & -\beta_{i-1}' (\Phi_{i-1}' W_{i-1} \Phi_{i-1})^{\dagger}\Phi_{i-1}' W_{i-1} G \beta_{i-1} \\
    \le &\ \|\beta_{i-1}\|^ 2 \|(\Phi_{i-1}' W_{i-1} \Phi_{i-1})^{\dagger}\|^{1/2}\|W_{i-1}\|^{1/2}\|\Phi_{i-1} - G\|.
\end{align*}
Putting this together with \eqref{eq:middle-equation}, it follows
\begin{align}
    \label{eq:B_i inequality in lem:strong consistency of beta_n:2sls:sgmm}
    &\mathbb E[B_i\mid \mathcal F_{i-1}]  \\
    \le &\ B_{i-1} \big( 1+ 2\gamma_i \|\Phi_{i-1}-G\| \|W_{i-1}\|^ {1/2} \|(\Phi_{i-1}' W_{i-1} \Phi_{i-1})^{\dagger}\|^ {1/2} + K \gamma_i^ 2 \|W_{i-1}\| \|(\Phi_{i-1}' W_{i-1} \Phi_{i-1})^{\dagger}\| \big)\nonumber\\
    & + K \gamma_i^ 2 \|W_{i-1}\| \|(\Phi_{i-1}' W_{i-1} \Phi_{i-1})^{\dagger}\|- 2\gamma_i B_{i-1} 1\{ \|\Phi_{i-1} - G\| < \delta \}  \nonumber
\end{align}
for some $K > 0$, where $1_A$ denotes an indicator function for a set $A$. 

By Assumption \ref{A:moment:consistency} and the law of iterated logarithm (LIL), $\limsup_{i\to\infty}{\|\Phi_{i-1} - G\|}/{\eta_i} \le c $ holds almost surely for some $c>0$, which implies that $1\{ \|\Phi_{i-1} - G\| < \delta \} = 1$ holds for all but finitely many $i$ almost surely. Also, using the fact that $\lim_{i\to\infty}W_{i-1} = W$ and $\lim_{i\to\infty}(\Phi_{i-1}' W_{i-1} \Phi_{i-1})^{\dagger} = (G'W G)^{-1}$ by the strong law of large numbers (SLLN), we have
\begin{align*}
    \sum_{i=1}^\infty (2\gamma_i \|\Phi_{i-1}-G\| \|W_{i-1}\|^ {1/2} \|(\Phi_{i-1}' W_{i-1} \Phi_{i-1})^{\dagger}\|^ {1/2} + K \gamma_i^ 2 \|W_{i-1}\| \|(\Phi_{i-1}' W_{i-1} \Phi_{i-1})^{\dagger}\|) < \infty
\end{align*}
almost surely (a.s.)

By Lemma \ref{lem:Robbins Siegmund}, it follows that $\lim_{i\to\infty}B_i = B_\infty <\infty$ and $\sum_{i=1}^\infty \gamma_i B_{i-1} 1\{ \|\Phi_{i-1} - G\| < \delta \} <\infty$ exist a.s. This implies that $B_\infty = 0$ a.s., because otherwise $\sum_i \gamma_i <\infty$, which is in contradiction to Assumption that $\gamma_i=\gamma_0 i^{-a}$ with $a \in (1/2,1]$.
Therefore, $\lim_{n\to\infty}\|\beta_n\| = 0$ a.s., and we conclude that $\beta_n \to \beta_*$ as $n\to\infty$ a.s..
\end{proof}

\begin{proof}[\textbf{Proof of Theorem~\ref{thm:1}}]
\phantom{}\\
\noindent\textbf{\textit{Part 1.}} 
Local $L^2$-convergence rate.

Consider the stopping time $T_R = \min\{ \sigma_R, \rho_R \}$ defined as per \eqref{eq:tau_R:stopping-time}. In Part~1, we aim to establish the convergence rate of $\mathbb E[\|\beta_i\|^21\{T_R\geq i\}]$ to 0. The proof stems from (\ref{eq:B_i inequality in lem:strong consistency of beta_n:2sls:sgmm}) with the fact that the event $T_R\geq i$ implies $\|W_{i-1}\| \le R$, $\|(\Phi_{i-1}' W_{i-1} \Phi_{i-1})^{\dagger}\| \le R$, and $\|\Phi_{i-1} - G\| \le R \eta_i$.

Note that $\sup_{i\ge 0} \|W_i\| < \infty$ and $\sup_{i\ge 0} \|(\Phi_{i}' W_{i} \Phi_{i})^{\dagger}\| < \infty$ holds a.s. by the SLLN, and also that $\sup_{i \ge 0} \|\Phi_{i} - G \|/\eta_{i+1} < \infty$ a.s. by the LIL. Thus, it follows $\mathbb P(T_R = \infty\text{ for some }R>0) = 1$. From \eqref{eq:B_i inequality in lem:strong consistency of beta_n:2sls:sgmm} in the proof of Lemma \ref{lem:strong consistency of beta_n}, we can see that on $\{ T_R \ge i \}$, it holds
$$
\|\Phi_{i-1}-G\| \|W_{i-1}\|^ {1/2} \|(\Phi_{i-1}' W_{i-1} \Phi_{i-1})^{\dagger}\|^ {1/2}
\leq R^2 \eta_i.
$$
Thus, for all sufficiently large $i$ such that $\eta_i \le \frac{1}{2R^2}$, it holds
\begin{equation}\label{eq4:2sls:sgmm}
    \mathbb E[B_i 1\{T_R \geq  i\}\mid \mathcal F_{i-1}] \le B_{i-1}1\{T_R \geq i-1\} (1-\alpha \gamma_i + K\gamma_i^ 2) + K\gamma_i^ 2    
\end{equation}
for some $\alpha > 0$. By integrating both sides of this inequality, we obtain
\begin{equation*}
    \mathbb E[B_i 1\{T_R > i\}] \le \mathbb E[B_{i-1}1\{T_R > i- 1\}] (1 - \alpha \gamma_i + K\gamma_i^ 2) + K\gamma_i^ 2.
\end{equation*}
By Lemma \ref{lem:recursive bound}, it follows that $\mathbb E[B_i 1\{T_R > i\}] = O(\gamma_i)$ for any choice of $R>0$.

\noindent\textbf{\textit{Part 2.}} Coupling with the linearized process $\beta_i^1$.

Define an auxiliary process $\beta_i^1$ as follows.
\begin{equation}
    \label{eq:definition of beta_n^1:2sls:sgmm}
    \beta_i^ 1 := (1 - \gamma_i)\beta_{i-1}^ 1 - \gamma_i \xi_i
\end{equation}
where $\beta_0^ 1 := \beta_0$, $\tilde G_i := G_i - G$ and
\begin{equation}\label{xi-def}
\xi_i:=(\Phi_{i-1}' W_{i-1} \Phi_{i-1})^{\dagger}\Phi_{i-1}' W_{i-1}(\tilde G_i \beta_{i-1} + H_i).
\end{equation}
Also define $\bar \beta_i^ 1 = \frac 1 i \sum_{j=1}^i \beta_j^ 1$ analogously to $\bar\beta_i$. 

Define $\delta_i =  \beta_i - \beta_i^ 1$ as the difference between $\beta_i$ and its approximation $\beta_i^1$. By Lemma \ref{lem:uniform approximation by beta_n^1}, it follows that $\sqrt n (\bar \beta_1 - \bar\beta_n^ 1) = \frac 1 {\sqrt n} \sum_{i=1}^ n \delta_i = o_P(1)$, wherein we used the local convergence rate established in Part~1. Hence, we have $\sqrt n \bar \beta_n = \sqrt n \bar \beta_n^ 1 + o_P(1)$. 

\noindent \textbf{\textit{Part 3.}}
We now assert that the sequence $\beta_i^1$ is the SGD sequence of 
Polyak and Juditsky's~\citeyearpar[PJ hereafter]{polyak1992acceleration} Theorem~1-(a) by verifying its Assumptions~2.1 to~2.5-(a).
First of all, the ``true parameter'' for this sequence is zero because we have assumed $\beta_*=0$. Assumption~2.1 is straightforward to check because $A = I$ in our case. Secondly, $\xi_i$ defined in \eqref{xi-def} is a martingale difference sequence (mds) because $\mathbb E[H_i\mid \mathcal F_{i-1}]= H=-G\beta_*=0$ and 
$$
\mathbb E[\xi_i \mid  \mathcal F_{i-1}]=\Phi_{i-1}' W_{i-1} (\mathbb E[\tilde G_i\mid \mathcal F_{i-1}] \beta_{i-1} + H)=0.
$$
This deals with Assumption~2.2. In addition, Assumptions 2.3-2.5(a) of PJ are verified in Lemma \ref{lem:Assumptions 2.3, 2.4, 2.5-(a) from PJ (1992)}.
It then follows from Theorem~1-(a) in \cite{polyak1992acceleration}  that $\sqrt n \bar \beta_n^ 1 \overset{d}{\to} N(0, (G' W G)^{-1} G'W\Omega W G (G'W G)^{-1})$. 
We conclude $\sqrt n\bar \beta_n \overset{d}{\to} N(0, (G'WG)^{-1} G'W\Omega W G (G'WG)^{-1})$. 

\end{proof}

\begin{proof}[\textbf{Proof of Theorem~\ref{thm:2}}] We maintain the assumption $\beta_* = 0$. 
Let us define
\begin{align}
\label{eq:define-nu}
    \bar \nu(r) &= \frac 1{\sqrt n} \sum_{i=0}^ {\lfloor nr\rfloor} \beta_i, \\
    \bar \nu^ 1(r) &= \frac 1 {\sqrt n} \sum_{i=0}^ {\lfloor nr\rfloor} \beta^ 1_i,\nonumber
\end{align}
for $r \in [0,1]$ and $\beta^ 1$ as defined in the proof of Theorem \ref{thm:1}. By Lemma \ref{lem:uniform approximation by beta_n^1}, we have that $\sup_{r \in [0,1]}\|\bar \nu(r) - \bar \nu^ 1(r)\|  = \frac 1 {\sqrt n} \sup_{1\le m\le n} \left\|\sum_{i=1}^ m (\beta_i -\beta_i^ 1)\right\|= o_P(1)$.  This allows us to prove Theorem \ref{thm:2} with $\bar \nu^ 1(r)$ in lieu of $\bar \nu(r)$. 

Now, we consider a decomposition $\bar \nu^ 1(r) = I_1(r) - I_2(r) - I_3(r)$, defined as
\begin{align*}
    I_1(r) &= \frac 1{\sqrt n \gamma_0} \alpha_0^{\lfloor nr\rfloor} \beta_0   ,\\
    I_2(r) &= \frac 1 {\sqrt n} \sum_{i=1}^{\lfloor nr \rfloor} \xi_i  ,\\
    I_3(r) &= \frac 1 {\sqrt n} \sum_{i=1}^{\lfloor nr \rfloor} w^ {\lfloor nr \rfloor}_i \xi_i ,
\end{align*}
with $\xi_i$ defined in \eqref{xi-def}, $\alpha_j^ n := \gamma_j \sum_{i=j}^ n \prod_{k=j+1}^ i (1 - \gamma_k)$ and $w_j^ m := \alpha_j^ m - 1$. Observe that $|\alpha_j^ n| \lesssim 1$ holds uniformly for all $j\le n$ by Lemma~1-(ii) in \cite{polyak1992acceleration}, and hence $|w_j^ m |\lesssim 1$ also holds uniformly. Moreover, we have $\sum_{j=1}^m |w_j^ m| = O(m^ {a})$ \citep{Zhu:Dong}.

Since $\alpha_0^ m$ is uniformly bounded in $m$, it is easy to see that $\sup_r \|I_1(r)\| = o_P(1)$. For $I_2(r)$,  
a standard FCLT for mds applies and yields $\{I_2(r)\}_{r\in[0,1]}\rightsquigarrow \Avar(\bar \beta)^{1/2}\{W_{d_\beta}(r)\}_{r\in[0,1]}$. Note that sufficient conditions for FCLT are provided by Lemma \ref{lem:Assumptions 2.3, 2.4, 2.5-(a) from PJ (1992)}.

For the third term, we split $I_3(r)$ into two components:
\begin{align*}
    I_3(r) &= \frac 1 {\sqrt n} \sum_{i=1}^{\lfloor nr \rfloor} w^ {\lfloor nr \rfloor}_i (\Phi_{i-1}' W_{i-1} \Phi_{i-1})^{\dagger}\Phi_{i-1}'W_{i-1}\tilde G_i \beta_{i-1} \\
    &\quad +  \frac 1 {\sqrt n} \sum_{i=1}^{\lfloor nr \rfloor} w^ {\lfloor nr \rfloor}_i (\Phi_{i-1}' W_{i-1} \Phi_{i-1})^{\dagger}\Phi_{i-1}'W_{i-1}H_i\\
    & =: I_{31}(r) + I_{32}(r).    
\end{align*}
Under Assumption \ref{A:moment:FCLT}, the treatment of $\sup_r \|I_{32}(r)\| = o_P(1)$ can be approached in a similar manner to that in \cite{lee2021fast} for their linear least squares regression, which will be omitted here. However, the term $I_{31}(r)$ cannot be handled in the same manner due to the absence of ex-ante moment conditions for $\beta_{i-1}$. 
To reach a proper bound for $\beta_{i-1}$, we consider $T_R = \min\{ \tau_R, \sigma_R \}$. Then on the event $\{ T_R\geq i \}$, we have $\|\beta_{i-1}\|\leq {R}$, $\|(\Phi_{i-1}' W_{i-1} \Phi_{i-1})^{\dagger}\|\le R$ and $\|W_{i-1}\| \le R$.

Let us denote $S_m := \sum_{i=1}^m w_i^ m (\Phi_{i-1}' W_{i-1} \Phi_{i-1})^{\dagger}\Phi_{i-1}' W_{i-1} \tilde G_i \beta_{i-1}$ and $S_{m\wedge T_R} := S_{\min\{m, T_R\}}$ for $m \ge 1$. We examine the first component of $I_3(r)$, which satisfies
$$
\sup_{r \in [0,1]} \|I_{31}(r)\| = n^{-1/2} \sup_{1\le m\le n} \|S_m\|.
$$
Let $p$ be an integer such that $p > (1-a)^{-1}$ (see Assumption \ref{A:moment:FCLT}). Note that on the event $\{T_R > n\}$, it holds
$$
\sup_{1\le m\le n} \|S_m\|^{2p} \le \sum_{m=1}^n \|S_m\|^{2p} = \sum_{m=1}^ n \|S_{m\wedge T_R}\|^{2p}
$$
From this, we have
$$
\mathbb E[\sup_r \|I_{31}(r)\|^{2p}1\{T_R > n\}] \le n^{-p} \sum_{m=1}^n \mathbb E[\|S_{m\wedge T_R}\|^{2p}].
$$

Write $A_i:=  (\Phi_{i-1}' W_{i-1} \Phi_{i-1})^{\dagger}\Phi_{i-1}' W_{i-1}\tilde G_i \beta_{i-1}1\{T_R \ge i\}$. Note that $\{T_R\geq i\}$ is equivalent to $\{\max_{j\leq i-1}\{ \|\beta_j\|, \|(\Phi_{j}' W_{j} \Phi_{j})^{-1}\|, \|W_j\| \}<R\}$, which belongs to $\mathcal F_{i-1}$. Hence, $w_i^m A_i$ constitutes a mds, and we can express $S_{m\wedge T_R}$ as follows:
$$
S_{m\wedge T_R} = \sum_{i=1}^m w_i^ m (\Phi_{i-1}' W_{i-1} \Phi_{i-1})^{\dagger}\Phi_{i-1}' W_{i-1} \tilde G_i \beta_{i-1}1\{T_R \ge i\}
=\sum_{i=1}^m w_i^ m  A_i.
$$
Also note that $\|A_i\|\leq R^{2}\| \tilde G_i \|$. 
Applying Burkholder's inequality \citep[e.g.,][]{Hall-Heyde} to $\|\sum_{i=1}^m w_i^ m  A_i\|^{2p}$, we get
\begin{align*}
    & \mathbb E[\sup_r \|I_{31}(r)\|^{2p}1\{T_R > n\}] \\
    \le &\  n^{-p} \sum_{m=1}^n \mathbb E[\|S_{m\wedge T_R}\|^{2p}] 
=\  n^{-p} \sum_{m=1}^n \mathbb E[\|
    \sum_{i=1}^mw_i^mA_i\|^{2p}] 
\\
\le &\  n^{-p} \sum_{m=1}^n \mathbb E\left(
    \sum_{i=1}^m\|w_i^mA_i\|^{2} \right)^p
\\
\le &\  n^{-p} \sum_{m=1}^n 
    \sum_{i_1,...,i_p}^m|w_{i_1}^m|^2 ...|w_{i_p}^m|^2\mathbb E\|A_{i_1}\|^{2}... \|A_{i_p}\|^{2}  
\\
\le &\  n^{-p} \sum_{m=1}^n 
    \sum_{i_1,...,i_p}^m|w_{i_1}^m|^2 ...|w_{i_p}^m|^2\max_{j\leq m}\mathbb E\|A_{j}\|^{2p}\cdot p
\\
\lesssim &\ n^{-p} \sum_{m=1}^n \left(\sum_{i=1}^m |w^ m_{i}| \right)^ p
\lesssim \ n^{-p} \sum_{m=1}^n m^{ap} \lesssim   n^{-(1-a)p + 1}\to 0
\end{align*}
where Young's inequality $\prod_{r=1}^p \|a_{r}\| \lesssim \sum_{r=1}^p \|a_r\|^{p}$ is invoked for deriving the second-to-last line.
Hence, $\sup_r \|I_{31}(r)\| 1\{T_R > n\} = o_P(1)$ as $n\to \infty$. By the same argument as used in the proof of Lemma \ref{lem:uniform approximation by beta_n^1}, we have $\sup_r \|I_{31}(r)\|\to_p 0$. We conclude $\{\bar \nu(r)\}_{r\in [0,1]} \rightsquigarrow \Avar(\bar\beta)^{1/2}\{W_{d_\beta}(r)\}_{r\in[0,1]}.$
\end{proof}

\begin{proof}[\textbf{Proof of Theorem~\ref{thm:3}}]

The proof of Theorem \ref{thm:3} differs from that of Theorem \ref{thm:1} under the fully online setting as there is a need to deal with the triangular array $(W_{i,n})_{i=0}^{n-1}$ of weighting matrices, as defined in \eqref{algo:beta:W:eff}. It is worth noting that this introduces a dependence of $(\beta_{i,n})_{i=1}^n$ on $n$, emerging due to the varying size $n_1 = n_1(n)$ of $\mathbb{S}_1$ as the sample size $n$ changes. Since $n_1$ may vary with $n$, it is possible to have $W_{i,n} \ne W_{i,m}$ for $n \ne m$, posing a challenge when analyzing the behavior of $(W_{i,n})_{i=0}^{n-1}$ using previous approaches. As a result, an alternative technique is employed to handle the triangular array structure in this proof.

\noindent \textbf{\textit{Part 1}}. $L^2$-convergence rate and consistency of SGD from the efficient algorithm \eqref{algo:beta:eff}--\eqref{algo:betabar:eff}.

We will first derive the uniform $L^2$-convergence rate for the class of all \textit{admissible} weighting schemes $(W_{i})_{i=0}^{n-1}$. To this end, we allow $W_{i}$ to be a possibly random positive definite matrix adapted to the filtration $\mathcal{F}_{i}$. For instance, in the updating rule in \eqref{algo:beta:W:2sls}, $W_{i}$ is treated as a unit point mass at 
$
\frac{i}{i-1}( W_{i-1} - \frac{1}{m_i}W_{i-1}z_i z_i' W_{i-1}),
$
which is $\mathcal{F}_{i}$-measurable.
Note that admissible weighting schemes allow for a broader range of possibilities than the proposed updating rule.

We denote $W_{[n]}$ as the sequence of weighting matrices $(W_{i})_{i=0}^{n-1}$ up to $n-1$ and say that $W_{[n]} \in \mathcal{A}$ if $W_{[n]}$ is an admissible weighting scheme. To differentiate the proposed updating scheme from a generic one, we denote $W_{i,n}$ as the weighting matrix following the rule \eqref{algo:beta:W:eff} with $n_1(n)$ as the change-point. For $C \ge 1$, define an event $\{ W_{[k]} \in \mathcal{E}(C) \}\in \mathcal{F}_{k-1}$ as 
\begin{equation*}
    \{ W_{[k]} \in \mathcal{E}(C) \}= \{1/C \le \lambda_{\min}(W_j) \le \lambda_{\max}(W_j) \le C,\ j =1,\ldots, k-1\}.
\end{equation*}

For given $R > 0$ and $C \ge 1$, we consider the sequence
\begin{equation*}
    d_i := \sup_{W_{[i]} \in \mathcal{A}}  \mathbb{E} [\|\beta_i\|^ 2 1\{ T_R \ge i, W_{[i]} \in \mathcal{E}(C) \}].
\end{equation*}
where $T_R := \min \{ \rho_R, \iota_R \}$ and the supremum is taken over all admissible weighting schemes. Note that $d_0 < \infty$ is well-defined because $\mathbb{E}[ \|\beta_0\|^ 2] < \infty$. By the inductive step below, we can also see that $d_i < \infty$ for all $i \ge 0$. We aim to establish $d_n = O(\gamma_n)$. Note that on an event $\{ T_R \ge i, W_{[i]} \in \mathcal{E}(C) \} \in \mathcal{F}_{i-1}$ where $W_{[i]} \in \mathcal{A}$, it follows from \eqref{eq:B_i inequality in lem:strong consistency of beta_n:2sls:sgmm} that 
\begin{align*}
    \mathbb{E}[\|\beta_i\|^ 2 \mid \mathcal{F}_{i-1}] &\le \|\beta_{i-1}\|^ 2 (1 -  \gamma_i/2 + K\gamma_i^2) + K \gamma_i^ 2
\end{align*}
for all sufficiently large $i$. Since $\{ T_R \ge i, W_{[i]} \in \mathcal{E}(C) \} \subseteq \{ T_R \ge i-1, W_{[i-1]} \in \mathcal{E}(C) \}$, we obtain 
\begin{align*}
    &\mathbb{E} [\|\beta_i\|^ 2 1\{ T_R \ge i, W_{[i]} \in \mathcal{E}(C) \}] \\
    \le &\ \mathbb{E} [\|\beta_{i-1}\|^ 2 1\{ T_R \ge i-1, W_{[i-1]} \in \mathcal{E}(C) \}] (1- \gamma_i/2 + K\gamma_i^2) + K \gamma_i^ 2,  
\end{align*}
whence it follows 
\begin{equation*}
    d_i \le d_{i-1} (1- \gamma_i/2 + K \gamma_i^2) + K \gamma_i^ 2
\end{equation*}
for all sufficiently large $i$ by taking supremum over $W_{[i]}$ on both sides. By Lemma \ref{lem:recursive bound}, it follows $d_n = O(\gamma_n)$ as $n\to\infty$. 

By specializing this to $W_{[n], n}$, which is trivially admissible, we obtain
\begin{equation*}
    \mathbb{E} [\|\beta_n\|^2 1\{ T_R \ge n, W_{[n],n} \in \mathcal{E}(C)\}] \le O(\gamma_n) \to 0
\end{equation*}
and 
\begin{equation*}
    \mathbb{E} \left[\left\|\frac 1 n \sum_{i=1}^n \beta_i 1\{ T_R \ge i, W_{[i],n} \in \mathcal{E}(C) \}\right\|^2\right] \to 0.    
\end{equation*}
Note that, thus far, the convergence is uniform in the choice of the change-point $n_1$ for any given $R$ and $C$. 

Next, our goal is to establish that $\beta_{n, n} = o_P(1)$ and $\bar\beta_{n,n} = o_P(1)$ when $n_1$ is chosen such that $n_1 \to \infty$ as $n \to \infty$. We drop the duplicate $n$ in the subscripts and denote them just by $\beta_{n}$ and $\bar\beta_{n}$ for brevity of notation. Accordingly, $W_{i,n}$ will also be denoted as $W_i$ occasionally for notational ease. 

Given that $n_1$ is chosen such that $n_1(n) \to \infty$, we verify that it holds
\begin{equation}
    \label{eq:W-matrix-bounded}
    \lim_{C \to \infty} \liminf_{n \to \infty} \mathbb{P} (W_{[n],n} \in \mathcal{E}(C)) = 1,
\end{equation}
i.e., $\mathbb{P} (W_{[n],n} \in \mathcal{E}(C))$ can be made arbitrarily close to 1 uniformly in $n$ for a sufficiently large $C$. Once this is verified, coupled with the fact that $\sup_{n \in \mathbb{N}} \mathbb{P}(T_R \le n) = \mathbb{P}(T_R < \infty) < \varepsilon$ for sufficiently large $R>0$, it implies that
\begin{align*}
    \mathbb{P}( \| \beta_n - \beta_n 1\{ T_R \ge n, W_{[n],n} \in \mathcal{E}(C)\} \| > \varepsilon ) < 2\varepsilon,
\end{align*}
and 
\begin{equation*}
    \mathbb{P}\left( \left\|\bar\beta_n - \frac 1 n \sum_{i=1}^n \beta_i 1\{ T_R \ge i, W_{[i],n} \in \mathcal{E}(C) \}\right\| > \varepsilon \right) < 2 \varepsilon,
\end{equation*}
which establishes $\beta_n = \beta_n 1\{ T_R > n, W_{[n],n} \in \mathcal{E}(C)\} + o_P(1)= o_P(1)$ and $\bar \beta_n = o_P(1)$.

We start with a convenient observation. Define $W^*_i := W_{i,n}$ for $n$ such that $i \le n_1(n)$. Note that there is no ambiguity in this definition since $W_{i,n} = W_{i,m}$ whenever $n_1(n)\ge i$ and $n_1(m) \ge i$ hold. This is well-defined as we assumed $n_1 \to \infty$ and hence $W_i^* = \lim_n W_{i, n}$ holds. Since $W^*_{r}  = (\frac{n_0}{n_0 + r}W_0^{-1} + \frac{1}{n_0 + r} \sum_{j=1}^{r} z_j z_j')^{-1}$, $r\ge 1$ converges a.s. to a positive definite matrix $Q^{-1}$ as $r \to \infty$ by the SLLN, we can see that with probability 1, $(\lambda_{\min}(W^*_i))_{i=1}^\infty$ is bounded away from 0 and $(\lambda_{\max}(W^*_i))_{i=1}^\infty$ is bounded from above. This implies that
\begin{align*}
   &\lim_{C\to\infty}\liminf_{n\to\infty} \mathbb{P}(C^{-1} \le \lambda_{\min}(W_{j,n}) \le \lambda_{\max}(W_{j,n}) \le C, j = 0,\ldots, n_1) \\
   \ge &\ \lim_{C\to\infty} \mathbb{P}(C^{-1} \le \lambda_{\min}(W_{j}^*) \le \lambda_{\max}(W_{j}^*) \le C, j = 0,1,\ldots)=1.
\end{align*}
This observation allows us to establish only
\begin{equation}
    \label{eq:W-matrix-bounded'}
\lim_{C\to \infty}\liminf_{n\to\infty}\mathbb{P}(C^{-1} \le \lambda_{\min}(W_{j,n}) \le \lambda_{\max}(W_{j,n}) \le C, n_1 < j \le n) =1
\end{equation}
to prove \eqref{eq:W-matrix-bounded}.

Furthermore, it is useful to note that, for $\beta^*_i := \beta_{i,n}$ and $\bar\beta^*_i := \bar\beta_{i,n}$ defined in a similar manner to that of $W_{i}^*$, we have $\beta^*_{r} \to 0$ and $\bar\beta^*_{r} \to 0$ as $r$ tends to infinity by Lemma \ref{lem:strong consistency of beta_n}. In particular, this implies the strong consistency of $\bar\beta_{n_1} = \bar\beta_{n_1}^*$ as $n\to\infty$, which will be utilized in the subsequent analysis.

To verify \eqref{eq:W-matrix-bounded'}, note that 
\begin{align*}
    & \mathbb{P}(C^{-1} \le \lambda_{\min}(W_{j,n}) \le \lambda_{\max}(W_{j,n}) \le C, j = n_1 + 1,\ldots, n )\\ 
    = &\ \mathbb{P}\left(\left\{ C^{-1} \le \min_{n_1 < j\le n}\lambda_{\min}(W_{j,n}^{-1})  \right\} {\textstyle\bigcap}\left\{  \max_{n_1 < j\le n}\lambda_{\max}(W_{j,n}^{-1}) \le C \right\}\right)\\
    \ge & \ \mathbb{P}\left(C^{-1} \le \min_{n_1 < j\le n}\lambda_{\min}(W_{j,n}^{-1}) \right) + \mathbb{P}\left(\max_{n_1 < j\le n}\lambda_{\max}(W_{j,n}^{-1}) \le C\right)-1 \\
    =: &\ P_{1}(C) + P_{2}(C)-1.
\end{align*}
We will show that $\lim_{C\to\infty}\liminf_{n\to\infty} P_{j}(C) = 1$ for each $j = 1,2$, which then establishes \eqref{eq:W-matrix-bounded'}.

We first address $P_1(C)$. For brevity of notation, let us denote $\mathbb{E}_{m:n}[X_j] = \frac{1}{n-m} \sum_{j=m+1}^n X_j$ as the empirical average of $X_j$ from $j = m+1$ to $n$. 
We observe that 
\begin{align*}
    & W_{i,n}^{-1} = \frac{n_0 + n_1} {n_0 + n_1 + k} W_{n_1}^{-1}  + \frac{k} {n_0 + n_1 + k}\mathbb{E}_{n_1:n_1 + k}[g_j(\bar\beta_{n_1}) g_j(\bar \beta_{n_1})']
\end{align*}
where $k := i- n_1 \in \{ 1,\ldots, n_2 \}$ and $n_2 := n - n_1(n)$. Thus, each $W_{i,n}^{-1}$ is given by a weighted average of $W_{n_1,n}^{-1} = W^*_{n_1}{}^{-1}$ and $\mathbb{E}_{n_1:n_1+k}[g_j(\bar\beta_{n_1}) g_j(\bar \beta_{n_1})']$ for some $1\le k = i - n_1 \le n_2$.

For arbitrary integer $m \ge d_g$, note that 
\begin{align*}
    &  P_1(C) \\
    = &\ \mathbb{P}(C^{-1} \le \min_{n_1 < j\le n}\lambda_{\min}(W_{j,n}^{-1}))\\
    \ge &\ \mathbb{P} \left( \left\{ \frac{n_0 + n_1 + m}{n_0 + n_1} C^{-1}\le \lambda_{\min}(W_{n_1,n}^{-1})\right\} {\textstyle\bigcap}\left\{ C^{-1} \le \min_{m\le k\le n_2}\lambda_{\min}(\mathbb{E}_{n_1:n_1+k}[g_j(\bar\beta_{n_1}) g_j(\bar\beta_{n_1})']\right\} \right) \\
    \ge &\ \mathbb{P} \left(\frac{n_0 + n_1 + m}{n_0 + n_1} C^{-1} \le  \lambda_{\min}(W^*_{n_1}{}^{-1})\right)+ \mathbb{P}\left( C^{-1} \le \min_{m\le k\le n_2}\lambda_{\min}(\mathbb{E}_{n_1:n_1+k}[g_j(\bar\beta_{n_1}) g_j(\bar\beta_{n_1})']\right) - 1 \\
    =: & I_{1}(m, C) + I_2(m, C) - 1.
\end{align*}
Here, we consider $k \ge m \ge d_g$ to prevent rank deficiency of the matrix $\mathbb{E}_{n_1:n_1+k}[g_j(\bar\beta_{n_1}) g_j(\bar\beta_{n_1})']$. We will verify that $\lim_{m\to\infty}\lim_{C \to \infty}\liminf_{n\to\infty}I_j(m, C) = 1$ for $j=1,2$, which then implies that $\lim_{C\to\infty}\liminf_{n} P_1(C) = 1$. 

The first term $I_1(m, C)$ is straightforward to deal with, because, by the previous observation that $W^*_{n_1} \to Q^{-1}$ a.s., it holds
\begin{align*}
\liminf_{n_1\to\infty}\mathbb{P} \left(\frac{n_0 + n_1 + m}{n_0 + n_1} C^{-1} \le  \lambda_{\min}(W^*_{n_1}{}^{-1})\right) &\ge \mathbb{P} \left(2 C^{-1} \le  \lambda_{\min}(Q)\right),
\end{align*}
implying $\lim_{m\to\infty}\lim_{C \to \infty}\liminf_{n\to\infty}I_1(m, C) \ge \mathbb{P} \left(0 <  \lambda_{\min}(Q)\right) = 1$. 

For the second term $I_2(m, C)$, we take advantage of the following fact; conditional on $\bar \beta_{n_1} = \beta$, the distribution of $(g_j(\bar\beta_{n_1}))_{j=n_1 + 1}^n$ is the same as the (unconditional) distribution of $(g_j(\beta)))_{j=1}^{n_2}$. Let $\mu_{n_1}(\cdot)$ denote the probability measure on $\mathbb{R}^{d_\beta}$ corresponding to the distribution of $\bar\beta_{n_1}$. Then, by the law of iterated expectations,
\begin{align*}
    I_2(m, C) &= \mathbb{E}_{\bar\beta_{n_1}}\left[ \mathbb{P}\left( C^{-1} \le \min_{m\le k\le n_2}\lambda_{\min}(\mathbb{E}_{n_1:n_1+k}[g_j(\bar\beta_{n_1}) g_j(\bar\beta_{n_1})'] \mid \bar\beta_{n_1}\right)\right]\\
    &\ge \int_{\mathbb{R}^{d_\beta}} \mathbb{P}\left(C^{-1} \le \inf_{m\le k}\lambda_{\min}(\mathbb{E}_{0:k}[g_j(\beta) g_j(\beta)']\right)d\mu_{n_1}(\beta).
\end{align*}
Since $\mu_{n_1}(\mathcal{K}) \to 1$ by the consistency of $\bar\beta_{n_1}$, it follows 
\begin{align*}
    \liminf_{n\to\infty} I_2(m, C) & \ge \liminf_{n\to\infty}  \int_{\mathcal{K}} \mathbb{P}\left(C^{-1} \le \inf_{m\le k}\lambda_{\min}(\mathbb{E}_{0:k}[g_j(\beta) g_j(\beta)']\right)d\mu_{n_1}(\beta) \\
    &\ge \inf_{\beta \in \mathcal{K}}  \mathbb{P}\left(C^{-1} \le \inf_{m\le k}\lambda_{\min}(\mathbb{E}_{0:k}[g_j(\beta) g_j(\beta)']\right) \\
    &\ge  \mathbb{P}\left(C^{-1} \le \inf_{m\le k}\inf_{\beta\in\mathcal{K}}\lambda_{\min}(\mathbb{E}_{0:k}[g_j(\beta) g_j(\beta)']\right).
\end{align*}
By the monotonicity in $m$ and $C$ as $m, C \to \infty$, the last probability tends to
\begin{align*}
    &\lim_{m \to \infty, C\to\infty} \mathbb{P}\left(C^{-1} \le \inf_{m\le k}\inf_{\beta\in\mathcal{K}}\lambda_{\min}(\mathbb{E}_{0:k}[g_j(\beta) g_j(\beta)']\right) \\
    =&\ \mathbb{P}\left(0 < \liminf_{k \to \infty}\inf_{\beta\in\mathcal{K}}\lambda_{\min}(\mathbb{E}_{0:k}[g_j(\beta) g_j(\beta)']\right).
\end{align*}
By the ULLN applied to $\mathbb{E}_{0:k}[g_j(\beta)g_j(\beta)'] = \frac 1 k \sum_{j=1}^k g_j(\beta)g_j(\beta)'$ indexed by ${\mathcal{K}}$, we have 
\begin{equation*}
   \inf_{\beta \in \mathcal{K}}\lambda_{\min}(\mathbb{E}_{0:k}[g_j(\beta)g_j(\beta)']) \to \inf_{\beta \in \mathcal{K}}\lambda_{\min}(\mathbb{E}[g_j(\beta)g_j(\beta)']) \ge c  > 0\quad\text{a.s.}
\end{equation*}
as $k \to \infty$ by the assumption. It follows that 
\begin{equation*}
\lim_{m, C \to \infty} \liminf_{n\to\infty}I_2(m, C) = 1
\end{equation*}
completing the proof of $\lim_{C\to \infty}\liminf_n P_1(C) = 1$.

The treatment of $P_2(C)$ is similar to that of $P_1(C)$ upon noticing that 
\begin{align*}
    & 1 - P_2(C) \\    
    \ge &\ \mathbb{P} \left( \lambda_{\max}(W^*_{n_1}{}^{-1}) \le C\right)+ \mathbb{P}\left( \max_{1\le k\le n_2}\lambda_{\max}(\mathbb{E}_{n_1:n_1+k}[g_j(\bar\beta_{n_1}) g_j(\bar\beta_{n_1})'] \le C\right) - 1 \\
    =: & \ I_1(C) + I_2(C) - 1.
\end{align*}

One can establish $\lim_{C\to\infty}\liminf_{n\to\infty}I_j(C) = 1$ for $j = 1,2,$ by the same argument as before, which will be omitted for brevity. We conclude that \eqref{eq:W-matrix-bounded'} is true, and so is \eqref{eq:W-matrix-bounded}.

\noindent\textit{\textbf{Part 2}}. Coupling with the linearized process $\beta_n^1$.

Define $\beta_i^1$ $(0\le i\le n)$ as $\beta_i^1 = \beta_i$ for $0\le i \le n_1$ and for $i \ge n_1 + 1$,
\begin{equation*}
\beta_i^1 = (1-\gamma_i) \beta_{i-1}^1 - \gamma_i \xi_i,
\end{equation*}
where $\xi_i$ is defined the same way as in \eqref{xi-def}, which is an mds. Let $\delta_i = \beta_i - \beta_i^1$ denote the approximation error. In the same manner as in the proof of Lemma \ref{lem:uniform approximation by beta_n^1}, we have 
\begin{equation*}
\sup_{1\le m \le n} \left\|\sum_{i=1}^m \delta_i \right\| \lesssim \sum_{i=n_1+1}^n \|((\Phi_{i-1}' W_{i-1} \Phi_{i-1})^{\dagger}\Phi_{i-1}' W_{i-1} G - I)\beta_{i-1}\| =: U_n.
\end{equation*}
Note that on an event $\{ T_R \ge i, W_{[i]} \in \mathcal{E}(R) \} \in \mathcal{F}_{i-1}$, it holds that $\|W_{i-1}\|\le R$ and $\|(\Phi_{i-1}' W_{i-1} \Phi_{i-1})^{\dagger}\| \le \|W_{i-1}^{-1}\| \|(\Phi_{i-1}\Phi_{i-1})^{\dagger}\| \le R^2$, and that $(\Phi_{i-1}' W_{i-1} \Phi_{i-1})^{\dagger}= (\Phi_{i-1}' W_{i-1} \Phi_{i-1})^{-1}$ for sufficiently large $i$ where $T_R = \min\{ \rho_R, \iota_R \}$ as in Part 1. Thus, we have that 
\begin{align*}
   & \mathbb{E}[U_n 1\{ T_R \ge n, W_{[n]} \in \mathcal{E}(R) \}] \\
   \le &\ \sum_{i=n_1+1}^n \mathbb{E}\left[ \|(\Phi_{i-1}' W_{i-1} \Phi_{i-1})^{\dagger}\|^{1/2} \|W_{i-1}\|^{1/2} \|\Phi_{i-1} - G\| \|\beta_{i-1}\| 1\{ T_R \ge i, W_{[i]} \in \mathcal{E}(R) \} \right]\\
   \lesssim &\ \sum_{i=1}^ n \mathbb{E}\left[ \|\Phi_{i-1} - G\| \|\beta_{i-1}\| 1\{ T_R \ge i, W_{[i]} \in \mathcal{E}(R) \}\right] \\
   \le &\ \sum_{i=1}^ n \mathbb E [\|\Phi_{i-1} - G\|^ 2]^{1/2} \mathbb E [\|\beta_{i-1}\|^ 21\{T_R \ge i-1, W_{[i-1]} \in \mathcal{E}(R)\} ]^{1/2}\\
   \lesssim &\ \sum_{i=1}^ n \frac{\gamma_i^ {1/2}}{\sqrt i} = O(n^{1/2 - a/2}) = o(\sqrt n).
\end{align*}
 Since $\limsup_{n\to\infty}\mathbb{P}(U_n \ne U_n 1\{ T_R \ge n, W_{[n]} \in \mathcal{E}(R) \}) < \varepsilon$ for a sufficiently large $R > 1$ as proven in Part 1, it holds $U_n = o_P(\sqrt n)$, and thus
$$
\sup_{1\le m\le n} \left\|\sum_{i=1}^ m \delta_i \right\| = o_P(\sqrt n).
$$
This, in particular, allows us to prove the CLT for $\bar \beta_{n}^{1}$ instead of $\bar \beta_{n}$. 

\noindent \textbf{\textit{Part 3}}. Establish the central limit theorem for $\sqrt{n}\bar \beta_n^1$. 

By construction of $\beta_i^1$, it holds $\sqrt{n}\bar \beta_n^1 = \sqrt{\frac{n_1}{n}} (\sqrt{n_1} \bar \beta_{n_1}) + \frac{1}{\sqrt n} \sum_{i= n_1 + 1}^ n \beta_i^1$. Since $\sqrt{n_1} \bar\beta_{n_1} = O_P(1)$ by Theorem \ref{thm:1} and $n_1 / n \to 0$, it suffices to establish the CLT for $\frac{1}{\sqrt n} \sum_{i= n_1}^ n \beta_i^1$. 

To this end, first observe that 
\begin{align*}
  \frac{1}{\sqrt n}  \sum_{i = n_1}^ n \beta_i^ 1 
  &= \frac{1}{\sqrt {n} \gamma_{n_1}} {\alpha}_{n_1}^n \beta_{n_1} - \frac{1}{\sqrt n}\sum_{i=n_1 + 1}^ n {\alpha}_i^ n \xi_i \\
  &= \frac{1}{\sqrt {n} \gamma_{n_1}} {\alpha}_{n_1}^n \beta_{n_1} - \frac{1}{\sqrt n}\sum_{i=n_1 + 1}^ n \xi_i  - \frac{1}{\sqrt n}\sum_{i=n_1 + 1}^ n {w}_i^ n \xi_i \\
  &= : I_1 - I_2 - I_3
\end{align*}
where $\alpha_j^n := \gamma_j \sum_{i=j}^ n \prod_{k = j+1}^i (1-\gamma_k)$ for all $j\le n$ and $w_j^n = \alpha_j^n - 1$.

For $I_1$, we know that $I_1 = o_P(1)$ because $|{\alpha}_{n_1}^n|$ is uniformly bounded in $n$ and $\frac{1}{\sqrt{n} \gamma_{n_1}}\beta_{n_1} = \frac{\sqrt{n_1}}{\sqrt{n}} n_1^{a-1/2} \beta_{n_1} = o_P(1)$ by Theorem~ \ref{thm:1}.

For $I_2$, we can apply the CLT (see \cite{Hall-Heyde}) for a triangular martingale difference array defined as  $\tilde{\xi}_{in} = n^{-1/2} \xi_i 1\{n_1 < i\le n\}$. This necessitates a version Lemma \ref{lem:Assumptions 2.3, 2.4, 2.5-(a) from PJ (1992)} as sufficient conditions for CLT, which will be presented in Part 2 of the proof of Theorem \ref{thm:4}. This yields $I_2 \overset{d}{\to} N(0, (G' \Omega^{-1} G)^{-1})$.

For $I_3$, we note that on an event $\{ T_R \ge n, W_{[n]} \in \mathcal{E}(R) \}$, it holds
\begin{equation*}
    I_3 = I_3' :=  \frac{1}{\sqrt n}\sum_{i=n_1}^ n {w}_i^ n \xi_i 1\{ T_R \ge i, W_{[i]} \in \mathcal{E}(R) \}.
\end{equation*}
Since the probability of this event can be made arbitrarily close to 1, it suffices to show that $I_3'$ is $o_P(1)$. Since $\|(\Phi_{i-1}' W_{i-1} \Phi_{i-1})^{\dagger}\| \le R^2$ and $\|W_{i-1}\|\le R$ on $\{ T_R \ge i, W_{[i]} \in \mathcal{E}(R) \}$, it holds that 
\begin{align*}
    \lambda_{\max}(\var(I_3')) &\le \frac{1}{n} \sum_{i=n_1+1}^n |w_i^n|^2 \mathbb{E}[\|\xi_i\|^2 1\{ T_R \ge i, W_{[i]} \in \mathcal{E}(R) \} ]\\
    &\lesssim \frac{1}{n} \sum_{i=n_1 + 1}^n |w_i^n|^2 \mathbb{E}[ \|\tilde{G}_i\|^2 \|\beta_{i-1}\|^2 1 \{ T_R \ge i, W_{[i]} \in \mathcal{E}(R) \}+ \|H_i\|^2  ] \\
    & = \frac{1}{n} \sum_{i=n_1 + 1}^n |w_i^n|^2 (\mathbb{E}[ \|\tilde{G}_i\|^2]\mathbb{E}[\|\beta_{i-1}\|^2 1 \{ T_R \ge i, W_{[i]} \in \mathcal{E}(R) \}]+ \mathbb{E}[\|H_i\|^2]) \\
    & \lesssim \frac{1}{n} \sum_{i=1}^n |w_i^n| (\gamma_i + 1)
     = O(n^{a - 1}) = o(1)
\end{align*}
where $\tilde{G}_i := G_i - G$. This shows $I_3 = o_P(1)$. 
 
We conclude that $\sqrt n \bar\beta_n$ converges in distribution to $ N(0, (G' \Omega^{-1} G)^{-1})$.

\end{proof}

\begin{proof}[\textbf{Proof of Theorem~\ref{thm:4}}]
\phantom{}\\
\noindent\textit{\textbf{Part 1}}.

Let $\bar \nu(r)$ and $\bar \nu^1(r)$ be defined as in \eqref{eq:define-nu}.
As demonstrated in the proof of Theorem \ref{thm:3}, $\sup_{0\le r\le 1}\|\bar\nu(r)-\bar\nu^1(r)\| = \frac{1}{\sqrt {n}}\sup_{1\le m\le n} \|\sum_{i=1}^m (\beta_i - \beta_i^1) \| = o_P(1)$ holds. As such, it is sufficient to prove the FCLT for $\bar\nu^1(r)$ in place of $\bar\nu(r)$. 

Further, note that 
$$
\sup_{0\le r\le n_1/n}\|\bar \nu(r)\|=\sup_{0\le r\le n_1/n}\|\bar \nu^1(r)\| = \sqrt{\frac{n_1}{n}} \cdot \sup_{0\le s\le 1}\frac{1}{\sqrt {n_1}}\left\|\sum_{i=1}^{\lfloor n_1 s \rfloor} \beta_i\right\| = o_P(1)
$$
in light of Theorem \ref{thm:2} and $\frac{n_1}{n}=o(1)$. Thus, we may focus on $r \in [n_1/n, 1]$, which allows us to consider the following decomposition.

Analogous to the proof of Theorem \ref{thm:2}, we have 
\begin{equation*}
    \bar \nu^1(r) - \bar \nu^1(n_1/n) = I_1(r) - I_2(r) - I_3(r)
\end{equation*}
where 
\begin{align*}
    I_1(r) &= \frac{1}{\sqrt{n}\gamma_{n_1}}\alpha_{n_1}^{\lfloor nr \rfloor} \beta_{n_1}, \\
    I_2(r) &= \frac{1}{\sqrt n}\sum_{i=n_1 + 1}^{\lfloor nr \rfloor} \xi_i, \\
    I_3(r) &= \frac{1}{\sqrt n}\sum_{i=n_1 + 1}^{\lfloor nr \rfloor} w_i^{\lfloor nr \rfloor} \xi_i.    
\end{align*}

It is easy to see that the first term is $\sup_{r \in [0,1]}\|I_1(r)\| \lesssim n_1^{a}\|\beta_{n_1}\|/\sqrt {n} = o_P(1)$ because $|\alpha_{n_1}^{\lfloor nr \rfloor}|$ is uniformly bounded from above in $n$ and $r \ge n_1/n$.

The second term is where the FCLT for a triangular martingale array applies and yields 
\begin{equation*}
    \{I_2(r)\}_{r \in [0,1]} \rightsquigarrow (G'\Omega^{-1} G)^{-1/2} \{ W_{d_\beta}(r) \}_{r \in [0, 1]}.
\end{equation*}
We will verify the sufficient conditions in \textbf{Part 2} of this proof.

It remains to show the third term $\sup_{r \in [0,1]}\|I_3(r)\| = o_P(1)$.
Let $S_m = \sum_{i=n_1+1}^ m w_i^ m \xi_i$ so that $\sup_{0\le r\le 1}\|I_3(r)\| = n^{-1/2} \sup_{n_1 < m\le n} \|S_m\|$ holds. Let $p$ be the integer that appears in Assumption \ref{A:moment:FCLT}. We note that on an event $\{ T_R \ge n, W_{[n]} \in \mathcal{E}(R) \}$, $\sup_{n_1 < m\le n}\|S_m\|^{2p} \le \sum_{m=n_1+1}^n \|S_m\|^{2p} = \sum_{m=n_1+1}^n \|\tilde{S}_m\|^{2p}$ where $\tilde{S}_m :=  \sum_{i=n_1+1}^ m w_i^ m \xi_i 1\{ T_R \ge i, W_{[i]} \in \mathcal{E}(R) \}$. Thus, by Burkholder's inequality (\cite{Hall-Heyde}) applied to $\|\tilde{S}_m\|^{2p}$, we get
\begin{align*}
  & \mathbb{E}[ \sup_{r\in[0,1]} \|I_3(r)\|^{2p} 1\{ T_R \ge n, W_{[n]} \in \mathcal{E}(R) \} ] 
  \le \ n^{-p}\sum_{m=n_1 + 1}^ n \mathbb{E}[\|\tilde{S}_m\|^{2p}] \\
  \lesssim & \ n^{-p} \sum_{m=n_1 + 1}^ n \mathbb{E}\left( \sum_{i=n_1 + 1}^m |w_i^m|^2 \|\xi_i\|^2 1\{ T_R \ge i, W_{[i]} \in \mathcal{E}(R) \} \right)^p \\
  \lesssim & \ n^{-p} \sum_{m=n_1 + 1}^ n \left(\sum_{i=n_1 + 1}^m |w_i^m|^2\right)^p \mathbb{E}[ \|\xi_i\|^{2p} 1\{ T_R \ge i, W_{[i]} \in \mathcal{E}(R) \}] \\
  \lesssim & \ n^{-p} \sum_{m=n_1 + 1}^ n \left(\sum_{i=n_1 + 1}^m |w_i^m|\right)^p( \mathbb{E}[ \|\tilde{G}_i\|^{2p}]\mathbb{E}[\|\beta_{i-1}\|^{2p} 1\{ T_R \ge i, W_{[i]} \in \mathcal{E}(R)\}] + \mathbb{E}[\|H_i\|^{2p}])\\
  \lesssim & \ n^{-p} \sum_{m = n_1 + 1}^ n m^{ap} 
 \lesssim \ n^{-(1-a)p + 1} = o(1).
\end{align*}
Here, we used $\mathbb{E}[\|\beta_i\|^{2p} 1\{ T_R \ge i, W_{[i]} \in \mathcal{E}(C)\}] = O(\gamma_i) = O(1)$ ($p > 2$), which can be established with additional assumptions that $\mathbb{E}[\|\beta_0\|^{2p}] < \infty$ and Assumption \ref{A:moment:FCLT}. The proof follows the same approach as Part 1 of the proof of Theorem \ref{thm:3} and therefore will be omitted. This shows $\sup_{r\in [0, 1]} \|I_3(r)\|=o_P(1)$ since $\limsup_{n\to \infty}\mathbb{P}(\{ T_R \ge n, W_{[n]} \in \mathcal{E}(R) \}^c) < \varepsilon$ for a sufficiently large $R$. 

Putting all together, we conclude 
\[\{ \bar \nu(r) \}_{0\le r \le 1} \rightsquigarrow (G'\Omega^{-1}G)^{-1/2} \{ \mathbb{W}_{d_\beta}(r) \}_{0\le r\le 1}.\]

\noindent \textbf{\textit{Part 2}}. Lindeberg conditions for CLT and FCLT.

This section establishes: 
\begin{align}
    \label{eq:lindeberg-1}
    &\operatornamewithlimits{plim}_{n\to\infty}\sum_{i=1}^n \mathbb{E}[\|\tilde{\xi}_{in}\|^2 1\{ \|\tilde{\xi}_{in}\|\ge \varepsilon\} \mid \mathcal{F}_{i-1}] = 0,\quad\forall \varepsilon>0,\\
    \label{eq:lindeberg-2}
    &\operatornamewithlimits{plim}_{n\to\infty}\sum_{i=1}^n \mathbb{E}[\tilde{\xi}_{in} \tilde{\xi}_{in}' \mid \mathcal{F}_{i-1}] = (G'\Omega^{-1} G)^{-1}.
\end{align}
where 
$$
\tilde{\xi}_{in} := \frac{1}{\sqrt n}\xi_i 1\{ n_1 <  i\le n \}=\frac{1}{\sqrt n}(\Phi_{i-1}' W_{i-1} \Phi_{i-1})^{\dagger} \Phi_{i-1}'W_{i-1} (\tilde{G}_i \beta_{i-1} + H_i) 1\{ n_1 < i\le n \}.
$$

\noindent \textbf{\textit{Part 2-a}}.

For \eqref{eq:lindeberg-1}, we first note that it is sufficient to show that 
\begin{equation}
    \label{eq:lindeberg-1'}
   A_n := \sum_{i=1}^n \mathbb{E}[\|\tilde{\xi}_{in}\|^2 1\{ T_R \ge i, W_{[i]} \in \mathcal{E}(R) \} 1\{ \|\tilde{\xi}_{in}\|\ge \varepsilon\} \mid \mathcal{F}_{i-1}] \to 0
\end{equation}
in probability for each $R > 1$ because \eqref{eq:lindeberg-1} is equivalent to \eqref{eq:lindeberg-1'} on a set of arbitrarily large probability for a sufficiently large $R$. Since 
\begin{equation*}
    \tilde{\xi}_{in} \le \frac{1}{\sqrt n} R^{3/2} \|\tilde{G_i}\beta_{i-1} + H_i\| 1\{ n_1 < i\le n \}
\end{equation*}
on $\{ T_R \ge i, W_{[i]} \in \mathcal{E}(R) \}$, it follows by Markov's inequality
\begin{align*}
    A_n &\lesssim  \frac{1}{n} \sum_{i=n_1 + 1}^n \mathbb{E}[\|\tilde{G_i}\beta_{i-1} + H_i\|^2 1\{ \|\tilde{G_i}\beta_{i-1} + H_i\| \ge \sqrt n \varepsilon/R^2, T_R \ge i, W_{[i]} \in \mathcal{E}(R) \}\mid \mathcal{F}_{i-1}]\\
    &\lesssim \frac{1}{n^{p}} \sum_{i=n_1 + 1}^n \mathbb{E}[\|\tilde{G_i}\beta_{i-1} + H_i\|^{2p} 1\{ T_R \ge i, W_{[i]} \in \mathcal{E}(R) \}\mid \mathcal{F}_{i-1}].
\end{align*}
Taking expectations on both sides, 
\begin{align*}
    \mathbb{E}[A_n] &\le  \frac{1}{n^{p}} \sum_{i=n_1 + 1}^n \mathbb{E}[\|\tilde{G_i}\beta_{i-1} + H_i\|^{2p} 1\{ T_R \ge i, W_{[i]} \in \mathcal{E}(R) \}] \\
    &\lesssim \frac{1}{n^{p}} \sum_{i=n_1 + 1}^n \mathbb{E}[(\|\tilde{G_i}\|^{2p}\|\beta_{i-1}\|^{2p} + \|H_i\|^{2p}) 1\{ T_R \ge i, W_{[i]} \in \mathcal{E}(R) \}] \\
    &\le \frac{1}{n^{p}} \sum_{i=n_1 + 1}^n \mathbb{E}[\|\tilde{G_i}\|^{2p}]\mathbb{E}[\|\beta_{i-1}\|^{2p}1\{ T_R \ge i, W_{[i]} \in \mathcal{E}(R) \}] + \mathbb{E}[\|H_i\|^{2p}] \\
    &= O(n^{1-p}) = o(1),
\end{align*}
which follows from $\mathbb{E}[\|\beta_{i}\|^{2p}1\{ T_R \ge i, W_{[i]} \in \mathcal{E}(R) \}] = O(\gamma_i)=O(1)$. This proves $A_n \to 0$ in probability. 

\noindent \textbf{\textit{Part 2-b}}. Denote $\Omega_\beta := \mathbb{E}[g_i(\beta)g_i(\beta)']$. 
For \eqref{eq:lindeberg-2}, we write 
\begin{align*}
    \sum_{i=1}^n \mathbb{E}[\tilde{\xi}_{in} \tilde{\xi}_{in}' \mid \mathcal{F}_{i-1}] &= \frac{1}{n}\sum_{i=n_1 + 1}^n \mathbb{E}[\xi_i\xi_i' \mid \mathcal{F}_{i-1}] \\
    &= \frac{1}{n}\sum_{i=n_1 + 1}^n (\Phi_{i-1}' W_{i-1} \Phi_{i-1})^{\dagger}\Phi_{i-1}'W_{i-1} F(\beta_{i-1}) W_{i-1}\Phi_{i-1} (\Phi_{i-1}' W_{i-1} \Phi_{i-1})^{\dagger} \\
    & \ \ + \frac{1}{n}\sum_{i=n_1 + 1}^n (\Phi_{i-1}' W_{i-1} \Phi_{i-1})^{\dagger}\Phi_{i-1}'W_{i-1} \Omega W_{i-1}\Phi_{i-1} (\Phi_{i-1}' W_{i-1} \Phi_{i-1})^{\dagger}\\
    &=: D_1 + D_2,
\end{align*}
where $F(\beta) = \Omega_\beta - \Omega = \mathbb{E}[G_i\beta \beta'G_i] + \mathbb{E}[G_i \beta H_i']+\mathbb{E}[H_i \beta G_i']$ and $\Omega = \mathbb{E}[H_iH_i']$. To show $D_1 = o_P(1)$, we shall establish instead 
\begin{align*}
    &\frac{1}{n}\sum_{i=n_1 + 1}^n (\Phi_{i-1}' W_{i-1} \Phi_{i-1})^{\dagger}\Phi_{i-1}'W_{i-1} F(\beta_{i-1}) W_{i-1}\Phi_{i-1} (\Phi_{i-1}' W_{i-1} \Phi_{i-1})^{\dagger}1\{ T_R \ge i, W_{[i]} \in \mathcal{E}(R) \}\\
    &= o_P(1).
\end{align*}
Using the fact that $\|F(\beta)\| \lesssim \|\beta\|^2 + \|\beta\|$, we get 
\begin{align*}
    &\frac{1}{n}\left\|\sum_{i=n_1 + 1}^n (\Phi_{i-1}' W_{i-1} \Phi_{i-1})^{\dagger}\Phi_{i-1}'W_{i-1} F(\beta_{i-1}) W_{i-1}\Phi_{i-1} (\Phi_{i-1}' W_{i-1} \Phi_{i-1})^{\dagger}1\{ T_R \ge i, W_{[i]} \in \mathcal{E}(R) \}\right\|\\
    \lesssim &\ \frac{1}{n}\sum_{i=n_1 + 1}^n (\|\beta_{i-1}\|^2 + \|\beta_{i-1}\|) 1\{ T_R \ge i, W_{[i]} \in \mathcal{E}(R) \}.
\end{align*}
Since $\frac{1}{n}\sum_{i=n_1 + 1}^n \mathbb{E}[(\|\beta_{i-1}\|^2 + \|\beta_{i-1}\|) 1\{ T_R \ge i, W_{[i]} \in \mathcal{E}(R) \}] \lesssim \frac{1}{n} \sum_{i=1}^n \gamma_i^{1/2} = o(1)$, it follows $D_1 = o_P(1)$.

Next, we establish $D_2 \to (G'\Omega^{-1}G)^{-1}$ in probability. For $\beta \in \mathcal{K}$, let us define
\[
V(\beta) := (G' \Omega_{\beta}^{-1} G)^{-1}G'\Omega_{\beta}^{-1} \Omega \Omega_{\beta}^{-1} G(G'\Omega_{\beta}^{-1} G)^{-1}.
\]
and for simplicity of notation, we denote
\[
V_{i-1} = F(\Phi_{i-1}, W_{i-1}^{-1}) := (\Phi_{i-1}' W_{i-1} \Phi_{i-1})^{\dagger}\Phi_{i-1}'W_{i-1}\Omega W_{i-1}\Phi_{i-1}(\Phi_{i-1}' W_{i-1} \Phi_{i-1})^{\dagger}
\]
so that $D_2 = \frac{1}{n}\sum_{i= n_1 + 1}^n V_{i-1}$. 
Given this, we aim to show that
\begin{equation*}
\frac{1}{n} \sum_{i= n_1 + 1}^ n \|V_{i-1} - V(0)\| = o_P(1),
\end{equation*}
where $V(0) = (G'\Omega^{-1}G)^{-1}$. To this end, we note it is sufficient to prove that
\begin{equation}
    \label{eq:obj1}
   \frac{1}{n} \sum_{i= n_1 + 1}^ n \|V_{i-1} - V(0)\|1\{ T_R \ge i, W_{[i]} \in \mathcal{E}(R), \bar\beta_{n_1} \in \mathcal{K}\} = o_P(1)
\end{equation}
for each $R$. 

We see that $V_{i-1}=F(\Phi_{i-1}, W_{i-1}^{-1})$ is locally Lipschitz continuous on a neighborhood of $(G, \Omega)$. As such, it follows that on the event $\{ T_R \ge i, W_{[i]} \in \mathcal{E}(R), \bar\beta_{n_1} \in \mathcal{K} \}$, if $\|\Phi_{i-1} - G\| < \delta$ and $\|W_{i-1}^{-1} - \Omega\| < \delta$ for a sufficiently small $\delta > 0$,
\begin{align*}
    \|V_{i-1} - V(0)\| &= \|F(\Phi_{i-1}, W_{i-1}^{-1}) - F(G, \Omega) \|\\
    &\lesssim \|\Phi_{i-1} - G\| + \|W_{i-1}^{-1} - \Omega\|,
\end{align*}
and otherwise, $\|V_{i-1} - V(0)\| \lesssim 1 \lesssim  1\{ \|\Phi_{i-1} - G\| \ge \delta \} + 1\{\|W_{i-1}^{-1} - \Omega\| \ge \delta  \} \lesssim \|\Phi_{i-1} - G\| + \|W_{i-1}^{-1} - \Omega\|$. This observation allows us to prove that,  instead of \eqref{eq:obj1}, 
\begin{align*}
   & \left( \frac{1}{n} \sum_{i=n_1 + 1}^{n}\|\Phi_{i-1} - G\| + \|W_{i-1}^{-1} - \Omega\|1\{ T_R \ge i, W_{[i]} \in \mathcal{E}(R), \bar\beta_{n_1} \in \mathcal{K} \}\right)^2 \\
   \lesssim & \ \frac{1}{n} \sum_{i=n_1 + 1}^{n} \|\Phi_{i-1} - G\|^2 + \frac{1}{n}\sum_{i = n_1 + 1}^n \|W_{i-1}^{-1} - \Omega\|^2 1\{ T_R \ge i, W_{[i]} \in \mathcal{E}(R) , \bar\beta_{n_1} \in \mathcal{K}\} \\
   =: &\ A_1 + A_2
\end{align*}
is $o_P(1)$. The assertion that $A_1 = o_P(1)$ follows from 
\begin{equation*}
    \mathbb{E}[A_1] = \frac{1}{n}\sum_{i=n_1 + 1}^n \mathbb{E}[\|\Phi_{i-1} - G\|^2] \lesssim \frac{1}{n}\sum_{i=1}^n \frac{1}{i} = o(1). 
\end{equation*}
For $A_2$, we first note that on $\{ T_R \ge i, W_{[i]} \in \mathcal{E}(R), \bar\beta_{n_1} \in \mathcal{K}\}$, it holds 
\begin{align*}
    \|W_{i}^{-1} - \Omega\| &= \left\|\frac{n_0 + n_1}{n_0 + i}(W_{n_1}^{-1} - \Omega) + \frac{1}{n_0+i}\sum_{j=n_1 + 1}^{i} (g_j(\bar\beta_{n_1})g_j(\bar\beta_{n_1})' - \Omega)\right\|  \\
    &\lesssim \frac{n_0 + n_1}{n_0 + i} + \frac{1}{i}\left\| \sum_{j=n_1 + 1}^{i} (g_j(\bar\beta_{n_1})g_j(\bar\beta_{n_1})' - \Omega_{\bar\beta_{n_1}})\right\| + \|\Omega_{\bar\beta_{n_1}} - \Omega\|.
\end{align*}
This gives $\|W_{i}^{-1} -\Omega\|^2 \lesssim (\frac{n_0 + n_1}{n_0 + i})^2 + (\frac{1}{i^2}  \|\sum_{j=n_1 + 1}^{i} (g_j(\bar\beta_{n_1})g_j(\bar\beta_{n_1})' - \Omega_{\bar\beta_{n_1}})\|^2 + \|\Omega_{\bar\beta_{n_1}} - \Omega\|^2) 1\{ \bar\beta_{n_1} \in \mathcal{K} \}$ on this event, hence 
\begin{align*}
\mathbb{E}[A_2] &\lesssim \frac{1}{n} \sum_{i=n_1 + 1}^n  \left(\frac{n_0 + n_1}{n_0 + i}\right)^2 +   \frac{1}{n} \sum_{i=n_1 + 1}^n \frac{1}{i^2}\mathbb{E}\left[ \left\|\sum_{j=n_1 + 1}^{i} (g_j(\bar\beta_{n_1})g_j(\bar\beta_{n_1})' - \Omega_{\bar\beta_{n_1}})\right\|^21\{ \bar\beta_{n_1} \in \mathcal{K} \}\right]\\
& \quad + \mathbb{E}[\|\Omega_{\bar\beta_{n_1}} - \Omega\|^2 1\{ \bar\beta_{n_1} \in \mathcal{K} \}]
\end{align*}
The first term can be estimated as $O(\frac{n_1}{n}) = o(1)$. For the second term, we note, by the law of iterated expectations,
\begin{align*}
   & \mathbb{E}\left[ \left\|\sum_{j=n_1 + 1}^{i} (g_j(\bar\beta_{n_1})g_j(\bar\beta_{n_1})' - \Omega_{\bar\beta_{n_1}})\right\|^21\{ \bar\beta_{n_1} \in \mathcal{K} \}\right] \\
   = & \ \int_{\mathcal{K}} \mathbb{E}\left[\left\|\sum_{j=1}^{i-n_1} (g_j(\beta)g_j(\beta)' - \Omega_{\beta})\right\|^2\right]d\mu_{n_1}(\beta) \\
   \le & \ \sup_{\beta \in \mathcal{K}}   \mathbb{E}\left[\left\|\sum_{j=1}^{i-n_1} (g_j(\beta)g_j(\beta)' - \Omega_{\beta})\right\|^2\right]    \lesssim \ i
\end{align*}
where $\mu_{n_1}(\cdot)$ represents the distribution of $\bar\beta_{n_1}$ as before. This establishes that the second term is bounded by $O(\frac{1}{n}\sum_{i=1}^n \frac{1}{i}) = o(1)$. For the last term, since $\|\bar\beta_{n_1}\| = o_P(1)$, it is $o(1)$ by the dominated convergence theorem. We conclude $D_2$ converges in probability to $V(0) = (G'\Omega^{-1}G)^{-1}$.
\end{proof}


\begin{proof}[\textbf{Proof of Corollary~\ref{cor:dwh}}]
It follows from the same proof of Theorem \ref{thm:2} that 
 \begin{equation*}
        \left\{\frac 1 {\sqrt {n}}  \sum_{i=1}^{\lfloor n r\rfloor}(\bbeta_i - \bbeta_*) \right\}_{r\in[0,1]}\rightsquigarrow \Gamma^{1/2}\{ W_{2d_\beta}(r)\}_{r\in [0,1]},
    \end{equation*}  
where  $\Gamma=(\bG' \bW \bG)^{-1} \bG' \bW \bOmega \bW \bG   (\bG' \bW\bG)^{-1}  $,  $\bOmega$ is the covariance matrix of $(z_i'u_i, x_i'u_i)'$, $\bG=\mathbb E\bG_i$,
and
$$
\bW= \begin{pmatrix}
\{ \mathbb{E}[z_i z_i']  \}^{-1} &0\\
0& I
\end{pmatrix}.  
$$
 The proof of this FLCT  is very similar to that of  Theorem \ref{thm:2}. Hence we omit the details for brevity. Also, there is a selection matrix $S$ such that we can express
 $$
 \bar\beta_{\sub,n}-\bar\alpha_{\sub, n}= \Xi \bar\bbeta_{\sub,n}= S\bar\bbeta_n.
 $$

Let $C_n(r):= S\frac 1 {\sqrt {n}}  \sum_{i=1}^{\lfloor n r\rfloor}(\bbeta_i - \bbeta_*) $. Also let $\Lambda:=( S\Gamma S')^{1/2}$.  Then $C_n(r)$ weakly converges to $\Lambda \mathbb{W}_{q}(r)$ where $q$ is the rank of $S$.  Under the null that $\alpha_{\sub}=\beta_{\sup}$ we have $S\bbeta_*=0
.  $ Hence 
$\bar\beta_{\sub,n}-\bar\alpha_{\sub, n}= \frac{1}{\sqrt{n}} C_n(1)$. Also, 
\begin{eqnarray*}
\Xi\widehat{V}_{\sub,n}\Xi'&=&S\widehat{V}_{rs,n}S'=\frac{1}{n}\sum_{s=1}^n[C_n(\frac{s}{n})- \frac{s}{n}C_n(1)][C_n(\frac{s}{n})- \frac{s}{n}C_n(1)]'\cr
&=&\int_0^1[C_n(r)-rC_n(1)][C_n(r)-rC_n(1)]'dr,
\end{eqnarray*}
the last equality holds because $C_n(r)$ is a partial sum. Hence 
$$\|n(\bar\beta_{\sub,n}-\bar\alpha_{\sub, n})'(\Xi\widehat{V}_{\sub,n}\Xi')^{-1} (\bar\beta_{\sub,n}-\bar\alpha_{\sub, n})\|$$ can be expressed as a continuous functional of $C_n(r)$. The result then follows from the continuous mapping theorem.
\end{proof}

\begin{proof}[\textbf{Proof of Corollary~\ref{overid:test}}]
It follows directly from Theorem~\ref{thm:3}.
\end{proof}

\subsection{Auxiliary Lemmas}
\begin{lem}[Robbins-Siegmund]
\label{lem:Robbins Siegmund}
Suppose that $Z_n, A_n, C_n$, and $D_n$ are finite, non-negative random variables, adapted to the filtration $\{ \mathcal{F}_n \}_{n =0}^\infty$, which satisfy
\begin{equation*}
    \mathbb E [Z_{n+1}|\mathcal{F}_n] \le (1 + A_n) Z_n + C_n - D_n.
\end{equation*}
Then, on the event $\{\sum_{n=1}^\infty A_n < \infty,\ \sum_{n=1}^\infty C_n < \infty\}$, we have 
\begin{equation*}
    \sum_{n=1}^\infty D_n <\infty \quad  \text{ and } \quad Z_n \to Z
\end{equation*}
almost surely, where $Z$ denotes the limiting random variable $\lim_{n\to\infty} Z_n < \infty$.
\end{lem}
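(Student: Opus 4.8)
The plan is to reduce the almost-supermartingale inequality to a genuine non-negative supermartingale, to which the martingale convergence theorem applies, and then to transfer the conclusions back to $Z_n$ and $D_n$. First I would introduce the predictable discount factor $\gamma_n := \prod_{k=0}^{n-1}(1+A_k)^{-1}$ with $\gamma_0 = 1$, which is $\mathcal{F}_{n-1}$-measurable because each $A_k$ is $\mathcal{F}_k$-measurable. Setting $Z'_n := \gamma_n Z_n$, $C'_n := \gamma_{n+1}C_n$ and $D'_n := \gamma_{n+1}D_n$ and using the identity $\gamma_{n+1}(1+A_n) = \gamma_n$, the hypothesis becomes the cleaner recursion $\mathbb{E}[Z'_{n+1}\mid\mathcal{F}_n] \le Z'_n + C'_n - D'_n$, in which the multiplicative factor has been absorbed. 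Since $A_k \ge 0$ forces $0 < \gamma_n \le 1$, we have the pointwise domination $C'_n \le C_n$ and $D'_n \le D_n$, which I would use repeatedly.

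Next I would form the candidate supermartingale $Y_n := Z'_n + \sum_{k=0}^{n-1}(D'_k - C'_k)$; the recursion yields $\mathbb{E}[Y_{n+1}\mid\mathcal{F}_n] \le Y_n$, so $Y_n$ is a supermartingale. The difficulty — and the step I expect to be the main obstacle — is that $Y_n$ is only bounded below by $-\sum_{k=0}^{n-1}C'_k$ rather than by a constant, and moreover $\sum_n A_n$ and $\sum_n C_n$ converge only on the event $E := \{\sum_n A_n < \infty,\ \sum_n C_n < \infty\}$, not almost surely, so Doob's theorem cannot be applied pathwise. I would resolve both issues simultaneously by localization: for each $a > 0$ define the stopping time $\nu_a := \inf\{n : \sum_{k=0}^{n}C'_k > a\}$, which is indeed a stopping time since $\sum_{k=0}^{n}C'_k$ is $\mathcal{F}_n$-measurable. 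On $\{n \le \nu_a\}$ one has $\sum_{k=0}^{(n\wedge\nu_a)-1}C'_k \le a$, so the stopped process satisfies $Y_{n\wedge\nu_a} \ge -a$ because $Z'_n$ and $\sum_{k}D'_k$ are non-negative. Hence $Y_{n\wedge\nu_a} + a$ is a non-negative supermartingale and converges almost surely, so $Y_n$ itself converges almost surely on $\{\nu_a = \infty\} = \{\sum_{k}C'_k \le a\}$; letting $a \to \infty$ gives convergence of $Y_n$ on $\{\sum_{k}C'_k < \infty\}$, which by the domination $C'_k \le C_k$ contains $E$.

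Finally I would unwind the substitution on $E$. There $\sum_n A_n < \infty$, so the non-increasing sequence $\gamma_n$ decreases to a strictly positive limit $\gamma_\infty \in (0,1]$, while $\sum_n C_n < \infty$ makes the partial sums $\sum_{k=0}^{n-1}C'_k$ convergent; subtracting this convergent series from the convergent $Y_n$ shows that $M_n := Z'_n + \sum_{k=0}^{n-1}D'_k$ converges almost surely. As $M_n$ is convergent and dominates the nondecreasing sequence $\sum_{k=0}^{n-1}D'_k$, the latter is bounded, giving $\sum_{k}D'_k < \infty$ and hence convergence of $Z'_n = M_n - \sum_{k=0}^{n-1}D'_k$; dividing by $\gamma_n \to \gamma_\infty > 0$ yields convergence of $Z_n$, and the bound $D'_k \ge \gamma_\infty D_k$ upgrades $\sum_{k}D'_k < \infty$ to $\sum_{k}D_k < \infty$, all on $E$. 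The only remaining points are the measurability bookkeeping that makes $\gamma_n$ predictable and $\nu_a$ a stopping time, together with the monotone passage $a \to \infty$, both of which are routine once the localization above is in place.
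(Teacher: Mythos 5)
Your proof is correct, and it is the classical Robbins--Siegmund argument: discount by the predictable factor $\prod_{k<n}(1+A_k)^{-1}$ to absorb the multiplicative term, compensate to obtain a supermartingale, and localize with the stopping time $\nu_a$ to handle the fact that the lower bound and the convergence of $\sum_n A_n$, $\sum_n C_n$ hold only on the event $E$. The paper itself offers no proof --- it simply cites Lemma 5.2.2 of Benveniste, M\'etivier and Priouret --- and the cited proof follows the same discount-and-localize strategy, so you have essentially reconstructed the standard reference argument in self-contained form. The only point worth flagging is integrability: the lemma as stated assumes the variables are merely finite and non-negative, so to invoke the a.s.\ convergence of the non-negative supermartingale $Y_{n\wedge\nu_a}+a$ in full generality you should either work with generalized (conditional-expectation-in-$[0,\infty]$) supermartingales or add a further localization on $\{Z_0\le b\}$ and let $b\to\infty$; this is routine but deserves a sentence.
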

\begin{proof}
See Lemma 5.2.2 in \citet[p. 344]{benveniste2012adaptive}.
\end{proof}

Lemma \ref{lem:recursive bound} is akin to Theorem 24 in \citet[pp. 246--247]{benveniste2012adaptive}, but 
we present a self-contained proof for the sake of completeness.

\begin{lem} 
\label{lem:recursive bound}
Assume that $\gamma_n = \gamma_0 n^{-a}$ for $a \in (1/2, 1)$ and $\gamma_0 > 0$. Let $a_n$ be non-negative numbers such that 
\begin{equation*}
    a_{n+1} \le (1-\alpha \gamma_n) a_n + C \gamma_n^ {q+1}
\end{equation*}
for all sufficiently large $n$ and some positive numbers $\alpha, C$ and $q \ge 0$. Then, $a_n = O(\gamma_n^q)$ holds.
\end{lem}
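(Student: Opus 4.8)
The plan is to reduce the inequality to a scalar contraction after the change of variables $b_n := a_n/\gamma_n^q$, and then exploit that $\sum_n \gamma_n = \infty$ to pin down the boundedness of $b_n$; undoing the substitution then gives $a_n = O(\gamma_n^q)$.

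First I would divide the recursion by $\gamma_{n+1}^q$ to obtain
$$b_{n+1} \le \left(\frac{\gamma_n}{\gamma_{n+1}}\right)^q (1-\alpha\gamma_n)\, b_n + C\left(\frac{\gamma_n}{\gamma_{n+1}}\right)^q \gamma_n.$$
Since $\gamma_n/\gamma_{n+1} = (1+1/n)^a = 1 + a/n + O(n^{-2})$, raising to the power $q$ gives $(\gamma_n/\gamma_{n+1})^q = 1 + aq/n + O(n^{-2})$. The key quantitative input is that $a<1$ forces $n^{-1} = o(\gamma_n)$ (indeed $n^{-1}/\gamma_n = \gamma_0^{-1} n^{-(1-a)}\to 0$) and likewise $n^{-2} = o(\gamma_n)$; consequently the factor multiplying $b_n$ is $1 - \alpha\gamma_n + aq/n + O(n^{-2}) = 1 - \alpha\gamma_n + o(\gamma_n)$, which is bounded above by $1 - \tfrac{\alpha}{2}\gamma_n$ for all large $n$, while the additive term is bounded by $C'\gamma_n$ for some $C'>0$. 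This transforms the recursion into the clean form
$$b_{n+1} \le \left(1 - \tfrac{\alpha}{2}\gamma_n\right) b_n + C'\gamma_n,$$
valid for $n$ large, which is exactly the $q=0$ instance of the lemma.

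It then remains to show this scalar recursion keeps $b_n$ bounded. I would set the fixed point $M := 2C'/\alpha$ and verify the one-step inequality $b_{n+1} - M \le (1-\tfrac{\alpha}{2}\gamma_n)(b_n - M)$, obtained by subtracting $M$ from both sides and using $C' - \tfrac{\alpha}{2}M = 0$. Because $1 - \tfrac{\alpha}{2}\gamma_n \in (0,1)$ for large $n$, this shows that whenever $b_n \le M$ we also have $b_{n+1}\le M$, so once the sequence falls to or below $M$ it remains there; and while $b_n > M$ the inequality iterates to $b_{n+1} - M \le \prod_{k=n_0}^n(1-\tfrac{\alpha}{2}\gamma_k)\,(b_{n_0}-M)$, which tends to $0$ since $\sum_n\gamma_n = \infty$ (as $a\le 1$) forces the product to vanish. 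Either way $\sup_{n\ge n_0} b_n \le \max(b_{n_0}, M) < \infty$, and undoing the substitution yields $a_n = b_n\gamma_n^q = O(\gamma_n^q)$.

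I expect the main obstacle to be the bookkeeping in the first step: one must confirm that the spurious positive drift $aq/n$ introduced by the ratio $(\gamma_n/\gamma_{n+1})^q$ is genuinely dominated by the contraction $\alpha\gamma_n$. This domination is precisely where the hypothesis $a<1$ (rather than $a=1$) is essential — at $a=1$ the terms $aq/n$ and $\alpha\gamma_n$ are of the same order $n^{-1}$, so the clean contraction $1 - \tfrac{\alpha}{2}\gamma_n$ need not hold and one would instead need $\alpha$ to exceed $aq$ for the argument to go through.
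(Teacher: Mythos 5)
Your proof is correct and follows essentially the same route as the paper's: both rescale via $b_n = a_n/\gamma_n^q$, use $a<1$ to absorb the $(\gamma_n/\gamma_{n+1})^q = 1+O(1/n)$ factor into the contraction $1-\alpha\gamma_n$ (since $1/n = o(\gamma_n)$), and then bound the resulting $q=0$ recursion. The only difference is the final step, where you close with a direct fixed-point/invariance argument around $M=2C'/\alpha$ (showing the positive part of $b_n-M$ is non-increasing), which is arguably cleaner than the paper's proof by contradiction via a crossing index $n_0$ with $\delta_{n_0}\le 2C/\alpha<\delta_{n_0+1}$; both are valid.
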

\begin{proof}

Let $\delta_n := a_n / \gamma_n^q$ and we show $\limsup_{n\to \infty} \delta_n < \infty$ by way of contradiction. Since $\gamma_n - \gamma_{n+1} = o(\gamma_n^ 2)$, for a sufficiently small $\varepsilon_1 > 0$, $\gamma_{n+1} \ge \gamma_n - \varepsilon_1 \gamma_n^ 2$ for all large $n$. Replacing $a_n$ with $\gamma_n^q \delta_n$ and using $(\gamma_n / \gamma_{n+1})^q \le (1-\varepsilon_1 \gamma_n)^{-q}\le 1+\varepsilon_2 \gamma_n$ for an arbitrarily small $\varepsilon_2 = O(\varepsilon_1)$, we have 
\begin{equation*}
    \delta_{n+1} \le (1-\alpha \gamma_n) \delta_n + C\gamma_n
\end{equation*}
for all sufficiently large $n$ after some relabeling of $\alpha$ and $C$. This implies that
\begin{equation}
    \label{eq:lemma-3}
    \delta_{n+1} - \delta_n \le   \gamma_n (-\alpha\delta_n + C).
\end{equation}
Assume to the contrary $\limsup_n \delta_n = \infty$. Choose $N$ such that $\gamma_N < 1/\alpha$. If $\delta_n > 2C/\alpha$ holds for all $n \ge N$, it leads to a contradiction because $\lim_{n \to \infty}\delta_n  = \sum_{n\ge N} (\delta_{n+1} - \delta_n) +\delta_N \le \sum_{n\ge N} \gamma_n(-\alpha \delta_n + C) +\delta_N\le - C \sum_{n\ge N} \gamma_n +\delta_N =-\infty$. Thus, we can choose $N' \ge N$ such that $\delta_{N'} \le 2C/\alpha$. On the other hand, $\limsup_n \delta_n =\infty > 2C/\alpha$ precludes the possibility that $\delta_n \le 2C/\alpha$ for all $n\ge N'$. Thus, there must exist $n_0 \ge N'$ such that $\delta_{n_0} \le 2C/\alpha < \delta_{n_0+1}$. Due to the fact that $\delta_{n_0+1} - \delta_{n_0} > 0$ and \eqref{eq:lemma-3}, it must hold $\delta_{n_0} < C /\alpha$. However, this leads to a contradiction since $\delta_{n_0+1} - \delta_{n_0}\le \gamma_n(-\alpha \delta_n + C) \le C \gamma_{n_0} < C/\alpha$ by the choice of $N$, but $\delta_{n_0+1} > 2C/\alpha$ and $\delta_{n_0} < C/\alpha$. Therefore, $\limsup_{n\to\infty} \delta_n \le 2C/\alpha < \infty$ under \eqref{eq:lemma-3}.

\end{proof}

\begin{lem}
\label{lem:Assumptions 2.3, 2.4, 2.5-(a) from PJ (1992)}
Let Assumptions~\ref{A:sample}--\ref{A:moment:consistency} hold. Assume that $W_i$ obeys \eqref{algo:beta:W:2sls}. Then, the following holds.

\begin{enumerate}[leftmargin=0.7cm]

    \item [(a)] $\sup_{i \in \mathbb N} \mathbb E[\|\xi_i\| ^2 \mid \mathcal F_{i-1}] <\infty$ almost surely.
    \item [(b)] $\sup_{i \in \mathbb N} \mathbb E [\|\xi_i\|^ 2 1\{\|\xi_i\| > C\}\mid \mathcal F_{i-1}] = o_{a.s.}(1)$ as $C \to \infty$.
    \item [(c)] $\lim_{i\to\infty} \mathbb E[ \xi_i\xi_i' \mid \mathcal F_{i-1}] = (G'WG)^{-1}G'W \Omega W G(G'WG)^{-1}$ almost surely for $\Omega = \var(g_i(\beta_*))$. 

\end{enumerate}
\end{lem}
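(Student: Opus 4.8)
The plan is to reduce all three statements to two facts that hold almost surely: the strong consistency $\beta_{i-1}\to 0$ from Lemma~\ref{lem:strong consistency of beta_n} (recall we have set $\beta_*=0$, so $H_i=g_i(\beta_*)$ with $\mathbb E[H_i]=0$), and the convergence of the $\mathcal F_{i-1}$-measurable coefficient matrix
\[
M_{i-1}:=(\Phi_{i-1}' W_{i-1}\Phi_{i-1})^{\dagger}\Phi_{i-1}'W_{i-1}\longrightarrow (G'WG)^{-1}G'W,
\]
which follows from the SLLN ($\Phi_{i-1}\to G$, $W_{i-1}\to W$) together with continuity of matrix inversion near the full-rank limit $G'WG$ guaranteed by Assumption~\ref{A:eigen}. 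I would write $\xi_i=M_{i-1}(\tilde G_i\beta_{i-1}+H_i)$ and exploit that $(\tilde G_i,H_i)$ is independent of $\mathcal F_{i-1}$ while $M_{i-1}$ and $\beta_{i-1}$ are $\mathcal F_{i-1}$-measurable, so each conditional expectation becomes an unconditional integral over a fresh copy of $(\tilde G_i,H_i)$ with $(M_{i-1},\beta_{i-1})$ frozen.

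For part (a), conditional independence and Assumption~\ref{A:moment:consistency} give
\[
\mathbb E[\|\xi_i\|^2\mid\mathcal F_{i-1}]\le \|M_{i-1}\|^2\bigl(\mathbb E\|\tilde G_i\|^2\,\|\beta_{i-1}\|^2+\mathbb E\|H_i\|^2\bigr);
\]
since $\|M_{i-1}\|$ and $\|\beta_{i-1}\|$ are convergent, hence almost surely bounded, sequences and the two moments are finite constants, the supremum over $i$ is finite almost surely. For part (c), I would expand $\mathbb E[\xi_i\xi_i'\mid\mathcal F_{i-1}]=M_{i-1}\,\mathbb E[(\tilde G_i b+H_i)(\tilde G_i b+H_i)']\big|_{b=\beta_{i-1}}\,M_{i-1}'$; the three terms carrying $b=\beta_{i-1}$ are bounded by $\mathbb E\|\tilde G_i\|^2\|\beta_{i-1}\|^2$ and $\sqrt{\mathbb E\|\tilde G_i\|^2\,\mathbb E\|H_i\|^2}\,\|\beta_{i-1}\|$ via Cauchy--Schwarz, both of which tend to $0$ almost surely, while the surviving term is $\mathbb E[H_iH_i']=\var(g_i(\beta_*))=\Omega$. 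Letting $i\to\infty$ and using $M_{i-1}\to(G'WG)^{-1}G'W$ yields the sandwich limit $(G'WG)^{-1}G'W\Omega WG(G'WG)^{-1}$.

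The main obstacle is part (b), where the truncation $1\{\|\xi_i\|>C\}$ must be controlled uniformly in $i$ even though the conditional law of $\xi_i$ varies with $(M_{i-1},\beta_{i-1})$. My plan is to argue pathwise: on the almost sure event that $(M_{i-1},\beta_{i-1})$ converges, the whole sequence lies in a sample-dependent compact set, so there is a finite constant $\bar B$ with $\|\xi_i\|\le \bar B(\|\tilde G_i\|+\|H_i\|)=:\zeta$ for every $i$, where $\zeta$ has a fixed, square-integrable distribution not depending on $i$ by Assumption~\ref{A:moment:consistency}. After freezing $(M_{i-1},\beta_{i-1})$ one has $\{\|\xi_i\|>C\}\subseteq\{\zeta>C\}$ pointwise in the fresh randomness, so conditioning gives
\[
\sup_i\,\mathbb E[\|\xi_i\|^2 1\{\|\xi_i\|>C\}\mid\mathcal F_{i-1}]\le \mathbb E[\zeta^2 1\{\zeta>C\}]\xrightarrow[C\to\infty]{}0
\]
by integrability of $\zeta^2$. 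The delicate point to get right is that $\bar B$ is $\mathcal F_\infty$-measurable rather than $\mathcal F_{i-1}$-measurable, so the domination must be invoked after conditioning, separately on each fixed sample path; the finitely many early indices, before the sequence enters the compact set, contribute individually finite terms that do not affect the supremum.
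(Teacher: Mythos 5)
Your proposal is correct, and parts (a) and (c) follow essentially the same route as the paper: bound $\|\xi_i\|$ by the $\mathcal F_{i-1}$-measurable factor times $\|\tilde G_i\|\|\beta_{i-1}\|+\|H_i\|$, use $\sup_i\max\{\|W_{i-1}\|,\|(\Phi_{i-1}'W_{i-1}\Phi_{i-1})^{\dagger}\|,\|\beta_{i-1}\|\}<\infty$ a.s.\ from the SLLN and Lemma~\ref{lem:strong consistency of beta_n}, and for (c) let the $\beta_{i-1}$-carrying terms vanish so that only $M_{i-1}\mathbb E[H_iH_i']M_{i-1}'$ survives.

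Where you genuinely diverge is part (b). The paper inserts the $\mathcal F_{i-1}$-measurable stopping-time event $\{T_R\ge i\}$ with $T_R=\min\{\tau_R,\sigma_R\}$, splits the conditional expectation accordingly, sends $C\to\infty$ to kill the on-event term (which reduces to an unconditional tail expectation of the iid quantity at level $C/R^2$), and is left with a residual $1\{T_R<\infty\}\sup_i\mathbb E[\|\xi_i\|^2\mid\mathcal F_{i-1}]$ that is annihilated by letting $R\to\infty$ — an iterated-limit argument. You instead observe that $\mathbb E[\|\xi_i\|^2 1\{\|\xi_i\|>C\}\mid\mathcal F_{i-1}]$ is, by independence of $(\tilde G_i,H_i)$ from $\mathcal F_{i-1}$, a deterministic function $\Phi_C(b_i)$ of the $\mathcal F_{i-1}$-measurable scale $b_i=\|M_{i-1}\|\max\{\|\beta_{i-1}\|,1\}$, with $\Phi_C(t)=t^2\,\mathbb E\bigl[(\|\tilde G\|+\|H\|)^2 1\{\|\tilde G\|+\|H\|>C/t\}\bigr]$ nondecreasing in $t$ and tending to $0$ as $C\to\infty$ for each fixed $t$; hence $\sup_i\Phi_C(b_i(\omega))\le\Phi_C(\bar B(\omega))\to 0$ pathwise on the full-measure set where $\bar B(\omega)=\sup_i b_i(\omega)<\infty$. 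This avoids stopping times and the double limit entirely. You are right to flag the one delicate point yourself: the envelope $\bar B$ is only $\mathcal F_\infty$-measurable, so it must not be substituted inside the conditional expectation as if it were a constant; it can only enter \emph{after} the conditional expectation has been evaluated as the monotone function $\Phi_C$ of the $\mathcal F_{i-1}$-measurable $b_i$. With the steps ordered as you describe, the argument is rigorous, and arguably cleaner than the paper's; the paper's stopping-time device has the separate advantage of being reusable in the later proofs (Theorems~\ref{thm:2}--\ref{thm:4}), where genuinely $\mathcal F_{i-1}$-measurable truncation events are indispensable.
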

\begin{proof} We maintain the assumption that $\beta_* = 0$. 
 
\noindent (a) Recalling that $\|(\Phi_{i-1}' W_{i-1} \Phi_{i-1})^{\dagger} \Phi_{i-1}' W_{i-1}^{1/2} \| = \|(\Phi_{i-1}' W_{i-1} \Phi_{i-1})^{\dagger}\|^{1/2}$, we see $\|\xi_i\|^ 2 \le  \|W_{i-1}\| \|(\Phi_{i-1}' W_{i-1} \Phi_{i-1})^{\dagger}\|(\|{\tilde G_i}\| \|\beta_{i-1}\| + \|H_i\|)^ 2$. Using $(a+b)^ 2 \le 2(a^ 2 + b^ 2)$, we obtain
\begin{equation*}
    \mathbb E[\|\xi_i\|^ 2 \mid \mathcal F_{i-1}] \lesssim\|W_{i-1}\| \|(\Phi_{i-1}' W_{i-1} \Phi_{i-1})^{\dagger}\| (\mathbb E[\|{\tilde G_i}\|^ 2] \|\beta_{i-1}\|^ 2 + \mathbb E[\|H_i\|]^ 2) = O_{a.s.}(1)
\end{equation*}
where we used $\sup_{i\ge 1}\max\{ \|W_{i-1}\|, \|(\Phi_{i-1}' W_{i-1} \Phi_{i-1})^{\dagger}\|, \|\beta_{i-1}\|  \} < \infty$ implied by the SLLN and Lemma \ref{lem:strong consistency of beta_n}. 

\noindent (b) 
It is not proven in the same manner as Theorem 2 in \cite{polyak1992acceleration}, because it incorporates a stopping time $T_R := \min\{\tau_R, \sigma_R\}$ to address the randomness in $(\Phi_{i-1}' W_{i-1} \Phi_{i-1})^{\dagger}$, $W_{i-1}$, and $\beta_{i-1}$. Let $R > 1$.

Since $\|(\Phi_{i-1}' W_{i-1} \Phi_{i-1})^{\dagger}\|\le R$, $\|W_{i-1}\| \le R$, and $\|\beta_{i-1}\| \le R$ on the event $\{ T_R \ge i \} \in \mathcal{F}_{i-1}$, we have
\begin{align*}
    &\mathbb E [\|\xi_i\|^ 2 1\{\|\xi_i\| > C\}\mid \mathcal F_{i-1}]\\
    =&\ 1\{ T_R \ge i \}\mathbb E [\|\xi_i\|^ 2 1\{\|\xi_i\| > C\}\mid \mathcal F_{i-1}] + 1\{ T < i \}\mathbb E [\|\xi_i\|^ 2 1\{\|\xi_i\| > C\}\mid \mathcal F_{i-1}] \\
    \le &\ 1\{ T_R \ge i \} \mathbb E [R^4(\|\tilde{G}_i\| + \|H_i\|)^2 1\{\|\tilde{G}_i\| + \|H_i\| > C/R^2 \}\mid \mathcal F_{i-1}] \\
    & \quad + 1\{ T_R < \infty \} \sup_{i\ge 1}\mathbb E [\|\xi_i\|^ 2 1\{\|\xi_i\| > C\}\mid \mathcal F_{i-1}] \\
    \le &\ R^4\mathbb E [(\|\tilde{G}_i\| + \|H_i\|)^2 1\{\|\tilde{G}_i\| + \|H_i\| > C/R^2 \}] + 1\{ T_R < \infty \} \sup_{i\ge 1}\mathbb E [\|\xi_i\|^ 2 1\{\|\xi_i\| > C\}\mid \mathcal F_{i-1}]
\end{align*}
where $\sup_{i\in\mathbb N}\mathbb E[\|\xi_i\|^ 2 \mid \mathcal F_{i-1}] <\infty$ comes from part (a). Note that the rightmost side does not depend on $i$ anymore. As such, it serves as a uniform bound of the leftmost side across all $i$ for any given $R > 1$. Taking limit superior as $C\to \infty$, it follows
\begin{equation*}
    \limsup_{C\to\infty}\sup_{i\in\mathbb N} \mathbb E[\|\xi_i\|^ 2 1\{\|\xi_i\| > C\} \mid \mathcal F_{i-1}]  \le 1\{T_R <\infty\}\sup_{i\in\mathbb N}\mathbb E[\|\xi_i\|^ 2 \mid \mathcal F_{i-1}].
\end{equation*}
Finally, the right-hand side can be made 0 by increasing $R$ since $1\{T_R < \infty\} = 0$ for all sufficiently large $R$ with probability 1. We conclude that
\begin{equation*}
    \lim_{C\to\infty}\sup_{i\in\mathbb N} \mathbb E[\|\xi_i\|^ 2 1\{\|\xi_i\| > C\} \mid \mathcal F_{i-1}] = 0   
\end{equation*}
almost surely. 

\noindent (c) Observe that
\begin{align*}
        \xi_i\xi_i' &= (\Phi_{i-1}' W_{i-1} \Phi_{i-1})^{\dagger}\Phi_{i-1}'W_{i-1}(\tilde G_i \beta_{i-1}\beta_{i-1}' \tilde G_i' + \tilde G_i \beta_{i-1} H_i' + H_i \beta_{i-1}'\tilde G_i' + H_i H_i')\\
        & \quad \times W_{i-1}\Phi_{i-1} (\Phi_{i-1}' W_{i-1} \Phi_{i-1})^{\dagger}.
\end{align*}
By the strong consistency of $\beta_{n}$ established in Lemma \ref{lem:strong consistency of beta_n}, we have 
\begin{align*}
    \mathbb E[ \xi_i\xi_i' \mid \mathcal F_{i-1}] &= (\Phi_{i-1}' W_{i-1} \Phi_{i-1})^{\dagger} \Phi_{i-1}' W_{i-1}\mathbb E[H_i H_i'] W_{i-1}\Phi_{i-1} (\Phi_{i-1}' W_{i-1} \Phi_{i-1})^{\dagger} + o_{a.s.}(1) \\
    & \to (G'WG)^{-1}G'W \Omega W G (G'WG)^{-1}.
\end{align*}
as $i \to \infty$ as a result of the SLLN applied to $W_{i-1}$ and $\Phi_{i-1}$.
\end{proof}

\begin{lem}
\label{lem:uniform approximation by beta_n^1}
Let Assumptions~\ref{A:sample}--\ref{A:moment:consistency} hold.
Define $\beta_n^ 1$ as in \eqref{eq:definition of beta_n^1:2sls:sgmm}. Then it holds 
\begin{equation*}
    \sup_{1\le m\le n} \left\|\sum_{i=1}^m (\beta_i - \beta_i^ 1)\right\|= o_P(\sqrt n).    
\end{equation*}
\end{lem}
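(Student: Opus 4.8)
The plan is to reduce the uniform statement to an ordinary (non-uniform) summability estimate on a single residual term, via a linear recursion for the approximation error $\delta_i := \beta_i - \beta_i^1$. Writing $g_i(\beta_{i-1}) = G\beta_{i-1} + \tilde G_i\beta_{i-1} + H_i$ with $\tilde G_i = G_i - G$, and abbreviating $M_{i-1} := (\Phi_{i-1}' W_{i-1} \Phi_{i-1})^{\dagger}\Phi_{i-1}' W_{i-1} G$, the update \eqref{algo:beta:2sls} becomes $\beta_i = (I - \gamma_i M_{i-1})\beta_{i-1} - \gamma_i\xi_i$, whereas by \eqref{eq:definition of beta_n^1:2sls:sgmm} we have $\beta_i^1 = (1-\gamma_i)\beta_{i-1}^1 - \gamma_i\xi_i$. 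Subtracting and inserting $\pm(1-\gamma_i)\beta_{i-1}$ yields, since $\delta_0 = 0$,
\[
\delta_i = (1-\gamma_i)\delta_{i-1} + \gamma_i r_i, \qquad r_i := (I - M_{i-1})\beta_{i-1}.
\]
Solving this scalar-coefficient recursion gives $\delta_i = \sum_{j=1}^i \gamma_j r_j \prod_{k=j+1}^i(1-\gamma_k)$, and exchanging the order of summation produces the key identity $\sum_{i=1}^m \delta_i = \sum_{j=1}^m \alpha_j^m r_j$, with $\alpha_j^m := \gamma_j\sum_{i=j}^m \prod_{k=j+1}^i(1-\gamma_k)$ being precisely the weights appearing in the proof of Theorem~\ref{thm:2}.

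Because $|\alpha_j^m|\lesssim 1$ uniformly in $j\le m$ by Lemma~1-(ii) of \cite{polyak1992acceleration}, this identity collapses the supremum into a plain sum:
\[
\sup_{1\le m\le n}\left\|\sum_{i=1}^m\delta_i\right\| \le \sup_{1\le m\le n}\sum_{j=1}^m|\alpha_j^m|\,\|r_j\| \lesssim \sum_{j=1}^n\|r_j\|,
\]
so it remains to show $\sum_{j=1}^n\|r_j\| = o_P(\sqrt n)$. For the residual I would use that, whenever $\Phi_{i-1}'W_{i-1}\Phi_{i-1}$ is invertible, $I - M_{i-1} = (\Phi_{i-1}'W_{i-1}\Phi_{i-1})^{-1}\Phi_{i-1}'W_{i-1}(\Phi_{i-1}-G)$, combined with the norm identity $\|(\Phi_{i-1}'W_{i-1}\Phi_{i-1})^{\dagger}\Phi_{i-1}'W_{i-1}^{1/2}\| = \|(\Phi_{i-1}'W_{i-1}\Phi_{i-1})^{\dagger}\|^{1/2}$ already exploited in the proof of Lemma~\ref{lem:strong consistency of beta_n}. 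This delivers the pointwise bound $\|r_i\| \le \|(\Phi_{i-1}'W_{i-1}\Phi_{i-1})^{\dagger}\|^{1/2}\|W_{i-1}\|^{1/2}\|\Phi_{i-1}-G\|\,\|\beta_{i-1}\|$.

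To tame the random operator-norm factors I would localize on the stopping time $T_R := \min\{\sigma_R,\rho_R\}$ from \eqref{eq:tau_R:stopping-time}, on which $\|W_{i-1}\|\le R$ and $\|(\Phi_{i-1}'W_{i-1}\Phi_{i-1})^{\dagger}\|\le R$, so that $\|r_j\|\,1\{T_R\ge j\}\lesssim \|\Phi_{j-1}-G\|\,\|\beta_{j-1}\|\,1\{T_R\ge j\}$ with an $R$-dependent constant. Applying Cauchy--Schwarz together with $\mathbb{E}\|\Phi_{j-1}-G\|^2 = O(1/j)$ and the local $L^2$-rate $\mathbb{E}[\|\beta_{j-1}\|^2 1\{T_R\ge j\}] = O(\gamma_j)$ established in Part~1 of the proof of Theorem~\ref{thm:1},
\[
\sum_{j=1}^n\mathbb{E}\big[\|r_j\|\,1\{T_R\ge j\}\big] \lesssim \sum_{j=1}^n j^{-1/2}\gamma_j^{1/2} \lesssim \sum_{j=1}^n j^{-(1+a)/2} = O\big(n^{(1-a)/2}\big) = o(\sqrt n),
\]
where $a>1/2$ is used in the final step. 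Markov's inequality then gives $\sum_{j=1}^n\|r_j\|\,1\{T_R\ge j\} = o_P(\sqrt n)$, and since $\mathbb{P}(T_R<\infty)\to 0$ as $R\to\infty$ (by the SLLN and LIL, as noted after \eqref{eq:tau_R:stopping-time}), the same localization-removal argument used for $I_{31}$ in the proof of Theorem~\ref{thm:2} upgrades this to $\sum_{j=1}^n\|r_j\| = o_P(\sqrt n)$, which finishes the proof.

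I expect the main obstacle to be the control of $r_j$, since $\beta_{j-1}$ carries no unconditional second-moment bound; the stopping-time localization is exactly what makes the Cauchy--Schwarz step legitimate, and one must verify that the localized $L^2$-rate of Part~1 of Theorem~\ref{thm:1} is genuinely available for $\mathbb{E}[\|\beta_{j-1}\|^2 1\{T_R\ge j\}]$. The algebraic identity for $\sum_i\delta_i$ and the uniform bound on $\alpha_j^m$ are the other load-bearing ingredients, but both are essentially inherited from the companion proofs of Theorems~\ref{thm:1} and \ref{thm:2}.
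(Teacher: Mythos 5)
Your proposal is correct and follows essentially the same route as the paper's proof: the same recursion $\delta_i=(1-\gamma_i)\delta_{i-1}+\gamma_i\kappa_i$, the same summation-by-parts identity with the weights $\alpha_j^m$, the same stopping-time localization on $T_R=\min\{\sigma_R,\rho_R\}$, and the same Cauchy--Schwarz step combining $\mathbb{E}\|\Phi_{j-1}-G\|^2=O(1/j)$ with the localized rate $\mathbb{E}[\|\beta_{j-1}\|^2 1\{T_R\ge j\}]=O(\gamma_j)$. The only differences are notational (your $r_i$ is the paper's $\kappa_i$), and your localization-removal step is exactly the paper's final Markov-inequality argument.
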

\begin{proof}
Let $\delta_i = \beta_i - \beta_i^ 1$. By construction, it holds that 
\begin{align*}
    \delta_i &= (1 - \gamma_i )\delta_{i-1} + \gamma_i\left[I- (\Phi_{i-1}' W_{i-1} \Phi_{i-1})^{\dagger} \Phi_{i-1}' W_{i-1} G\right] \beta_{i-1}  \\
    &=  (1 - \gamma_i )\delta_{i-1} + \gamma_i \kappa_i
\end{align*}
where $\kappa_i := \left[I- (\Phi_{i-1}' W_{i-1} \Phi_{i-1})^{\dagger} \Phi_{i-1}' W_{i-1} G\right] \beta_{i-1}$. This readily implies that 
\begin{equation*}
    \sum_{i=1}^ m \delta_i = \sum_{i=1}^ m \alpha_{i}^ m \kappa_i. 
\end{equation*}
where $\alpha_j^ n = \gamma_j \sum_{i=j}^ n \prod_{k=j+1}^ i (1 - \gamma_k)$. Since $|\alpha_j^ n| \lesssim 1$ uniformly for all $j\le n$ (see Lemma 1-(ii) in \cite{polyak1992acceleration}), it is sufficient to show that 
\begin{align*}
    \sup_{1\le m\le n} \left\|\sum_{i=1}^ m  \alpha_i^ m \kappa_i  \right\|  \lesssim \ U_n := \sum_{i=1}^ n \|\kappa_i\| = o_P(\sqrt n) 
\end{align*}
Let $R > 0$ be arbitrary and $T_R = \min \{ \sigma_R, \rho_R \}$. We first bound $ \mathbb E[U_n 1\{T_R \ge n\}]$, and then argue that $ \mathbb E[U_n 1\{T_R <  n\}]$ is relatively small. Recall that on $\{T_R \ge i\}$, $\|W_{i-1}\|\le R$ and $\|(\Phi_{i-1}' W_{i-1} \Phi_{i-1})^{\dagger}\|\le R$. Note also that since $\|\Phi_{i-1} - G\|\le R \eta_{i} \to 0$, the generalized inverse $(\Phi_{i-1}' W_{i-1} \Phi_{i-1})^{\dagger}$ indeed becomes the inverse $(\Phi_{i-1}' W_{i-1} \Phi_{i-1})^{-1}$ for sufficiently large $i$. Thus, on $\{ T_R \ge i \}$, we have that 
\begin{equation*}
    \|\kappa_i\| = \|(\Phi_{i-1}' W_{i-1} \Phi_{i-1})^{\dagger}\Phi_{i-1}' W_{i-1} (\Phi_{i-1} - G) \beta_{i-1}\| \lesssim \|\Phi_{i-1}-G\| \|\beta_{i-1}\|,    
\end{equation*}
for sufficiently large $i$, which implies
\begin{align*}
    \mathbb E[U_n 1\{T_R \ge n\}] &\lesssim \sum_{i=1}^ n \mathbb E [\|\Phi_{i-1} - G\| \|\beta_{i-1}\|1\{T_R \ge i\} ]\\
    &\le \sum_{i=1}^ n \mathbb E [\|\Phi_{i-1} - G\|^ 2]^{1/2} \mathbb E [\|\beta_{i-1}\|^ 21\{T_R \ge i\} ]^{1/2} \\
    &\lesssim  \sum_{i=1}^ n \frac {\gamma_{i}^{1/2}}{\sqrt {i}} \lesssim  \sum_{i=1}^ n i^{-1/2-a/2}
    \lesssim n^{1/2 - a/2} = o(\sqrt n)
\end{align*}
by Cauchy-Schwarz inequality and Part 1 of the proof of Theorem \ref{thm:1}. This shows $U_n 1\{T_R \ge n\} = o_P(\sqrt n)$ for any choice of $R > 0$. 
On the other hand, notice that $\sup_{n\in\mathbb N}\mathbb P(U_n \ne U_n1\{T_R \ge n\}) \le \sup_{n\in\mathbb N}\mathbb P(T_R < n) = \mathbb P(T_R < \infty) \to 0$ as  $R\to \infty$. Thus, for any $\varepsilon>0$, by choosing a sufficiently large $R > 0$, we have for all large $n$,
\begin{align*}
    \mathbb P(U_n /\sqrt n >\varepsilon) &\le \mathbb P(U_n \ne U_n 1\{T_R \ge n\}) + \mathbb P(U_n 1\{T_R \ge n\} /\sqrt n > \varepsilon) \\
    & \le \mathbb P(T_R < \infty) +  \varepsilon^{-1} \mathbb E[n^{-1/2} U_n 1\{T_R \ge n\}]  \\
    & < \frac \varepsilon 2 + \frac \varepsilon 2 = \varepsilon.
\end{align*}
This establishes $U_n = o_P(\sqrt n)$ and completes the proof.
\end{proof}


\bibliographystyle{chicago}

\bibliography{LLSS_bib}

\begin{thebibliography}{}

\bibitem[\protect\citeauthoryear{Agarwal, Negahban, and Wainwright}{Agarwal
  et~al.}{2010}]{agarwal2010fast}
Agarwal, A., S.~Negahban, and M.~J. Wainwright (2010).
\newblock Fast global convergence rates of gradient methods for
  high-dimensional statistical recovery.
\newblock {\em Advances in Neural Information Processing Systems\/}~{\em 23}.

\bibitem[\protect\citeauthoryear{Anastasiou, Balasubramanian, and
  Erdogdu}{Anastasiou et~al.}{2019}]{pmlr-v99-anastasiou19a}
Anastasiou, A., K.~Balasubramanian, and M.~A. Erdogdu (2019).
\newblock Normal approximation for stochastic gradient descent via
  non-asymptotic rates of martingale {CLT}.
\newblock In A.~Beygelzimer and D.~Hsu (Eds.), {\em Proceedings of the
  Thirty-Second Conference on Learning Theory}, Volume~99 of {\em Proceedings
  of Machine Learning Research}, pp.\  115--137.

\bibitem[\protect\citeauthoryear{Angrist and Evans}{Angrist and
  Evans}{1998}]{angrist1998children}
Angrist, J.~D. and W.~N. Evans (1998).
\newblock Children and their parents' labor supply: Evidence from exogenous
  variation in family size.
\newblock {\em The American Economic Review\/}~{\em 88\/}(3), 450--477.

\bibitem[\protect\citeauthoryear{Angrist and Krueger}{Angrist and
  Krueger}{1991}]{angrist1991does}
Angrist, J.~D. and A.~B. Krueger (1991).
\newblock Does compulsory school attendance affect schooling and earnings?
\newblock {\em The Quarterly Journal of Economics\/}~{\em 106\/}(4), 979--1014.

\bibitem[\protect\citeauthoryear{Benveniste, M{\'e}tivier, and
  Priouret}{Benveniste et~al.}{2012}]{benveniste2012adaptive}
Benveniste, A., M.~M{\'e}tivier, and P.~Priouret (2012).
\newblock {\em Adaptive algorithms and stochastic approximations}, Volume~22.
\newblock Springer Science \& Business Media.

\bibitem[\protect\citeauthoryear{Bottou, Curtis, and Nocedal}{Bottou
  et~al.}{2018}]{bottou2018optimization}
Bottou, L., F.~E. Curtis, and J.~Nocedal (2018).
\newblock Optimization methods for large-scale machine learning.
\newblock {\em {SIAM} Review\/}~{\em 60\/}(2), 223--311.

\bibitem[\protect\citeauthoryear{Chee, Kim, and Toulis}{Chee
  et~al.}{2023}]{Chee-Kim-Toulis:23}
Chee, J., H.~Kim, and P.~Toulis (2023).
\newblock “plus/minus the learning rate”: Easy and scalable statistical
  inference with {SGD}.
\newblock In F.~Ruiz, J.~Dy, and J.-W. van~de Meent (Eds.), {\em Proceedings of
  The 26th International Conference on Artificial Intelligence and Statistics},
  Volume 206 of {\em Proceedings of Machine Learning Research}, pp.\
  2285--2309. PMLR.

\bibitem[\protect\citeauthoryear{Chen, Lai, Li, and Zhang}{Chen
  et~al.}{2021}]{ChenLai2021}
Chen, X., Z.~Lai, H.~Li, and Y.~Zhang (2021).
\newblock Online statistical inference for stochastic optimization via
  {Kiefer-Wolfowitz} methods.
\newblock arXiv preprint, \url{http://arxiv.org/abs/2102.03389}.

\bibitem[\protect\citeauthoryear{Chen, Lee, Tong, and Zhang}{Chen
  et~al.}{2020}]{chen2020statistical}
Chen, X., J.~D. Lee, X.~T. Tong, and Y.~Zhang (2020).
\newblock Statistical inference for model parameters in stochastic gradient
  descent.
\newblock {\em Annals of Statistics\/}~{\em 48\/}(1), 251--273.

\bibitem[\protect\citeauthoryear{Chen, Liao, and Sun}{Chen
  et~al.}{2014}]{Chen:Liao:Sun:2014}
Chen, X., Z.~Liao, and Y.~Sun (2014).
\newblock Sieve inference on possibly misspecified semi-nonparametric time
  series models.
\newblock {\em Journal of Econometrics\/}~{\em 178}, 639--658.

\bibitem[\protect\citeauthoryear{Chen and White}{Chen and
  White}{1998}]{Chen:White:1998:JET}
Chen, X. and H.~White (1998).
\newblock Nonparametric adaptive learning with feedback.
\newblock {\em Journal of Economic Theory\/}~{\em 82\/}(1), 190--222.

\bibitem[\protect\citeauthoryear{Chen and White}{Chen and
  White}{2002}]{chen2002asymptotic}
Chen, X. and H.~White (2002).
\newblock Asymptotic properties of some projection-based {Robbins-Monro}
  procedures in a {Hilbert} space.
\newblock {\em Studies in Nonlinear Dynamics \& Econometrics\/}~{\em 6\/}(1).

\bibitem[\protect\citeauthoryear{Della~Vecchia and Basu}{Della~Vecchia and
  Basu}{2023}]{della2023online}
Della~Vecchia, R. and D.~Basu (2023).
\newblock Online instrumental variable regression: Regret analysis and bandit
  feedback.
\newblock {\em {arXiv} preprint {arXiv}:2302.09357\/}.

\bibitem[\protect\citeauthoryear{Fang, Xu, and Yang}{Fang
  et~al.}{2018}]{fang2018online}
Fang, Y., J.~Xu, and L.~Yang (2018).
\newblock Online bootstrap confidence intervals for the stochastic gradient
  descent estimator.
\newblock {\em Journal of Machine Learning Research\/}~{\em 19\/}(1), 1--21.

\bibitem[\protect\citeauthoryear{Gupta and Seo}{Gupta and
  Seo}{2023}]{gupta2021robust}
Gupta, A. and M.~H. Seo (2023).
\newblock Robust inference on infinite and growing dimensional time-series
  regression.
\newblock {\em Econometrica\/}~{\em 91\/}(4), 1333--1361.

\bibitem[\protect\citeauthoryear{Hall and Heyde}{Hall and
  Heyde}{1980}]{Hall-Heyde}
Hall, P. and C.~C. Heyde (1980).
\newblock {\em Martingale Limit Theory and Its Application}.
\newblock Academic Press, Boston,.

\bibitem[\protect\citeauthoryear{Hansen}{Hansen}{1982}]{GMM}
Hansen, P. (1982).
\newblock Large sample properties of generalized method of moments estimators.
\newblock {\em Econometrica\/}~{\em 50}, 1029--1054.

\bibitem[\protect\citeauthoryear{Kiefer, Vogelsang, and Bunzel}{Kiefer
  et~al.}{2000}]{kiefer2000simple}
Kiefer, N.~M., T.~J. Vogelsang, and H.~Bunzel (2000).
\newblock Simple robust testing of regression hypotheses.
\newblock {\em Econometrica\/}~{\em 68\/}(3), 695--714.

\bibitem[\protect\citeauthoryear{Kuan and White}{Kuan and
  White}{1994}]{kuanwhite1994er}
Kuan, C.-M. and H.~White (1994).
\newblock Artificial neural networks: an econometric perspective.
\newblock {\em Econometric Reviews\/}~{\em 13\/}(1), 1--91.

\bibitem[\protect\citeauthoryear{Lazarus, Lewis, Stock, and Watson}{Lazarus
  et~al.}{2018}]{lazarus2018har}
Lazarus, E., D.~J. Lewis, J.~H. Stock, and M.~W. Watson (2018).
\newblock {HAR} inference: Recommendations for practice.
\newblock {\em Journal of Business \& Economic Statistics\/}~{\em 36\/}(4),
  541--559.

\bibitem[\protect\citeauthoryear{Lee, Liao, Seo, and Shin}{Lee
  et~al.}{2022a}]{lee2021fast}
Lee, S., Y.~Liao, M.~H. Seo, and Y.~Shin (2022a).
\newblock Fast and robust online inference with stochastic gradient descent via
  random scaling.
\newblock {\em Proceedings of the AAAI Conference on Artificial
  Intelligence\/}~{\em 36\/}(7), 7381--7389.

\bibitem[\protect\citeauthoryear{Lee, Liao, Seo, and Shin}{Lee
  et~al.}{2022b}]{lee2022fastqr}
Lee, S., Y.~Liao, M.~H. Seo, and Y.~Shin (2022b).
\newblock Fast inference for quantile regression with millions of observations.
\newblock {\em arXiv preprint arXiv:2209.14502\/}.

\bibitem[\protect\citeauthoryear{Lee, Lee, and Won}{Lee
  et~al.}{2022}]{Won:lab:2022}
Lee, Y., S.~Lee, and J.-H. Won (2022).
\newblock Statistical inference with implicit {SGD}: proximal {Robbins-Monro}
  vs. {Polyak-Ruppert}.
\newblock In K.~Chaudhuri, S.~Jegelka, L.~Song, C.~Szepesvari, G.~Niu, and
  S.~Sabato (Eds.), {\em Proceedings of the 39th International Conference on
  Machine Learning}, Volume 162 of {\em Proceedings of Machine Learning
  Research}, pp.\  12423--12454. PMLR.

\bibitem[\protect\citeauthoryear{Li, Liang, Chang, and Zhang}{Li
  et~al.}{2022}]{li2021statistical}
Li, X., J.~Liang, X.~Chang, and Z.~Zhang (2022).
\newblock Statistical estimation and online inference via local {SGD}.
\newblock In P.-L. Loh and M.~Raginsky (Eds.), {\em Proceedings of Thirty Fifth
  Conference on Learning Theory}, Volume 178 of {\em Proceedings of Machine
  Learning Research}, pp.\  1613--1661. PMLR.

\bibitem[\protect\citeauthoryear{Liang and Su}{Liang and
  Su}{2019}]{liang2017statistical}
Liang, T. and W.~J. Su (2019).
\newblock Statistical inference for the population landscape via
  moment-adjusted stochastic gradients.
\newblock {\em Journal of the Royal Statistical Society: Series B (Statistical
  Methodology)\/}~{\em 81\/}(2), 431--456.

\bibitem[\protect\citeauthoryear{Mou, Li, Wainwright, Bartlett, and Jordan}{Mou
  et~al.}{2020}]{pmlr-v125-mou20a}
Mou, W., C.~J. Li, M.~J. Wainwright, P.~L. Bartlett, and M.~I. Jordan (2020).
\newblock On linear stochastic approximation: Fine-grained {P}olyak-{R}uppert
  and non-asymptotic concentration.
\newblock In J.~Abernethy and S.~Agarwal (Eds.), {\em Proceedings of Thirty
  Third Conference on Learning Theory}, Volume 125 of {\em Proceedings of
  Machine Learning Research}, pp.\  2947--2997.

\bibitem[\protect\citeauthoryear{Pastorello, Patilea, and Renault}{Pastorello
  et~al.}{2003}]{Pastorello:Patilea:Renault}
Pastorello, S., V.~Patilea, and E.~Renault (2003).
\newblock Iterative and recursive estimation in structural nonadaptive models.
\newblock {\em Journal of Business \& Economic Statistics\/}~{\em 21\/}(4),
  449--482.

\bibitem[\protect\citeauthoryear{{Polyak}}{{Polyak}}{1990}]{Polyak1990}
{Polyak}, B.~T. (1990).
\newblock {New method of stochastic approximation type}.
\newblock {\em {Automation and Remote Control}\/}~{\em 51\/}(7), 937--946.

\bibitem[\protect\citeauthoryear{Polyak and Juditsky}{Polyak and
  Juditsky}{1992}]{polyak1992acceleration}
Polyak, B.~T. and A.~B. Juditsky (1992).
\newblock Acceleration of stochastic approximation by averaging.
\newblock {\em {SIAM} Journal on Control and Optimization\/}~{\em 30\/}(4),
  838--855.

\bibitem[\protect\citeauthoryear{Robbins and Monro}{Robbins and
  Monro}{1951}]{Robbins-Monro-1951}
Robbins, H. and S.~Monro (1951).
\newblock {A Stochastic Approximation Method}.
\newblock {\em The Annals of Mathematical Statistics\/}~{\em 22\/}(3), 400 --
  407.

\bibitem[\protect\citeauthoryear{Robbins and Siegmund}{Robbins and
  Siegmund}{1971}]{robbins1971convergence}
Robbins, H. and D.~Siegmund (1971).
\newblock A convergence theorem for non negative almost supermartingales and
  some applications.
\newblock In J.~S. Rustagi (Ed.), {\em Optimizing methods in statistics}, pp.\
  233--257. Academic Press.

\bibitem[\protect\citeauthoryear{Ruppert}{Ruppert}{1988}]{ruppert1988efficient}
Ruppert, D. (1988).
\newblock Efficient estimations from a slowly convergent {Robbins}--{Monro}
  process.
\newblock Technical Report 781, Cornell University Operations Research and
  Industrial Engineering.
\newblock available at
  \url{https://ecommons.cornell.edu/bitstream/handle/1813/8664/TR000781.pdf?sequence=1}.

\bibitem[\protect\citeauthoryear{Sun}{Sun}{2013}]{sun2013ej}
Sun, Y. (2013).
\newblock A heteroskedasticity and autocorrelation robust f test using an
  orthonormal series variance estimator.
\newblock {\em Econometrics Journal\/}~{\em 16}, 1--26.

\bibitem[\protect\citeauthoryear{Sun, Phillips, and Jin}{Sun
  et~al.}{2008}]{sun2008optimal}
Sun, Y., P.~C. Phillips, and S.~Jin (2008).
\newblock Optimal bandwidth selection in heteroskedasticity--autocorrelation
  robust testing.
\newblock {\em Econometrica\/}~{\em 76\/}(1), 175--194.

\bibitem[\protect\citeauthoryear{Toulis and Airoldi}{Toulis and
  Airoldi}{2017}]{Toulis2017}
Toulis, P. and E.~M. Airoldi (2017).
\newblock {Asymptotic and finite-sample properties of estimators based on
  stochastic gradients}.
\newblock {\em Annals of Statistics\/}~{\em 45\/}(4), 1694--1727.

\bibitem[\protect\citeauthoryear{Velasco and Robinson}{Velasco and
  Robinson}{2001}]{velasco2001edgeworth}
Velasco, C. and P.~M. Robinson (2001).
\newblock Edgeworth expansions for spectral density estimates and studentized
  sample mean.
\newblock {\em Econometric Theory\/}~{\em 17\/}(3), 497--539.

\bibitem[\protect\citeauthoryear{Venkatraman, Sun, Hebert, Bagnell, and
  Boots}{Venkatraman et~al.}{2016}]{venkatraman2016online}
Venkatraman, A., W.~Sun, M.~Hebert, J.~Bagnell, and B.~Boots (2016).
\newblock Online instrumental variable regression with applications to online
  linear system identification.
\newblock In {\em Proceedings of the {AAAI} Conference on {Artificial
  Intelligence}}, Volume~30.

\bibitem[\protect\citeauthoryear{White}{White}{1989}]{white1989jasa}
White, H. (1989).
\newblock Some asymptotic results for learning in single hidden layer
  feedforward network models.
\newblock {\em Journal of the American Statistical Association\/}~{\em
  84\/}(408), 1003--1013.

\bibitem[\protect\citeauthoryear{Zhu, Chen, and Wu}{Zhu
  et~al.}{2023}]{zhu2021online}
Zhu, W., X.~Chen, and W.~B. Wu (2023).
\newblock Online covariance matrix estimation in stochastic gradient descent.
\newblock {\em Journal of the American Statistical Association\/}~{\em
  118\/}(541), 393--404.

\bibitem[\protect\citeauthoryear{Zhu and Dong}{Zhu and Dong}{2021}]{Zhu:Dong}
Zhu, Y. and J.~Dong (2021).
\newblock On constructing confidence region for model parameters in stochastic
  gradient descent via batch means.
\newblock In {\em 2021 Winter Simulation Conference (WSC)}.

\end{thebibliography}


\end{document}